 \setlist{noitemsep, topsep=3pt, parsep=2pt}
\titlespacing*{\section}{0ex}{2ex minus .5ex}{1.5ex}
\titlespacing*{\subsection}{0ex}{1.5ex minus .3ex}{1.2ex}
\newcommand{\Rmnum}[1]{\expandafter\@slowromancap\romannumeral #1@}
\newcommand{\todo}[1]{\typeout{TODO: \the\inputlineno: #1}\textbf{[[[ #1 ]]]}}
\newtheoremstyle{exampstyle}
{4pt} 
{3pt} 
{\itshape} 
{} 
{\bfseries} 
{.} 
{.5em} 
{} 
\theoremstyle{exampstyle} 
\newtheorem{theorem}{Theorem}[section]
\newtheorem{lemma}[theorem]{Lemma}
\newtheorem{definition}[theorem]{Definition}
\newenvironment{remark}{{\bfseries \noindent {Remark}.}}{\par\smallskip}{\par\smallskip}
\begin{document}

\title{{\bf Contraction: a Unified Perspective of Correlation Decay and Zero-Freeness of 2-Spin Systems}}
\author{
Shuai Shao\thanks{Department of Computer Sciences, University of Wisconsin-Madison. Supported by NSF CCF-1714275}\\ 
{\tt  sh@cs.wisc.edu}
\and {Yuxin Sun}\thanksmark{1}\\
\tt {yxsun@cs.wisc.edu}}

\date{}
\clearpage
\maketitle
\thispagestyle{empty}
\begin{abstract}
We study complex zeros of the partition function of 2-spin systems, 
viewed as a multivariate polynomial in terms of the edge interaction parameters and the uniform external field.
We obtain new zero-free regions in which  all these parameters are complex-valued.
Crucially based on the zero-freeness, we show the existence of 
correlation decay in these complex regions.
As a consequence, 
we obtain an FPTAS 
for 
computing 
the partition function of 
2-spin systems on graphs of bounded degree 
for these parameter settings.
We introduce the \emph{contraction} property 
as a \emph{unified} sufficient condition to devise FPTAS via either Weitz's algorithm or Barvinok's algorithm. 
Our main technical contribution is 
a very simple but general approach to
extend any \emph{real} parameter  of which the 
2-spin system exhibits correlation decay
to its \emph{complex}  neighborhood where the partition function is zero-free and correlation decay still exists.
This result formally establishes the inherent connection between two distinct notions of phase transition for 2-spin systems: 
the existence of 
correlation decay and the zero-freeness 
of the partition function
via a {unified} perspective, {contraction}.
\end{abstract}

\newpage
\clearpage
\setcounter{page}{1}

\section{Introduction}
Spin systems originated from statistical physics to model interactions between
neighbors on graphs. 
In this paper, we focus on 2-state spin (2-spin) systems.
Such a system  is specified by two edge interaction parameters $\beta$ and $\gamma$, and a uniform external field $\lambda$.
An instance is a graph $G=(V, E)$.
A configuration $\sigma$ is a mapping $\sigma: V \rightarrow \{+, -\}$ which assigns one of the two spins $+$ and $-$ to each vertex in $V$. 
The weight $w(\sigma)$ of a configuration $\sigma$ is given by 
$$w(\sigma)=\beta^{m_+(\sigma)}\gamma^{m_{-}(\sigma)}\lambda^{n_+(\sigma)},$$
where $m_+(\sigma)$ denotes the number of $(+, +)$ edges under the configuration $\sigma$, 
$m_-(\sigma)$ denotes the number of $(-, -)$ edges, 
and $n_+({\sigma})$ denotes the number of vertices assigned to spin $+$. 
The partition function $Z_G(\beta, \gamma, \lambda)$ of the system parameterized by $(\beta, \gamma, \lambda)$
is defined to be the sum of weights over all configurations, i.e.,
$$Z_G(\beta, \gamma, \lambda)=\sum_{\sigma: V \rightarrow \{+, -\}}w(\sigma).$$
It is a sum-of-product computation.
If a 2-spin system is restricted to graphs of degree bounded by $\Delta$,
we say such a system is $\Delta$-bounded.

In classical statistical mechanics the parameters $(\beta, \gamma, \lambda)$ are usually non-negative real numbers,
and such 2-spin systems are divided into \emph{ferromagnetic} case ($\beta\gamma>1$) and \emph{antiferromagnetic} case ($\beta\gamma <1$).
The case $\beta\gamma =1$ is degenerate. 
When $(\beta, \gamma, \lambda)$ are  non-negative numbers and they are not all zero,
the partition function can be viewed as the normalizing factor of the Gibbs distribution, 
which is the distribution where a configuration $\sigma$ is drawn with probability 
${\rm Pr}_{G; \beta, \gamma, \lambda}(\sigma)=\frac{w(\sigma)}{Z_G(\beta, \gamma, \lambda)}$.
 However, it is meaningful to consider parameters of complex values. 
 By analyzing the location of \emph{complex} zeros of the partition function, 
 the phenomenon of \emph{phase transitions} was defined by physicists. 
 One of the first and also the best known result is the Lee-Yang theorem \cite{li-yang} for the \emph{Ising model}, a special case of 2-spin systems.
 This result was later extended to more general models by several people \cite{asano70, rue71, sg73, new74, ls81}.
 In this paper, we view the partition function $Z_G(\beta, \gamma, \lambda)$
 as a \emph{multivariate} polynomial over these three \emph{complex} parameters $(\beta, \gamma, \lambda)$. We study the zeros of this polynomial and the relation to the approximation of the partition function.

Partition functions encode rich information about the macroscopic properties of 2-spin systems. 
They are not only  of significance in statistical physics, 
but also are well-studied in computer science.
Computing the partition function of 2-spin systems given an input graph $G$ can be viewed as the most basic case of  Counting Graph Homomorphisms (\#GH)  \cite{dyer-green-gh, bg-gh-nonnegative,ggjt-gh-real, cai-chen-lu-gh} and
Counting Constraint Satisfaction Problems (\#CSP) \cite{dyer-ann-jerrum-weighted-csp, clx-boolean-csp-complex, bulatov-ccsp, dyer-richerby, cai-chen-csp-complex}, which are two very well studied frameworks for counting problems.
Many natural combinatorial problems can be formulated as 2-spin systems. 
For example, when $\beta=\gamma$, such a system is the famous \emph{Ising model}.
And when $\beta = 0$ and $\gamma=1$,  $Z_G(0, 1, \lambda)$ is the independence polynomial of the graph $G$
(also known as the \emph{hard-core model} in statistical physics); it counts the number of independent 
sets of the graph $G$ when $\lambda =1$.
\subsection*{Related work}
For exact computation of $Z_G(\beta, \gamma, \lambda)$, 
the problem is proved to be \#P-hard for all complex valued parameters
but a few very restricted trivial settings \cite{bar82, cai-chen-lu-gh, clx-boolean-csp-complex}.
So the main focus is to approximate $Z_G(\beta, \gamma, \lambda)$. 
This is an area of active research, and many inspiring algorithms are developed.
The pioneering algorithm developed by Jerrum and Sinclair 
 gives a \emph{fully polynomial-time randomized approximation scheme} (FPRAS) for the {ferromagnetic}  Ising model \cite{js}. 
 This FPRAS is based on the
\emph{Markov Chain Monte Carlo} (MCMC) method  which devises approximation counting algorithms via random sampling.
Later, it was extended to general ferromagnetic 2-spin systems \cite{gjp03, llz14}. 
 The MCMC method can only handle non-negative parameters as it is based on probabilistic sampling.



 The \emph{correlation decay} method developed by Weitz \cite{weitz}  was originally used to devise \emph{deterministic fully polynomial-time approximation schemes} (FPTAS) for the hardcore model up to the uniqueness threshold.
 It turns out to be a very powerful tool for devising FPTAS for antiferromagnetic 2-spin systems \cite{zhangbai, lly12, lly13,  sst14}.
Combining with hardness results \cite{ss14, gsv16}, an exact threshold of computational complexity transition of antiferromagnetic 2-spin systems 
is identified and the only remaining case is at the critical point.
On the other hand, for ferromagnetic 2-spin systems, limited results \cite{zhangbai, gl18} have been obtained via the correlation decay method.
Although correlation decay is usually analyzed in 2-spin systems of non-negative parameters,
it can be adapted to complex parameters. 
An FPTAS was obtained for the hard-core model in the  Shearer's region (a disc in the complex plane) via correlation decay in \cite{hsv18}.

Recently, a new method developed by Barvinok \cite{bar16}, and extended by Patel and Regts \cite{pr17} is the \emph{Taylor polynomial interpolation} method
that turns complex zero-free regions of the partition function into 
FPTAS of corresponding complex parameters. 
Suppose that the partition function $Z_G(\beta, \gamma, \lambda) $ has no zero in a complex region containing an easy computing point, e.g., $\lambda=0$. 
It turns out that, probably after a change of coordinates, 
 $\log Z_G(\beta, \gamma, \lambda)$ is well approximated
in a slightly smaller region by a low degree Taylor polynomials  which can be efficiently computed. 
This method connects the long-standing study of complex zeros  to algorithmic studies of the partition function of physical systems.
Motivated by this, more recently some  complex zero-free regions have been obtained for hard-core models \cite{bc18,pr19}, Ising models \cite{lss19a}, and general 2-spin systems \cite{gll19}. 


\subsection*{Our contribution}
In this paper, we obtain new zero-free regions of the partition function of 2-spin systems. 
Crucially based on the zero-freeness, we show the existence of 
correlation decay in these \emph{complex} regions.
As a consequence, 
we obtain an FPTAS 
for 
computing 
the partition function of 
bounded 2-spin systems
for these parameter settings.
Our result gives the first zero-free regions in which all three parameters $(\beta, \gamma, \lambda)$ are complex-valued and  new correlation decay results for bounded ferromagnetic 2-spin systems.
Our main technical contribution is a very simple but general approach to extend any real parameter  of which the bounded 2-spin system exhibits correlation decay
to its complex  neighborhood where the partition function is zero-free and correlation decay still exists.  
We show that for bounded 2-spin systems,  the   \emph{real contraction}\footnote{See Dedinition \ref{def:real-contract}. In many cases, the existence of correlation decay boils down to this property.} property that ensures correlation decay exists for certain real parameters directly implies the zero-freeness and the existence of correlation decay of corresponding complex neighborhoods.

We formally describe our main result. We use $\pmb\zeta \in \mathbb{C}^3$ to denote the parameter vector $(\beta, \gamma, \lambda)$.
Since the case $\beta=\gamma=0$ is trivial,  by symmetry we always assume $\gamma \neq 0$ in this paper.
\begin{theorem}
Fix $\Delta \in \mathbb{N}$. If $\pmb\zeta_0\in \mathbb{R}^3$ satisfies real contraction for $\Delta$, then there exists a $\delta>0$ such that for any $\pmb \zeta \in \mathbb{C}^3$ where $\|\pmb\zeta-\pmb\zeta_0\|_{\infty}<\delta$, we have 
\begin{itemize}
    \item $Z_{G}(\pmb\zeta)\neq 0$ for every graph $G$\footnote{This is true even if $G$ contains arbitrary number of vertices pinned by a \emph{feasible configuration} (Definition \ref{def:feasible-configuration}).} of degree at most $\Delta$;
    \item the $\Delta$-bounded 2-spin system specified by $\pmb\zeta$ 
    exhibits correlation decay.
\end{itemize}
As a consequence, there is an FPTAS for computing $Z_{G}(\pmb\zeta)$.
\end{theorem}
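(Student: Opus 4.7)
The plan is to transfer the real-parameter contraction at $\pmb\zeta_0$ to a small complex neighborhood via a uniform continuity argument, and then to read off zero-freeness, correlation decay, and the FPTAS as standard corollaries of complex contraction on the self-avoiding walk (SAW) tree.

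First, I would invoke Weitz's reduction: for any graph $G$ of maximum degree $\Delta$ and any vertex $v$, the marginal ratio $R_v^G$ coincides with $R_v^T$ on the SAW tree $T = T_{\mathrm{SAW}}(G,v)$, whose internal nodes have degree at most $\Delta$. The ratio at an internal node with $d \le \Delta-1$ children is computed bottom-up by the standard 2-spin tree recursion $F_{\pmb\zeta}(R_1,\ldots,R_d)$. The real contraction hypothesis supplies a potential $\Phi$ and an $F_{\pmb\zeta_0}$-invariant real interval $\mathcal{R}$ on which the Jacobian-sum bound $\sum_{i=1}^d |\partial_i(\Phi \circ F_{\pmb\zeta_0} \circ \Phi^{-1})| \le \kappa$ holds for some $\kappa < 1$. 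The task is to preserve both the invariance of the working domain and this Jacobian-sum bound under small complex perturbations of $\pmb\zeta$ and of the inputs $R_i$.

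Second, I would enlarge $\mathcal{R}$ to a compact complex domain $\mathcal{D} \subset \mathbb{C}$, e.g.\ a thin tube around $\mathcal{R}$, on which $\Phi$ extends holomorphically and which avoids the singular point of the telescoping expansion of $Z_G$. Joint continuity of $F_{\pmb\zeta}$ and its partials in $(\pmb\zeta, R_1,\ldots,R_d)$, combined with compactness of $\mathcal{D}^d$, yields a $\delta > 0$ such that for every $\pmb\zeta$ with $\|\pmb\zeta - \pmb\zeta_0\|_\infty < \delta$: (i) $F_{\pmb\zeta}(\mathcal{D}^d) \subset \mathcal{D}$; (ii) $\sum_i |\partial_i(\Phi \circ F_{\pmb\zeta} \circ \Phi^{-1})| \le \kappa' < 1$ uniformly on $\Phi(\mathcal{D})^d$; and (iii) every point of $\mathcal{D}$ is bounded away from the telescoping singularity by a positive constant, uniformly in $\pmb\zeta$. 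A Banach-type iteration on $T$ then shows that $R_v^T \in \mathcal{D}$ is well-defined, independent of the initial leaf values chosen inside $\mathcal{D}$, and uniformly separated from the singular set.

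The three conclusions then follow by standard arguments. For zero-freeness, I would telescope $Z_G(\pmb\zeta)$ along a vertex-pinning order: at each step $Z$ picks up a factor built from some $R \in \mathcal{D}$, which is non-zero by (iii), and the fully pinned base case is trivially non-zero, so $Z_G(\pmb\zeta) \ne 0$; the same argument accommodates extra pinnings by feasible partial configurations as mentioned in the footnote. For correlation decay, (ii) implies that changing boundary values at depth $\ell$ on the SAW tree perturbs $\Phi(R_v^T)$ by $O(\kappa'^\ell)$; truncating $T$ at depth $O(\log(n/\varepsilon))$ and running Weitz's computation then gives the FPTAS, while the zero-free neighborhood equivalently feeds Barvinok's Taylor interpolation scheme, the two algorithmic faces of the unified contraction perspective. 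The main obstacle is the complex enlargement in step two: the potential $\Phi$ used to certify real contraction is typically nonlinear and may carry poles or branch cuts in $\mathbb{C}$, so $\mathcal{D}$ must be chosen to simultaneously allow a holomorphic extension of $\Phi$, preserve the Jacobian-sum bound, keep the recursion image inside $\mathcal{D}$, and maintain positive distance from the singular point — this simultaneous matching is the core of the paper's advertised "very simple but general" approach.
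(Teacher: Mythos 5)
Your overall architecture is the same as the paper's (complexify the potential, transfer real contraction to a complex contraction property, then get zero-freeness by induction on the SAW tree and correlation decay/FPTAS from the contracted recursion), but there is a genuine gap at the decisive step. You assert that ``joint continuity of $F_{\pmb\zeta}$ and its partials, combined with compactness of $\mathcal{D}^d$'' yields (i) $F_{\pmb\zeta}(\mathcal{D}^d)\subset\mathcal{D}$ for a thin complex tube $\mathcal{D}$ around the real invariant interval. This does not follow from continuity. The real interval $J$ has empty interior in $\mathbb{C}$, so for $\mathbf{x}$ in the tube of width $\varepsilon$ the best a continuity/Lipschitz estimate gives is that $F_{\pmb\zeta}(\mathbf{x})$ lies within roughly $L\varepsilon+M\delta$ of $J$, where $L$ bounds the gradient of the \emph{raw} recursion in the $\mathbf{x}$-variables; since $L$ may well exceed $1$ (only the potential-composed map $F^{\varphi}$ is contracting), this perturbation can exceed the tube width no matter how small you take $\delta$, and shrinking $\varepsilon$ does not help because the offending term scales linearly in $\varepsilon$. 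The paper's proof of Theorem \ref{real-to-complex} resolves precisely this point: it proves invariance in the \emph{potential} coordinates, namely $F^{\widetilde\varphi}_{\mathbf{s}}(\mathcal{B}_{\delta},P^k)\subseteq P$ with $P=\overline{I_\varepsilon}$ convex, via the estimate $\bigl|F^{\widetilde\varphi}_{\mathbf{s}}(\pmb\zeta,\mathbf{x})-F^{\widetilde\varphi}_{\mathbf{s}}(\pmb\zeta_0,\mathbf{x}_0)\bigr|\le M\delta+(1-\eta)\varepsilon\le\varepsilon$, which hinges on the contraction constant $1-\eta$ from condition 2 of Definition \ref{def:real-contract} together with the quantitative choice $\delta\le\varepsilon\eta/M$; the invariant complex region is then $Q=\widetilde{\varphi}^{-1}(P)$, not a naive tube around $J$. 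You explicitly flag this ``simultaneous matching'' as the main obstacle but leave it unresolved, and since your conditions (i)--(iii) are exactly what the remainder of your argument consumes, the proof is incomplete at its core.

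Two smaller omissions: the holomorphic extension \emph{and invertibility} of the potential on a complex neighborhood is itself a step that needs proof (the paper's Lemma \ref{lem-invertible}, via unique analytic continuation plus the inverse function theorem), since the hypothesis only gives a real analytic $\varphi$ with nonvanishing derivative on $J$; and in the SAW tree the root may have $\Delta$ children, so zero-freeness needs the separate condition $-1\notin F_{\pmb\zeta,\mathbf{s}}(Q^k)$ for $\|\mathbf{s}\|_1=\Delta$, which your restriction to nodes with at most $\Delta-1$ children (and the claim that $\mathcal{D}$ merely avoids the singular point) does not cover.
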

This result formally establishes the inherent connection between two distinct notions of phase transition for bounded 2-spin systems: the existence of correlation decay and the zero-freeness of the partition function, via a unified perspective, \emph{contraction}.
The connection from  the existence of correlation decay of real parameters to the zero-freeness of corresponding complex neighborhoods 
was already observed for the hard-core model \cite{pr19} and the Ising model without external field \cite{lss19a}.
In this paper, we extend it to general 2-spin systems, and furthermore we establish the connection from the zero-freeness of complex neighborhoods back to the existence of correlation decay of such  complex regions.


Now, we give our zero-free regions. We first identify the sets of real parameters of which bounded 2-spin systems exhibit correlation decay.


\begin{definition}\label{Correaltion-decay-sets}
Fix integer $\Delta \geq 3$. We have the following four  sets where correlation decay exists.
\begin{enumerate}
    \item $\mathcal{S}^{\Delta}_1=\{\pmb \zeta\in \mathbb{R}^3\mid  \frac{\Delta-2}{\Delta}<\sqrt{\beta\gamma}<\frac{\Delta}{\Delta-2}, \beta, \gamma>0 \text{ and } \lambda \geq 0\}$,
    \item $\mathcal{S}^{\Delta}_2=\{ \pmb \zeta\in \mathbb{R}^3\mid \beta\gamma<1, \beta\geq 0, \gamma >0, \lambda \geq 0, \text{ and }  \pmb \zeta$ is up-to-$\Delta$ unique (see Definition \ref{def:uniqueness})\},
    \item $\mathcal{S}^{\Delta}_3=\{\pmb \zeta\in \mathbb{R}^3\mid  {\beta\gamma}>\frac{\Delta}{\Delta-2}, \beta, \gamma>0 \text{ and } 0\leq \lambda < \frac{\gamma}{t^{\Delta-1}[(\Delta-2)\beta\gamma-\Delta]}\}$ where $t=\max\{1, \beta\}$, and
    \item $\mathcal{S}^{\Delta}_4=\{\pmb \zeta\in \mathbb{R}^3\mid  {\beta\gamma}>\frac{\Delta}{\Delta-2}, \beta, \gamma>0 \text{ and } \lambda > \frac{(\Delta-2)\beta\gamma-\Delta}{\beta r^{\Delta-1}}\}$ where $r=\min\{1,  1/\gamma\}$.
\end{enumerate}
When context is clear, we omit the superscript $\Delta$.
\end{definition}
The set $\mathcal{S}^{\Delta}_1$ was given in \cite{zhangbai} and $\mathcal{S}^{\Delta}_2$ was given in \cite{lly13}.
 To our best knowledge,  $\mathcal{S}^{\Delta}_1$ and $\mathcal{S}^{\Delta}_2$  cover all \emph{non-negative} parameters of which \emph{bounded} 2-spin systems are known to exhibit correlation decay. 
 The sets $\mathcal{S}^{\Delta}_3$ and $\mathcal{S}^{\Delta}_4$ are obtained in this paper. 
 They give {new} correlation decay results and hence FPTAS for bounded ferromagnetic 2-spin systems\footnote{When $\beta<\gamma$ and $\lambda$ is sufficiently large, it is known that approximating the partition function of ferromagnetic 2-spin systems over general graphs is \#BIS-hard \cite{llz14}. 
Our result $\mathcal{S}^{\Delta}_4$ shows that there is an FPTAS for such a problem when restricted to graphs of bounded degree. When $\beta<1<\gamma$, the FPTAS obtained from $\mathcal{S}^{\Delta}_3$ is covered by \cite{gl18}.}.  

\begin{theorem}\label{the-intro-main}Fix integer $\Delta\geq 3$.
For every $\pmb \zeta_0\in \mathcal{S}^{\Delta}_i$ $(i \in [4])$, there exists a $\delta>0$ such that for any $\pmb \zeta \in \mathbb{C}^3$ where $\|\pmb\zeta-\pmb\zeta_0\|_{\infty}<\delta$, we have 
\begin{itemize}
    \item $Z_{G}(\pmb\zeta)\neq 0$ for every graph $G$  
     of degree at most $\Delta$; ($G$ may contain a feasible configuration.)
    \item the $\Delta$-bounded 2-spin system specified by $\pmb\zeta$ exhibits correlation decay.
\end{itemize}
Then via either Weitz's algorithm or Barvinok's algorithm, 
there is an FPTAS for computing $Z_{G}(\pmb\zeta)$.
\end{theorem}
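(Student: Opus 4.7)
The plan is to reduce Theorem~\ref{the-intro-main} to the preceding unified contraction theorem by verifying, for each $i \in [4]$, that every $\pmb\zeta_0 \in \mathcal{S}^{\Delta}_i$ satisfies the \emph{real contraction} property of Definition~\ref{def:real-contract}. Once this reduction is carried out, the preceding theorem immediately delivers all three conclusions: zero-freeness of $Z_G(\pmb\zeta)$ on bounded-degree graphs (even under feasible pinnings), correlation decay on the complex neighborhood, and therefore an FPTAS---either through Weitz's algorithm driven by the complex correlation decay, or via Barvinok's interpolation along a path from an easy evaluation point (such as $\lambda = 0$) through the zero-free region to $\pmb\zeta$.

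For each set, verifying real contraction amounts to showing that the self-avoiding walk tree recursion for the ratio $R = Z^+/Z^-$ is a contraction (after a suitable potential transformation $\varphi$) uniformly over all trees of maximum degree $\Delta$. For $\mathcal{S}^{\Delta}_1$, the contraction is precisely the one from Zhang--Bai's analysis: the inequality $(\Delta-2)/\Delta < \sqrt{\beta\gamma} < \Delta/(\Delta-2)$ is chosen exactly so that the derivative of the recursion at its symmetric fixed point is bounded strictly below~$1$. For $\mathcal{S}^{\Delta}_2$, real contraction follows from the Li--Lu--Yin potential method combined with the up-to-$\Delta$ uniqueness assumption. For the two new ferromagnetic sets $\mathcal{S}^{\Delta}_3$ and $\mathcal{S}^{\Delta}_4$, the plan is to exploit the fact that the external field is so extreme---very small in case~3, very large in case~4---that one spin dominates throughout the recursion: the ratio $R$ is trapped in a tiny interval near $0$ or near $\infty$, and on this interval the Jacobian of the $(\Delta-1)$-ary recursion is uniformly bounded away from~$1$.

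The main technical obstacle lies in establishing real contraction for the new sets $\mathcal{S}^{\Delta}_3$ and $\mathcal{S}^{\Delta}_4$, since ferromagnetic regimes are notoriously resistant to the correlation decay method. The explicit quantities $t^{\Delta-1}[(\Delta-2)\beta\gamma - \Delta]$ in case~3 and $[(\Delta-2)\beta\gamma - \Delta]/(\beta r^{\Delta-1})$ in case~4 should arise naturally by identifying an invariant interval for $R$ under the recursion (whose size is governed by the bounds $t = \max\{1,\beta\}$ or $r = \min\{1, 1/\gamma\}$) and then computing the recursion's Jacobian on this interval; the threshold on $\lambda$ is calibrated exactly so that this Jacobian stays strictly below~$1$. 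Once this contraction is pinned down, feeding $\pmb\zeta_0$ into the unified theorem closes the argument: the complex perturbation is controlled by the strict slack in the contraction constant, producing the promised $\delta$-neighborhood and thereby completing the proof of Theorem~\ref{the-intro-main}.
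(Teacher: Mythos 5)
Your proposal follows essentially the same route as the paper: verify real contraction for each $\mathcal{S}^{\Delta}_i$ (citing Zhang--Bai for $\mathcal{S}^{\Delta}_1$, Li--Lu--Yin's potential with up-to-$\Delta$ uniqueness for $\mathcal{S}^{\Delta}_2$, and for $\mathcal{S}^{\Delta}_3,\mathcal{S}^{\Delta}_4$ an invariant interval $[\lambda r^{\Delta-1},\lambda t^{\Delta-1}]$ on which the log-transformed recursion's gradient is calibrated below $1/(\Delta-1)$ exactly by the stated $\lambda$-thresholds), and then feed this into the real-to-complex contraction theorem to obtain zero-freeness, correlation decay, and the FPTAS, which is precisely the paper's Lemma~\ref{lem:c1c2} combined with Theorem~\ref{real-to-complex} and Lemmas~\ref{complex-to-zerofree}, \ref{lem:complex-to-correlation}. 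The only cosmetic difference is that for $\mathcal{S}^{\Delta}_4$ the zero-free strip does not contain $\lambda=0$, so the paper either invokes Weitz's algorithm (as your ``either/or'' already allows) or first swaps $\beta\leftrightarrow\gamma$ and replaces $\lambda$ by $1/\lambda$ to reduce to $\mathcal{S}^{\Delta}_3$ before applying Barvinok.
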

\begin{remark}
The choice of $\delta$ does not depend on the size of the graph, only on $\Delta$ and $\pmb \zeta_0$.
\end{remark}

\subsection*{Organization}
This paper is organized as follows. 
In Section \ref{sec2}, we briefly describe Weitz's algorithm \cite{weitz}. We introduce real contraction as a sufficient condition for the existence of correlation decay of real parameters, and we show sets $\mathcal{S}^{\Delta}_i (i\in [4])$ satisfy it.
In Section \ref{sec3}, we briefly describe Barvinok's algorithm \cite{bar16}. 
We introduce complex contraction as a generalization of real contraction, and 
we show that it gives a unified sufficient condition for both the zero-freeness of the partition function and the existence of correlation decay of complex parameters. 
Finally, in Section \ref{sec4}, we prove our main result that real contraction implies complex contraction.
This finishes the proof of Theorem \ref{the-intro-main}.
We use the following diagram (Figure \ref{fig:structure}) to summarize our approach to establish the connection between correlation decay and zero-freeness.
We expect it to be further explored for other models. 

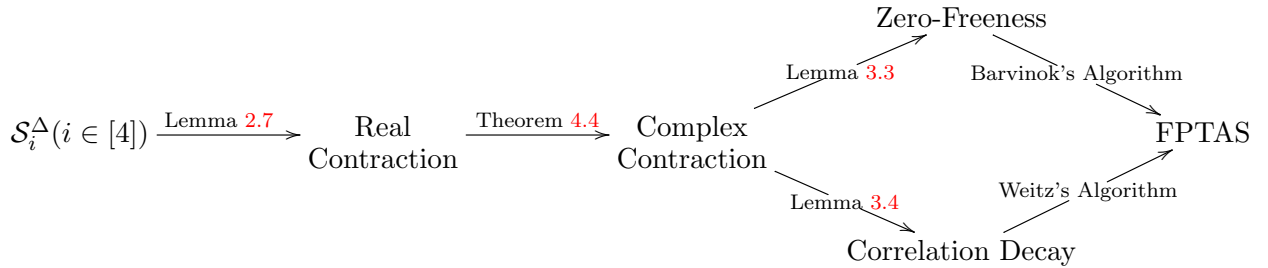
\begin{figure}[!htbp]
\vspace{-4ex}
$$\xymatrix{
 && && & \text{Zero-Freeness} \ar[dr]|-{\text{Barvinok's Algorithm}}  & \\
 \mathcal{S}^{\Delta}_i (i \in [4]) \ar[rr]^{\hspace{-2ex}\text{Lemma \ref{lem:c1c2}}}&& \txt{Real \\Contraction}\ar[rr]^{\text{Theorem \ref{real-to-complex}}} && \txt{{Complex}\\Contraction } \ar[ur]|-{\text{Lemma \ref{complex-to-zerofree}}}\ar[dr]|-{\text{Lemma \ref{lem:complex-to-correlation}}}  &  &  \text{FPTAS} \\
&& && & \text{Correlation Decay} \ar[ur]|-{\text{Weitz's Algorithm}}&
}$$
\vspace{-4.5ex}
\caption{The structure of our approach}
\label{fig:structure}
 \end{figure}

\subsection*{Independent work}
After a preliminary version \cite{v1} of this manuscript was posted, we learned that based on similar ideas, Liu simplified the proofs of \cite{pr19} and \cite{lss19a} and generalized them to antiferromagnetic Ising models ($\beta=\gamma<1$) in chapter 3 of his Ph.D. thesis \cite{liuthesis}, where similar zero-freeness results (a complex neighborhood of $\mathcal{S}^{\Delta}_2$ restricted to $\beta=\gamma$) were obtained. 
We mention that
by using the unique analytic continuation and the inverse function theorem, our main technical result (Theorem \ref{real-to-complex}) is generic; it does not rely on a particularly chosen potential function. 
Thus, in our approach we can work with \emph{any} existing potential function based arguments for correlation
decay even if the potential function does not have an explicit expression, for instance, the one used in \cite{lly13} when $\beta\neq \gamma$.
Furthermore, we mention also that based on the zero-freeness, we obtain new correlation decay result for  complex parameters (Lemma \ref{lem:complex-to-correlation}).
Note that Barvinok's algorithm requires an entire region in which the partition function is zero-free and there is an easy computing point.
However, our correlation decay result shows that one can always devise an FPTAS for these parameter settings via Weitz's algorithm, even if Barvinok's algorithm fails.



\section{Weitz's Algorithm}\label{sec2}

In this section, we describe Weitz's algorithm. 
We first consider 
positive parameters $\pmb \zeta  \in \mathbb{R}^3_{+}$.
An obvious but important fact about $\pmb \zeta$ being positive
is that $Z_G(\pmb \zeta)\neq 0$ for any graph $G$. 
This is true even if $G$ contains arbitrary number of  vertices pinned to spin $+$ or $-$.
Then, the  partition  function  can  be  viewed  as  the  normalizing  factor  of  the  Gibbs  distribution.
\subsection{Notations and definitions}
Let $\pmb \zeta \in \mathbb{R}^3_{+}$. We use $p_v(\pmb \zeta)$ to denote the marginal probability of $v$ being assigned to spin $+$ in the Gibbs distribution, i.e., $p_v(\pmb \zeta)=\frac{Z_{G,v}^+(\pmb \zeta)}{Z_G(\pmb \zeta)}$,
where $Z_{G,v}^+(\pmb \zeta)$ is the contribution to $Z_G(\pmb \zeta)$ 
over all configurations with  $v$ being assigned to spin $+$.
We know that $p_v$ is well-defined since $Z_G(\pmb \zeta)\neq 0$.

Let $\sigma_{\Lambda}\in\{0,1\}^\Lambda$ be a configuration of some subset $\Lambda\subseteq V$. We allow $\Lambda$ to be the empty set.
We call vertices in $\Lambda$ \emph{pinned} and other vertices \emph{free}.
We use $p_v^{\sigma_{\Lambda}}(\pmb \zeta)$ to denote the marginal probability of a free vertex $v$ ($v\notin\Lambda$) being assigned to spin $+$
conditioning on the configuration $\sigma_{\Lambda}$ of $\Lambda$, i.e., 
$p^{\sigma_{\Lambda}}_v(\pmb \zeta)=\frac{Z_{G,v}^{\sigma_{\Lambda}, +}(\pmb \zeta)}{Z_G^{\sigma_{\Lambda}}(\pmb \zeta)}$, where $Z_G^{\sigma_{\Lambda}}(\pmb \zeta)$ is the weight over all configurations where vertices in $\Lambda$ are pinned by the configuration $\sigma_{\Lambda}$, and $Z_{G, v}^{\sigma_{\Lambda}, +}(\pmb \zeta)$ is the contribution to $Z_G^{\sigma_{\Lambda}}(\pmb \zeta)$ with $v$ being assigned to spin $+$.
Correspondingly, we can define $Z_{G, v}^{\sigma_{\Lambda}, -}(\pmb \zeta)$.
Let $R_{G,v}^{\sigma_\Lambda}(\pmb \zeta):=\frac{Z_{G, v}^{\sigma_{\Lambda}, +}(\pmb \zeta)}{Z_{G, v}^{\sigma_{\Lambda}, -}(\pmb \zeta)}=\frac{p_v^{\sigma_\Lambda}(\pmb \zeta)}{1-p_v^{\sigma_\Lambda}(\pmb \zeta)}$ be the ratio between the two probabilities
that the free vertex $v$ is assigned to spin $+$ and $-$, while imposing some condition $\sigma_\Lambda.$
Since $Z_G(\pmb \zeta)\neq 0$ for any graph $G$ with arbitrary number of pinned vertices,
both $p_v^{\sigma_\Lambda}(\pmb \zeta)$ and $R_{G,v}^{\sigma_\Lambda}(\pmb \zeta)$ are well-defined. 
When context is clear, we write $p_v(\pmb \zeta)$ $p^{\sigma_{\Lambda}}_v(\pmb \zeta)$ and $R_{G,v}^{\sigma_\Lambda}(\pmb \zeta)$ as $p_v$, $p^{\sigma_{\Lambda}}_v$ and $R_{G,v}^{\sigma_\Lambda}$ for convenience.

Since computing the partition function of 2-spin systems is self-reducible, 
if one can compute $p_v$ for any vertex $v$, then the partition function can be computed via telescoping \cite{jvv86}. 
The goal of Weitz's algorithm is to estimate $p_v^{\sigma_\Lambda}$, which is equivalent to estimating $R_{G,v}^{\sigma_\Lambda}$.
For the case that the graph is a tree $T$,
$R_{T,v}^{\sigma_\Lambda}$ can be computed by recursion.
Suppose that a free vertex $v$ has $d$ children, and $s_1$ of them are pinned to  $+$, $s_2$ are pinned to  $-$, and $k$ are free $(s_1+s_2+k=d)$. 
We denote these $k$ free vertices by $v_i (i\in [k])$ and let $T_i$ be the corresponding subtree rooted at $v_i$. 
We use $\sigma^i_{\Lambda}$ to denote the configuration $\sigma_{\Lambda}$ restricted to $T_i$.
Since all subtrees are independent, it is easy to get the following recurrence relation,
\begin{equation*}
\begin{aligned}
R^{\sigma_\Lambda}_{T,v}=\frac{Z_{T,v}^{\sigma_\Lambda,+}(\pmb \zeta)}{Z_{T,v}^{\sigma_\Lambda,-}(\pmb \zeta)}=\frac{\lambda^{1+s_1}\beta^{s_1}\prod_{i=1}^{k}{\left(\beta Z_{T_i,v_i}^{\sigma_\Lambda^i,+}(\pmb \zeta)+Z_{T_i,v_i}^{\sigma_\Lambda^i,-}(\pmb \zeta)\right)}}{\lambda^{s_1}\gamma^{s_2}\prod_{i=1}^{k}{\left(Z_{T_i,v_i}^{\sigma_\Lambda^i,+}(\pmb \zeta)+\gamma Z_{T_i,v_i}^{\sigma_\Lambda^i,-}(\pmb \zeta)\right)}}
=\lambda\beta^{s_1}\gamma^{-s_2}\prod_{i=1}^{k}{\left(\frac{\beta R^{\sigma_\Lambda^i}_{T_i,v_i}+1}{R^{\sigma_\Lambda^i}_{T_i,v_i}+\gamma}\right)}.\\
\end{aligned}
\end{equation*}

\begin{definition}[Recursion function]
Let $\mathbf s=(s_1, s_2, k)\in \mathbb{N}^3$ {(}including 0{)}. A recursion function $F_{\mathbf s}$ for 2-spin systems is defined to be 
$$F_{\mathbf s}(\pmb \zeta, \mathbf{x}):=\lambda\beta^{s_1}\gamma^{-s_2}\prod_{i=1}^{k}{\left(\frac{\beta x_i+1}{x_i+\gamma}\right)},$$
where $\pmb \zeta=(\beta, \gamma, \lambda)\in \mathbb{C}\times(\mathbb{C}\backslash\{0\})\times\mathbb{C}$ and $\mathbf{x}=(x_1, \ldots, x_k)\in (\mathbb{C}\backslash\{-\gamma\})^k$.
We define $F_{\pmb \zeta, \mathbf s}(\mathbf{x}):=F_{\mathbf s}(\pmb \zeta, \mathbf{x})$ for fixed $\pmb \zeta$ with $\gamma\neq 0$, and $F_{\mathbf x, \mathbf s}(\pmb \zeta):=F_{\mathbf s}(\pmb \zeta, \mathbf{x})$ for fixed $\mathbf{x}$.
\end{definition}
\begin{remark}
Every recursion function is analytic on its domain. 
\end{remark}

For a general graph $G$, 
Weitz  reduced computing $R_{G,v}^{\sigma_\Lambda}$ 
to
that in a tree $T$, called the self-avoiding walk (SAW) tree, and Weitz's theorem states that $R_{G,v}^{\sigma_\Lambda}=R_{T,v}^{\sigma_\Lambda}$ \cite{weitz}. (See the appendix for more details.)
We want to generalize Weitz's theorem to complex parameters $\pmb \zeta \in \mathbb{C}^3$.
First, we need to make sure that $R_{G,v}^{\sigma_\Lambda}$ and $p_v^{\sigma_\Lambda}$ are
well-defined for vertex $v\notin\Lambda$. 
This requires that  $Z^{\sigma_\Lambda}_{G}(\pmb \zeta)\neq 0$ for any graph $G$ and any configuration $\sigma_\Lambda$.
Now, $p_v^{\sigma_\Lambda}$  no longer has a probabilistic meaning. It is just a ratio of two complex numbers.
However, one can easily observe that for some special parameters, 
there are trivial configurations such that
 $Z^{\sigma_\Lambda}_{G, v}(\pmb \zeta)=0$.
We will rule these cases out as they are \emph{infeasible}.

\begin{definition}[Feasible configuration]\label{def:feasible-configuration}
Let $\pmb \zeta \in \mathbb{C}^3$. Given a  graph $G=(V, E)$ of the 2-spin system specified by  $\pmb \zeta$, 
a configuration $\sigma_\Lambda$ on some vertices $\Lambda \subseteq V$  is feasible if 
\begin{itemize}
    \item  $\sigma_\Lambda$ does not assign any vertex in $G$ to spin $+$ if $\lambda=0$, and
    \item  $\sigma_\Lambda$ does not assign any two adjacent vertices in $G$ both to spin $+$ if $\beta=0$.
\end{itemize}
\end{definition}
\begin{remark}
Let  $\sigma_\Lambda$ be a feasible configuration.
If we further pin one vertex $v\notin \Lambda$ to spin $-$, 
and get the configuration $\sigma_{\Lambda'}$ on $\Lambda'=\Lambda\cup\{v\}$, 
then $\sigma_{\Lambda'}$ is still a feasible configuration. 
Thus, given $\pmb \zeta \in \mathbb{C}^3$, if $Z^{\sigma_\Lambda}_{G}(\pmb \zeta)\neq 0$ for any graph $G$ and any arbitrary feasible configuration $\sigma_\Lambda$ on $G$, then both  $p_v^{\sigma_\Lambda}$  and $R_{G,v}^{\sigma_\Lambda}$ are well-defined.
\end{remark}

Given $R_{G,v}^{\sigma_\Lambda}$ is well-defined for some $\pmb \zeta \in \mathbb{C}^3$, 
we can still compute it by recursion via SAW tree.
We first consider the case that $\lambda \neq 0$.
Let $\sigma_\Lambda$ be a feasible configuration.
It is easy to verify that the corresponding configuration on the SAW tree is also   feasible and Weitz's theorem still holds.
For the case that $\lambda=0$, it is obvious that $R_{G,v}^{\sigma_\Lambda}\equiv 0$ for any graph $G$, any free vertex $v$ and any feasible configuration $\sigma_\Lambda$.
This is equal to the value of recursion functions $F_{\mathbf{s}}(\pmb{\zeta}, \mathbf{x})$ at $\lambda=0$.
We agree that $R_{G,v}^{\sigma_\Lambda}$ can be computed by recursion functions when $\lambda=0$, although Weitz's theorem does not hold for this case.
For the case that $\beta =0$, we have $R_{G,v}^{\sigma_\Lambda}= 0$ if one of the children of $v$ is pinned to $+$. 
Then, we may view $v$ as it is pinned to $-$.
Thus, for $\beta=0$, we only consider recursion functions $F_{\mathbf{s}}$ where $s_1=0$.

\subsection{Correlation decay and real contraction}
The SAW tree may be exponentially large in size of $G$.
In order to get a polynomial time approximation algorithm, we may run the tree recursion at logarithmic depth and hence in polynomial time, and plug in some arbitrary values at the truncated boundary. 
We have the following notion of \emph{strong spatial mixing} (SSM) to bound the error caused by arbitrary guesses.
It was originally introduced for non-negative parameters.
Here, we extend it to complex parameters. 

\begin{definition}[Strong spatial mixing]\label{def:correlation-decay}
  A 2-spin system specified by $\pmb \zeta\in \mathbb{C}^3$ on a family $\mathcal G$ of graphs is said to exhibit strong spatial mixing if for any graph $G=(V,E)\in\mathcal{G}$,
  any $v\in V$, and any feasible configurations $\sigma_{\Lambda_1}\in\{0,1\}^{\Lambda_1}$ and $\tau_{\Lambda_2}\in\{0,1\}^{\Lambda_2}$ where $v \notin \Lambda_1 \cup \Lambda_2$, we have
  \begin{enumerate}
      \item $Z_{G}^{\sigma_{\Lambda_1}}(\pmb \zeta)\neq 0$ and $Z_{G}^{\tau_{\Lambda_2}}(\pmb \zeta)\neq 0$, and
      \item $\big|{p_v^{\sigma_{\Lambda_1}}-p_v^{\tau_{\Lambda_2}}}\big|\leq \exp(-\Omega(\mathrm{dist}(v,S)))$,
  \end{enumerate}
  where 
  $S\subseteq\Lambda_1\cup\Lambda_2$ is the subset on which $\sigma_{\Lambda_1}$ and $\tau_{\Lambda_2}$ differ\footnote{If a vertex $v$ is free in one configuration but pinned in the other, we say these two configurations differ at $v$.},
  and 
  $\mathrm{dist}_{G}(v, S)$ is the shortest distance from $v$ to any vertex in $S$.
\end{definition}
\begin{remark}
When $\pmb \zeta\in \mathbb{R}_{+}^3$, condition 1 is always satisfied. 
Condition 2 is a stronger form of SSM of real parameters (see Definition 5 of \cite{lly13}).
For real values, by monotonicity one can restrict to the case that $\Lambda_1=\Lambda_2$ (the two configurations are on the same set of vertices). 
Here, we allow $\Lambda_1\neq \Lambda_2$.
\end{remark}
In statistical physics, SSM is called correlation decay. If SSM holds,
 then the error caused by arbitrary boundary guesses at logarithmic depth of the SAW tree is polynomially small.
Hence, Weitz's algorithm gives an FPTAS.
A main technique that has been widely used to establish SSM
is the \emph{potential method} \cite{rstvy, lly12, lly13, sssy17, gl18}.
Instead of bounding the rate of decay of recursion functions directly, 
we use a potential function $\varphi(x)$ to map the original recursion to a new domain (See Figure \ref{fig:commutatve-diagram} for the commutative diagram).
	\begin{figure}[!hbtp]
\centering
		\includegraphics[scale=0.4]{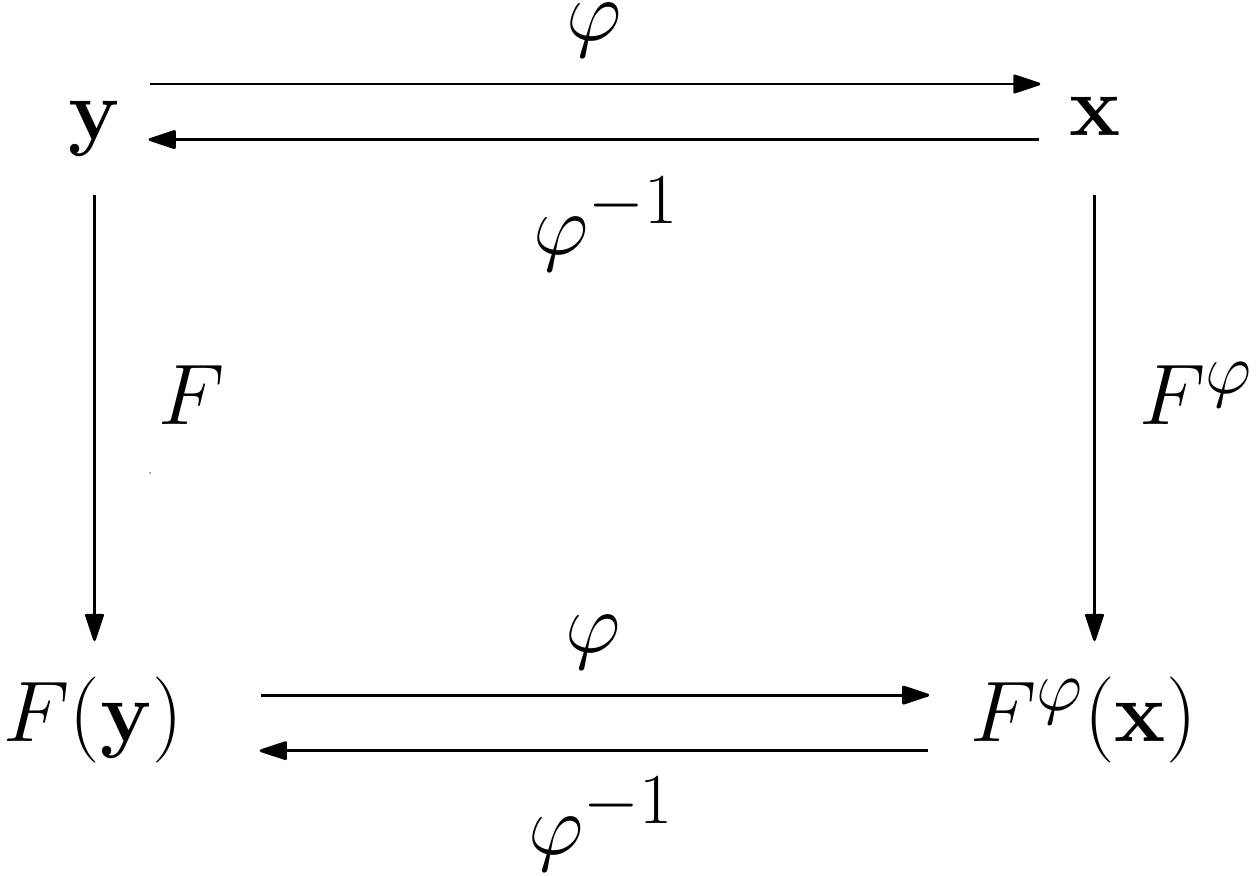}
		\vspace{-1ex}
	\caption{Commutative diagram between $F$ and $F^{\varphi}$}
	\label{fig:commutatve-diagram}
	\end{figure}
Let $F_{\mathbf s}(\pmb \zeta, \mathbf{y})$ be a recursion function.
We use  $F^{\varphi}_{\mathbf s}(\pmb \zeta, \mathbf{x})$
to denote the composition 
$\varphi(F_{\mathbf s}(\pmb \zeta, {\pmb \varphi^{-1}(\mathbf x)}))$
where $\mathbf{y}={\pmb \varphi^{-1}(\mathbf x)}$ denotes the vector $(\varphi^{-1}(x_1), \ldots, \varphi^{-1}(x_k))$.
Correspondingly, we define $F^\varphi_{\pmb{\zeta}, \mathbf s}(\mathbf{x})$ for fixed $\pmb{\zeta}$, and $F^\varphi_{\mathbf{x}, \mathbf s}(\pmb{\zeta})$ for fixed $\mathbf{x}$.
We will specify the domain on which $F^\varphi_{\mathbf s}$ is well-defined 
per each $\varphi$ that will be used.
For positive $\pmb{\zeta}$,
a sufficient condition for the bounded 2-spin system of $\pmb{\zeta}$  exhibiting SSM 
is that there exists a ``good'' potential function $\varphi$ such that $F^\varphi_{\pmb \zeta, \mathbf s}$ satisfies the following contraction property. 
\begin{definition}[Real contraction]\label{def:real-contract}
Fix $\Delta \in \mathbb{N}$.
We say $\pmb{\zeta}\in \mathbb{R}^3$ satisfies  real contraction  for $\Delta$
if there is 
 a
real compact  interval $J\subseteq \mathbb{R}$ where $\lambda\in J$, $-\gamma\notin J$ and $-1\notin J$,
and a real analytic function $\varphi: J \rightarrow I$ where $\varphi'(x)\neq 0$ for all $x\in J$,
such that 
\begin{enumerate}
 \item  $F_{\pmb \zeta, \bf s}(J^{k})\subseteq J$ for every $\mathbf{s}$ with $\|\mathbf{s}\|_1\leq \Delta -1$ and $-1\notin F_{\pmb \zeta, \bf s}(J^{k})$ for every $\mathbf{s}$ with $\|\mathbf{s}\|_1=\Delta$;
    \item there exists $\eta>0$ s.t. $\left\|\nabla F^\varphi_{\pmb \zeta, \bf s}(\mathbf{x})\right\|_1\le1-\eta$  for every $\mathbf{s}$ with $\|\mathbf{s}\|_1\leq \Delta -1$ and  all $\mathbf{x} \in I^{k}$.
\end{enumerate}
We say $\varphi$ defined on $J$ is a good potential function for $\pmb{\zeta}$.
\end{definition}

\begin{remark} 
Since $\varphi$ is analytic and $\varphi'(x)\neq 0$ for $x\in J$, 
we have $\varphi$ is invertible and the inverse $\varphi^{-1}:I\rightarrow J$ is also analytic by the inverse function theorem.
Also for every $\mathbf s$ with $\|{\mathbf s}\|_1\leq \Delta-1$, since $F_{\pmb \zeta, \bf s}(J^{k})\subseteq J$ and $F_{\pmb \zeta, \bf s}(\mathbf{x})$ is analytic on $J^k$ due to $-\gamma\notin J$, 
we have  $F^\varphi_{\pmb \zeta, \bf s}(\mathbf{x})$ is well-defined and analytic on $I^k$. Then $\nabla F^\varphi_{\pmb \zeta, \bf s}(\mathbf{x})$ is well-defined on $I^k$. 
We know $I$ is also a  real compact interval since  $J$ is a real compact interval and $\varphi$ is a real analytic function.

\end{remark}

Note that since $-1\notin J$, $F_{\pmb \zeta, \bf s}(J^{k})\subseteq J$ implies that $-1\notin F_{\pmb \zeta, \bf s}(J^{k})$.
Thus, real contraction implies that $-1\notin F_{\pmb \zeta, \bf s}(J^{k})$ for all $\|\mathbf{s}\|_1\leq \Delta$.
The reason why we require $F_{\pmb \zeta, \bf s}(J^{k})\subseteq J$ for $\|\mathbf{s}\|_1\leq \Delta-1$, but only require $-1\notin F_{\pmb \zeta, \bf s}(J^{k})$ for $\|\mathbf{s}\|_1= \Delta$
is that in a tree of degree at most $\Delta$,
 only the root node may have $\Delta$ many children,
 while other nodes have at most $\Delta-1$ many children.

\begin{lemma}\label{lem:real-correlation}
If $\pmb{\zeta}\in \mathbb{R}^3_{+}$ satisfies real contraction for $\Delta$, then the $\Delta$-bounded 2-spin system of $\pmb{\zeta}$
exhibits SSM, and hence there is an FPTAS for computing the partition function $Z_G(\pmb \zeta)$.
\end{lemma}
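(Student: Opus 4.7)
The plan is to execute the standard potential-function argument on the SAW tree. Since $\pmb\zeta \in \mathbb{R}^3_{+}$, the partition function is strictly positive on every graph with any boundary configuration, so the ratios $R_{G,v}^{\sigma_\Lambda}$ and marginals $p_v^{\sigma_\Lambda}$ are well defined. By Weitz's theorem it suffices to prove SSM on the self-avoiding walk tree $T$ of $G$ rooted at $v$, which has maximum degree $\Delta$ and satisfies $R_{G,v}^{\sigma_\Lambda} = R_{T,v}^{\sigma_\Lambda}$.

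I would work in the transformed coordinate $\hat R_u := \varphi(R_{T,u}^{\sigma_\Lambda})$. The first job is to show the recursion is well posed and stays in $J$ (equivalently $\hat R_u$ stays in $I$). This follows from condition~1 of real contraction by bottom-up induction: leaves evaluate $F_{\pmb\zeta,\mathbf{s}}$ with $k=0$ (its value is the constant $\lambda\beta^{s_1}\gamma^{-s_2}$, and condition~1 with $k=0$ lands it in $J$), and interior nodes with $\|\mathbf{s}\|_1 \leq \Delta-1$ use $F_{\pmb\zeta,\mathbf{s}}(J^k) \subseteq J$ to propagate invariance upward. At the root, the weaker $\|\mathbf{s}\|_1 = \Delta$ part of condition~1 only guarantees $R_{T,v}^{\sigma_\Lambda} \neq -1$, which is enough to make $p_v^{\sigma_\Lambda}$ well defined.

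The contraction step is the heart of the proof. For two feasible boundary configurations $\sigma_{\Lambda_1}$ and $\tau_{\Lambda_2}$, applying the mean value inequality on the convex set $I^k$ together with condition~2 of real contraction gives
\begin{equation*}
|\hat R_u^{\sigma_{\Lambda_1}} - \hat R_u^{\tau_{\Lambda_2}}| \leq \sup_{\xi \in I^k}\bigl\|\nabla F^{\varphi}_{\pmb\zeta,\mathbf{s}}(\xi)\bigr\|_1 \cdot \max_{i}\bigl|\hat R_{u_i}^{\sigma_{\Lambda_1}} - \hat R_{u_i}^{\tau_{\Lambda_2}}\bigr| \leq (1-\eta)\max_{i}\bigl|\hat R_{u_i}^{\sigma_{\Lambda_1}} - \hat R_{u_i}^{\tau_{\Lambda_2}}\bigr|,
\end{equation*}
where the first bound uses the $\ell^1$--$\ell^\infty$ Hölder pairing. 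Iterating down the tree along the path that maximizes the right-hand side, until reaching a vertex in $S$ (or two configurations that agree), yields
\begin{equation*}
\bigl|\hat R_v^{\sigma_{\Lambda_1}} - \hat R_v^{\tau_{\Lambda_2}}\bigr| \leq (1-\eta)^{\mathrm{dist}(v,S)}\,\mathrm{diam}(I).
\end{equation*}
Pushing this back through $\varphi^{-1}$ (Lipschitz on the compact $I$ since $\varphi'$ is nonvanishing analytic) and then through $R\mapsto R/(1+R)$ (Lipschitz on $J$ since $-1\notin J$) gives $|p_v^{\sigma_{\Lambda_1}} - p_v^{\tau_{\Lambda_2}}| = \exp(-\Omega(\mathrm{dist}(v,S)))$, which is precisely SSM.

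The FPTAS then follows from Weitz's construction: truncate the SAW tree at depth $d = \Theta_\eta(\log(n/\varepsilon))$, assign an arbitrary value in $J$ at each truncated leaf, and evaluate the recursion on the resulting tree of polynomial size $\Delta^d$. The preceding contraction bound shows the truncation error on $p_v$ is at most $\varepsilon/n$. Self-reducibility via telescoping of marginals \cite{jvv86} converts this into an $\varepsilon$-approximation of $Z_G(\pmb\zeta)$. The whole argument is classical; the only genuinely substantive part is the contraction step, and the main subtlety is that condition~1 must accommodate both the interior recursions (with $\|\mathbf{s}\|_1\leq \Delta-1$) and the root (where $\|\mathbf{s}\|_1$ can reach $\Delta$), which is exactly how its two clauses are phrased.
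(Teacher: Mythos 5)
Your proposal is correct and follows essentially the same standard potential-method argument that the paper itself merely cites (\cite{lly13,gl18}) for this lemma: positivity gives well-definedness, Weitz's SAW tree reduces to trees, condition~1 keeps the recursion in $J$, condition~2 plus the mean-value/H\"older estimate gives the $(1-\eta)$ decay per level, and truncation plus self-reducibility yields the FPTAS. The one point to tighten is the root: when $v$ has $\Delta$ children, neither condition~2 nor $F_{\pmb\zeta,\mathbf{s}}(J^k)\subseteq J$ applies, so $\varphi\big(R_{T,v}^{\sigma_\Lambda}\big)$ in your displayed bound need not be defined there; the last level should instead be handled with a plain Lipschitz bound for the raw recursion on the compact set $J^k$ (and boundedness of $1/|1+R|$ on $F_{\pmb\zeta,\mathbf{s}}(J^k)$, which avoids $-1$), losing only a constant factor, exactly as done in the paper's complex analogue, Lemma~\ref{lem:complex-to-correlation}.
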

\vspace{-1ex}
\begin{proof}
The proof directly follows from the argument of the potential method, see \cite{lly13, gl18}. 
The FPTAS follows from Weitz's algorithm.
\end{proof}
\vspace{-1ex}
Now, we give the sets of non-negative parameters which satisfy real contraction. 

\begin{definition}[Uniqueness condition \cite{lly13}]\label{def:uniqueness}
Let $\pmb \zeta\in \mathbb{R}^3$ be anti-ferromagnetic ($\beta\gamma<1$) with $\beta \geq 0$, $\gamma>0$ and $\lambda\geq 0$, and $f_d(x)=\lambda\left(\frac{\beta x+1}{x+\gamma}\right)^d$.
We say $\pmb \zeta$ is up-to-$\Delta$ unique, if $\lambda=0$ or $\lambda>0$ and there exists a constant $0<c<1$ such that for every integer $1\le d\le\Delta-1$,
$$\left|f'_d(\hat{x}_d)\right|=\frac{d(1-\beta\gamma)\hat{x}_d}{(\beta\hat{x}_d+1)(\hat{x}_d+\gamma)}\leq c,$$ where $\hat{x}_d$ is the unique positive fixed point of the function $f_d(x)$. 
\end{definition}

Let $\mathcal{S}^{\Delta}_i (i\in [4])$ be the correlation decay sets defined in Definition \ref{Correaltion-decay-sets}.
The set $\mathcal{S}^{\Delta}_1$ was given in \cite{zhangbai} and $\mathcal{S}^{\Delta}_2$ was given in \cite{lly13}. 
Directly following their proofs, it is easy to verify that both sets satisfy real contraction.
The sets $\mathcal{S}^{\Delta}_3$ and $\mathcal{S}^{\Delta}_4$  are obtained in this paper, and we show that they also satisfy real contraction.
We will give a proof in the appendix for every $\mathcal{S}^{\Delta}_i (i\in [4])$.

\begin{lemma}\label{lem:c1c2}
Fix $\Delta \geq 3$. For every $\pmb \zeta \in \mathcal{S}^{\Delta}_i (i\in [4])$, it satisfies real contraction for $\Delta$.
\end{lemma}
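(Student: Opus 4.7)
The plan breaks along the four sets. The spin-swap involution $\iota:(\beta,\gamma,\lambda)\mapsto(\gamma,\beta,1/\lambda)$ maps $\mathcal{S}^{\Delta}_3$ bijectively onto $\mathcal{S}^{\Delta}_4$ (a straightforward parameter renaming, using that under $\iota$ the quantity $t=\max\{1,\beta\}$ becomes $1/r$) and conjugates every recursion $F_{\mathbf s}(\pmb\zeta,\mathbf x)$ to the corresponding recursion on reciprocal ratios (with $s_1\leftrightarrow s_2$), so real contraction transfers across the pair. It therefore suffices to handle $\mathcal{S}^{\Delta}_1$, $\mathcal{S}^{\Delta}_2$, and $\mathcal{S}^{\Delta}_3$.

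For $\mathcal{S}^{\Delta}_1$ (following \cite{zhangbai}) I would use $\varphi(x)=\log x$ on a compact subinterval $J\subset(0,\infty)$ containing $\lambda$ and invariant under every admissible $F_{\pmb\zeta,\mathbf s}$; invariance is routine since $g(x)=(\beta x+1)/(x+\gamma)$ is uniformly bounded on $(0,\infty)$. With $y_i=e^{u_i}$ one computes
\[
\partial_{u_i}F^{\varphi}_{\pmb\zeta,\mathbf s}(\mathbf u)\;=\;\frac{(\beta\gamma-1)y_i}{(\beta y_i+1)(y_i+\gamma)},
\]
so by AM--GM $\|\nabla F^{\varphi}_{\pmb\zeta,\mathbf s}\|_1\le k\cdot|\sqrt{\beta\gamma}-1|/(\sqrt{\beta\gamma}+1)$, and the hypothesis $(\Delta-2)/\Delta<\sqrt{\beta\gamma}<\Delta/(\Delta-2)$ is precisely the statement that this is strictly less than $1$ for $k\le\Delta-1$, yielding $\eta>0$. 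For $\mathcal{S}^{\Delta}_2$ I import the potential of \cite{lly13}: its real analyticity and the non-vanishing of $\varphi'$ on a compact invariant interval $J$ are built in, and its contraction bound is by construction equivalent to the up-to-$\Delta$ uniqueness inequality $|f_d'(\hat x_d)|\le c<1$ from Definition \ref{def:uniqueness}.

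For the new set $\mathcal{S}^{\Delta}_3$, the case $\lambda=0$ is immediate ($F\equiv 0$, so $J=\{0\}$ with $\varphi=\mathrm{id}$). For $\lambda>0$ I take $\varphi(x)=\log x$ on $J=[\epsilon,\lambda t^{\Delta-1}]$ for a small $\epsilon>0$. Invariance at the upper end follows from the ferromagnetic monotonicity of $g$ and the worst-case bound $\max_{\|\mathbf s\|_1\le\Delta-1}\beta^{s_1+k}\gamma^{-s_2}=t^{\Delta-1}$ (a short case split on $\beta\ge 1$ versus $\beta<1$, using $\beta>1/\gamma$); at the lower end one takes $\epsilon$ below the positive infimum of $F$ on $J^k$. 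For contraction, $\partial_{u_i}F^{\varphi}=(\beta\gamma-1)x_i/[(\beta x_i+1)(x_i+\gamma)]$ is increasing in $x_i$ on $[0,\sqrt{\gamma/\beta}]$, so it suffices to verify $(\Delta-1)(\beta\gamma-1)L<(\beta L+1)(L+\gamma)$ at $L=\lambda t^{\Delta-1}$, equivalently $q(L):=\beta L^2-((\Delta-2)\beta\gamma-\Delta)L+\gamma>0$. Evaluating at $L_\star:=\gamma/[(\Delta-2)\beta\gamma-\Delta]$ gives $q(L_\star)=\beta\gamma^2/[(\Delta-2)\beta\gamma-\Delta]^2>0$, and a short root analysis (the smaller root always satisfies $L_1\ge L_\star$, a consequence of $4B^2\ge 0$ after squaring) extends positivity throughout $(0,L_\star]$. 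Thus $\lambda t^{\Delta-1}<L_\star$ supplies the required $\eta>0$.

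The main obstacle is matching the tight constant $t^{\Delta-1}[(\Delta-2)\beta\gamma-\Delta]$ of $\mathcal{S}^{\Delta}_3$. The naive identity potential $\varphi(x)=x$ yields only the strictly weaker threshold $\lambda<\gamma/[t^{\Delta-1}(\Delta-1)(\beta\gamma-1)]$, so the logarithmic potential is essential; but $\log x$ is analytic only on $(0,\infty)$, which must be reconciled with $\lambda$ possibly being arbitrarily small together with the requirement $\lambda\in J$. The $t^{\Delta-1}$ factor likewise emerges only after the case split $\beta\ge 1$ versus $\beta<1$, with the latter case using $\gamma>1/\beta>1$ to bound $\gamma^{-s_2}\le 1$.
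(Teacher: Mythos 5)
Your overall architecture is close to the paper's: for $\mathcal{S}^{\Delta}_1$ you use exactly the paper's argument ($\log$ potential on an interval of the form $[\lambda r^{\Delta-1},\lambda t^{\Delta-1}]$, AM--GM, the hypothesis on $\sqrt{\beta\gamma}$), and your spin-swap involution $(\beta,\gamma,\lambda)\mapsto(\gamma,\beta,1/\lambda)$, which conjugates the recursions via $x\mapsto 1/x$ and carries real contraction from $\mathcal{S}^{\Delta}_3$ to $\mathcal{S}^{\Delta}_4$, is correct and a legitimate alternative to the paper, which instead treats $\mathcal{S}^{\Delta}_3$ and $\mathcal{S}^{\Delta}_4$ in parallel with the same interval, bounding the denominator once by $\gamma e^{-x_i}$ and once by $\beta e^{x_i}$. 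For $\mathcal{S}^{\Delta}_2$, however, you simply ``import'' the potential of \cite{lly13}. The paper's main text does frame that case as following from \cite{lly13}, but its appendix still performs the nontrivial verification that this potential satisfies the precise conditions of Definition \ref{def:real-contract}: a compact invariant interval (with a separate construction when $\beta=0$, where only $s_1=0$ recursions occur), and the uniform bound $\|\nabla F^{\varphi}_{\pmb\zeta,\mathbf{s}}\|_1\le 1-\eta$ obtained through the Cauchy--Schwarz/AM--GM symmetrization and the monotonicity analysis reducing everything to $\sqrt{|f_d'(\hat x_d)|}\le\sqrt{c}$. Your proposal asserts this equivalence ``by construction'' without doing it, so that case is an appeal to the literature rather than a proof. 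Also, $\mathcal{S}^{\Delta}_1$ and $\mathcal{S}^{\Delta}_2$ permit $\lambda=0$, which your intervals $J\subset(0,\infty)$ do not cover; this trivial case (which you do note for $\mathcal{S}^{\Delta}_3$) should be dispatched once for all four sets, as the paper does.

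The genuine gap is in your contraction estimate for $\mathcal{S}^{\Delta}_3$. You claim that since $\psi(x)=(\beta\gamma-1)x/[(\beta x+1)(x+\gamma)]$ is increasing on $[0,\sqrt{\gamma/\beta}]$, it suffices to verify $(\Delta-1)(\beta\gamma-1)L<(\beta L+1)(L+\gamma)$ at the right endpoint $L=\lambda t^{\Delta-1}$. That reduction is valid only if $J\subseteq(0,\sqrt{\gamma/\beta}\,]$, and this can fail inside $\mathcal{S}^{\Delta}_3$: take $\Delta=3$, $\beta=\gamma=2$, $\lambda=0.4$; then $\lambda<\gamma/(t^{2}[(\Delta-2)\beta\gamma-\Delta])=1/2$, yet $L=1.6>\sqrt{\gamma/\beta}=1$, so the supremum of $\psi$ over $J$ is attained at the interior critical point $\sqrt{\gamma/\beta}$, not at $L$, and your endpoint check does not bound it. The conclusion is still true, but it needs the missing case: whenever $\sqrt{\gamma/\beta}<\gamma/[(\Delta-2)\beta\gamma-\Delta]$ one has $(\Delta-2)\beta\gamma-\Delta<\sqrt{\beta\gamma}$, which (by a short quadratic argument) forces $\sqrt{\beta\gamma}<\Delta/(\Delta-2)$ and hence $\psi(\sqrt{\gamma/\beta})=(\sqrt{\beta\gamma}-1)/(\sqrt{\beta\gamma}+1)<1/(\Delta-1)$. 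The paper sidesteps all of this: it bounds $\psi(x)\le(\beta\gamma-1)/(\gamma/x+1+\beta\gamma)$, which is increasing in $x$ on all of $(0,\infty)$, so evaluating at $x\le L<\gamma/[(\Delta-2)\beta\gamma-\Delta]$ gives the uniform bound $<1/(\Delta-1)$ in one line, making your $q(L)>0$ root analysis unnecessary. Your lower-endpoint choice of $\epsilon$ should likewise be stated non-circularly (the infimum of $F_{\pmb\zeta,\mathbf{s}}$ over all of $(0,\infty)^k$ is already at least $\lambda\min\{1,1/\gamma\}^{\Delta-1}$, which is what the paper uses as the left endpoint).
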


In order to generalize the correlation decay technique to complex parameters, 
we need to ensure that the partition function is zero-free.
Now, let us first take a detour to Barvinok's algorithm 
which crucially relies on the zero-free regions of the partition function.
After we carve out our new zero-free regions, we will come back to the existence of correlation decay of complex parameters.

\section{Barvinok's Algorithm}\label{sec3}
In this section, we describe Barvinok's algorithm.
Let $I=[0, t]$ be a closed real interval.
We define  the $\delta$-strip of $I$ to be $\{z\in \mathbb{C}\mid |z-z_0|<\delta, z_0\in I\}$, denoted by $I_{\delta}$. 
It is a complex neighborhood of $I$.
Suppose a graph polynomial $P(z)=\sum^n_{i=0}a_iz^i$ of degree $n$ is zero-free in $I_{\delta}$.
Barvinok's method \cite{bar16} roughly states that for any $z\in I_{\delta}$,
$P(z)$ can be $(1\pm\varepsilon)$-approximated using coefficients
$a_0, \ldots, a_k$ for some $k=O(e^{\Theta(1/\delta)}\log(n/\varepsilon))$, via truncating the Taylor expansion of the logarithm of the polynomial.
For the partition function of 2-spin systems,
these coefficients can be computed in polynomial-time \cite{pr17, lss19b}.  
For the purpose of obtaining FPTAS, we will view 
the partition function
 as a univariate polynomial $Z_{G; \beta, \gamma}(\lambda)$ in $\lambda$ and fix $\beta$ and $\gamma$.
The following result is known. 

\begin{lemma}\label{lem-bar}
Fix  $\beta, \gamma \in \mathbb{C}$ and $\Delta \in \mathbb{N}$.
Let $G$ be a graph of degree at most $\Delta$.
If $Z_{G; \beta, \gamma}(\lambda)\neq 0$ lies in a $\delta$-strip $I_{\delta}$ of $I=[0, t]$, then there is an FPTAS for  computing $Z_{G; \beta, \gamma}(\lambda)$ for $\lambda \in I_{\delta}$.
\end{lemma}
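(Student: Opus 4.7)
The plan is to apply Barvinok's interpolation method directly, treating $Z_{G;\beta,\gamma}(\lambda)$ as a univariate polynomial in $\lambda$ of degree $n=|V(G)|$. Since $Z_{G;\beta,\gamma}(\lambda)\neq 0$ on the open simply-connected region $I_\delta$, which contains the point $\lambda=0$ with the easily computable value $Z_{G;\beta,\gamma}(0)=\gamma^{|E(G)|}$ (the only configuration contributing at $\lambda=0$ is the all-$-$ assignment), the function $\log Z_{G;\beta,\gamma}(\lambda)$ admits a single-valued analytic branch on $I_\delta$ fixed by this initial value.

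First I would invoke Barvinok's quantitative lemma: for a polynomial $P(z)$ of degree $n$ that is zero-free on the $\delta$-strip of $I=[0,t]$, the truncation of the Taylor series of $\log P$ around $0$ (more precisely, around a sequence of points along the segment $I$) at degree $k=O(e^{\Theta(1/\delta)}\log(n/\varepsilon))$ yields an additive $\varepsilon$-approximation to $\log P(\lambda)$ for any target $\lambda$ in a slightly shrunken strip, which after absorbing $\delta \mapsto \delta/2$ covers $I_\delta$ itself. Exponentiating this approximation then produces a multiplicative $(1\pm\varepsilon)$-approximation of $Z_{G;\beta,\gamma}(\lambda)$.

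The nontrivial algorithmic step is computing the first $k+1$ Taylor coefficients of $\log Z_{G;\beta,\gamma}$ at $\lambda=0$, which by Newton's identities reduces to computing the coefficients $a_0,\ldots,a_k$ of $Z_{G;\beta,\gamma}(\lambda)$ itself. I would cite the Patel--Regts framework \cite{pr17}, refined in \cite{lss19b}, which shows that for $\Delta$-bounded graphs each $a_j$ can be expressed as a linear combination (with weights depending only on $\beta,\gamma$) of counts of connected induced subgraphs of $G$ on at most $j$ vertices; these counts can be enumerated in time $n\cdot(e\Delta)^{O(j)}$. Choosing $k=O(\log(n/\varepsilon))$ with hidden constant depending on $\delta$ and $\Delta$, the overall running time becomes polynomial in $n$ and $1/\varepsilon$.

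The hard part is really just the efficient coefficient extraction; the analytic Barvinok estimate and the existence of the $\log Z$ branch are routine given the zero-freeness hypothesis, and $Z_{G;\beta,\gamma}(0)=\gamma^{|E(G)|}$ provides the required easy starting point for the interpolation. The lemma thus follows by assembling Barvinok's interpolation theorem with the Patel--Regts coefficient-computation algorithm, both of which are already available in the literature.
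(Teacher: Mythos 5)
Your proposal is correct and is essentially the paper's own argument: the paper proves this lemma by citing Lemma 4 of \cite{gll19} together with \cite{lss19b}, and those references consist precisely of the assembly you describe --- Barvinok's truncated-Taylor interpolation on the zero-free strip around $I=[0,t]$ anchored at the easy point $\lambda=0$ (where $Z_{G;\beta,\gamma}(0)=\gamma^{|E|}$), combined with the Patel--Regts style computation of the low-order coefficients in time $n\cdot(e\Delta)^{O(k)}$ with $k=O(\log(n/\varepsilon))$, valid for complex $\beta,\gamma$. You have simply unpacked the cited argument rather than taken a different route.
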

\begin{proof}
\vspace{-1ex}
This lemma is a generalization of Lemma 4 in \cite{gll19}, where $\beta$ and $\gamma$ are both real.
The generalization to complex valued parameters directly follows from the argument in \cite{lss19b}. 
\vspace{-1ex}
\end{proof}

\subsection{Zero-freeness and complex contraction}
With Lemma \ref{lem-bar} in hand, the main effort is to obtain zero-free regions  of the partition function.
For this purpose, we will still view 
$Z_G(\pmb \zeta)$ 
as a multivariate polynomial in $(\beta, \gamma, \lambda)$.
A main and widely-used approach to obtain zero-free regions is the \emph{recursion} method \cite{sok01, ss05, bc18, pr19, lss19a}. This method is related to the correlation decay method.

Assuming  $Z^{-}_{G,v}(\pmb \zeta)\ne0$ for some vertex $v$, then $Z_G(\pmb \zeta)\ne0$ is equivalent to $R_{G,v}=\frac{Z^{+}_{G,v}(\pmb \zeta)}{Z^{-}_{G,v}(\pmb \zeta)}\ne-1$.
As pointed above, the ratio $R_{G,v}$ can be computed by recursion via the SAW tree in which $v$ is the root.
Roughly speaking, the key idea of the recursion method is to construct a \emph{contraction} region $Q\subseteq\mathbb{C}$ 
where $\lambda\in Q$ and $-1\notin Q$ 
such that for all recursion functions $F_{\pmb{\zeta}, \mathbf s}$ with $\|\mathbf s\|_1\leq \Delta-1$,
$F_{\pmb{\zeta}, \mathbf s}(Q^k)\subseteq Q$ and for all $F_{\pmb{\zeta}, \mathbf s}$ with $\|\mathbf s\|_1= \Delta$,
$-1 \notin F_{\pmb{\zeta}, \mathbf s}(Q^k)$.
This condition guarantees that with the initial value $R_{G,v_\ell}=\lambda$ where $v_\ell$ is a free leaf node in the SAW tree of which the degree is bounded by $\Delta$, the recursion will never achieve $-1$.
Hence, we have $Z_G(\pmb \zeta)\ne0$ by induction.
Again, we may use a potential function $\varphi: Q\rightarrow P$ to change the domain, and we prove $F^{\varphi}_{\pmb{\zeta}, \mathbf s}(P^k)\subseteq P$.

Now, we introduce the following \emph{complex} contraction property as a generalization of real contraction. This property gives a sufficient condition for the zero-freeness of the partition function.

\begin{definition}[Complex contraction]\label{complex contraction}
Fix $\Delta \in \mathbb{N}$.
We say $\pmb{\zeta}\in \mathbb{C}^3$ satisfies complex contraction for $\Delta$
if there is  a closed and bounded complex region   $Q\subseteq \mathbb{C}$ where $\lambda \in Q$, $-\gamma \notin Q$ and $-1\notin Q$,
and an analytic and invertible function $\varphi: Q \rightarrow P$ where the inverse $\varphi^{-1}: P\rightarrow Q$ is also analytic and $P$ is convex,
such that 
\begin{enumerate}
 \item  $F_{\pmb \zeta, \bf s}(Q^{k})\subseteq Q$ for every $\mathbf{s}$ with $\|\mathbf{s}\|_1\leq \Delta -1$  and $-1\notin F_{\pmb \zeta, \bf s}(Q^{k})$ for every $\mathbf{s}$ with $\|\mathbf{s}\|_1=\Delta$;
    \item there exists $\eta>0$ s.t. $\left\|\nabla F^\varphi_{\pmb \zeta, \bf s}(\mathbf{x})\right\|_1\le1-\eta$  for every $\mathbf{s}$ with $\|\mathbf{s}\|_1\leq \Delta -1$ and  all $\mathbf{x} \in P^{k}$.
\end{enumerate}
\end{definition}
\begin{remark}
Similar to the remark of Definition \ref{def:real-contract}, we have $F^\varphi_{\pmb \zeta, \bf s}(\mathbf{x})$ is well-defined and analytic on $P^k$. Here, we directly assume that the inverse $\varphi^{-1}$ is analytic instead of $\varphi'(x)\neq 0$ for the sake of simplicity of our proof. 
\end{remark}

\begin{lemma}\label{complex-to-zerofree}
If $\pmb \zeta$ satisfies complex contraction for $\Delta$, then $Z_{G}^{\sigma_{\Lambda}}(\pmb \zeta)\neq 0$ for any graph $G$ of degree at most $\Delta$ and any feasible configuration $\sigma_{\Lambda}$.
\end{lemma}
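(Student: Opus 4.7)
My plan is a two-layer induction: an outer induction on the number of free vertices reduces every nonvanishing claim to the single assertion $R_{G,v}^{\sigma_\Lambda}\neq -1$, and an inner bottom-up induction on Weitz's SAW tree uses the complex contraction hypothesis to secure that assertion.

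First I would set up the outer induction on $|V|-|\Lambda|$. In the base case there are no free vertices, so $Z_G^{\sigma_\Lambda}(\pmb\zeta)$ equals the single monomial $\beta^{m_+}\gamma^{m_-}\lambda^{n_+}$; feasibility forces $m_+=0$ whenever $\beta=0$ and $n_+=0$ whenever $\lambda=0$, while $\gamma\ne 0$ by our standing assumption, so this monomial is nonzero. For the inductive step I pick any free vertex $v$ and write
\[
Z_G^{\sigma_\Lambda}(\pmb\zeta)=Z_{G,v}^{\sigma_\Lambda,-}(\pmb\zeta)\bigl(1+R_{G,v}^{\sigma_\Lambda}(\pmb\zeta)\bigr).
\]
The first factor is the partition function of $G$ under the still-feasible configuration $\sigma_\Lambda\cup(v\mapsto -)$, which has one fewer free vertex, and is therefore nonzero by the outer inductive hypothesis; in particular $R_{G,v}^{\sigma_\Lambda}$ is well-defined. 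It then suffices to prove $R_{G,v}^{\sigma_\Lambda}\ne -1$.

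For this I invoke Weitz's theorem, in the extended form for complex parameters and feasible configurations indicated in \secref{sec2}, to rewrite $R_{G,v}^{\sigma_\Lambda}$ as a ratio $R_{T,v}^{\sigma'}$ on the SAW tree $T=T_{\mathrm{SAW}}(G,v)$ carrying the induced feasible configuration. Running the inner induction bottom-up on the depth of $T$, I show that every internal non-root node produces a ratio lying in $Q$. The leaves give the base case: free leaves have ratio $\lambda\in Q$, while pinned leaves are absorbed into the $(s_1,s_2)$ exponents of the next recursion. At an internal non-root node the parent direction is excluded, so the local recursion has $\|\mathbf s\|_1\le\Delta-1$, and the first half of condition 1 of complex contraction gives $F_{\pmb\zeta,\mathbf s}(Q^k)\subseteq Q$; hence the ratio there is again in $Q$. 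At the root $v$ the degree may be as large as $\Delta$: if $\|\mathbf s\|_1\le\Delta-1$ the containment just used places $R_{T,v}^{\sigma'}\in Q$ (and $-1\notin Q$ by hypothesis), while if $\|\mathbf s\|_1=\Delta$ the second half of condition 1 directly yields $R_{T,v}^{\sigma'}\ne -1$. Either way $R_{G,v}^{\sigma_\Lambda}=R_{T,v}^{\sigma'}\ne -1$, closing the outer induction.

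The two degenerate parameter cases require only cosmetic remarks. When $\lambda=0$, feasibility forbids any $+$ pinning and the tree recursion collapses identically to $0\in Q$, which differs from $-1$; equivalently, every extension of $\sigma_\Lambda$ with a vertex assigned $+$ contributes $0$, so $Z_G^{\sigma_\Lambda}=\gamma^{m_-}\ne 0$ directly. When $\beta=0$, any free vertex adjacent to a $+$-pinned neighbor has ratio $0$ and can be re-declared pinned to $-$ without changing $Z_G^{\sigma_\Lambda}$, so only the $s_1=0$ recursion functions arise and they are covered by condition 1 as written. The step I expect to be the main delicacy is verifying that the configuration induced on the SAW tree from a feasible $\sigma_\Lambda$ on $G$ remains feasible, since this is exactly what lets Weitz's theorem still apply and is the reason the feasibility hypothesis is imposed in the lemma; this is the only place where the complex-parameter extension of Weitz's theorem actually enters.
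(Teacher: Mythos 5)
Your proposal is correct and follows essentially the same route as the paper's proof: an induction on the number of free vertices that reduces the claim to $R_{G,v}^{\sigma_\Lambda}\neq -1$ (via the factorization through $Z_{G,v}^{\sigma_\Lambda,-}$), followed by the SAW-tree recursion in which condition 1 of complex contraction keeps non-root ratios in $Q$ and excludes $-1$ at the root, with the same handling of the $\lambda=0$ and $\beta=0$ degeneracies and the same observation that feasibility is preserved on the SAW tree.
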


Please see the appendix for the proof. 
Such a proof only uses  condition 1 of complex contraction.
However, condition 2 combining with the zero-freeness result of Lemma \ref{complex-to-zerofree} 
gives a sufficient condition for bounded 2-spin systems of \emph{complex} parameters exhibiting correlation decay. 
This is a generalization of Lemma \ref{lem:real-correlation}. 
Also, we will give the proof in the appendix.

\begin{lemma}\label{lem:complex-to-correlation}
If $\pmb{\zeta}$ satisfies complex contraction for $\Delta$, then the $\Delta$-bounded 2-spin system specified by $\pmb{\zeta}$
exhibits correlation decay. Thus, there is an FPTAS for computing $Z_{G}(\pmb\zeta)$ via Weitz's algorithm.
\end{lemma}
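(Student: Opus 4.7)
The plan is to verify both conditions in Definition \ref{def:correlation-decay} by combining the zero-freeness given by Lemma \ref{complex-to-zerofree} (for condition 1) with a standard potential-method argument (for condition 2). Condition 1 is immediate: for any feasible $\sigma_{\Lambda_1}$ or $\tau_{\Lambda_2}$, the graph $G$ together with the pinning is still a feasible 2-spin instance, so Lemma \ref{complex-to-zerofree} yields $Z_G^{\sigma_{\Lambda_1}}(\pmb\zeta)\ne 0$ and $Z_G^{\tau_{\Lambda_2}}(\pmb\zeta)\ne 0$. In particular $R_{G,v}^{\sigma_{\Lambda_1}}$ and $p_v^{\sigma_{\Lambda_1}}$ are well-defined, and likewise for $\tau_{\Lambda_2}$.

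For condition 2, I would first invoke Weitz's theorem (extended to complex parameters with feasible configurations, as discussed in Section~\ref{sec2}) to reduce computing $R_{G,v}^{\sigma_\Lambda}$ to computing $R_{T,v}^{\sigma_\Lambda}$ on the self-avoiding walk tree $T$ rooted at $v$. Since every non-root vertex of $T$ has at most $\Delta-1$ children, condition 1 of complex contraction ($F_{\pmb\zeta,\mathbf s}(Q^k)\subseteq Q$ for $\|\mathbf s\|_1\le\Delta-1$) implies by induction from the leaves that every ratio $R_{T,u}^{\sigma_\Lambda}$ produced by the recursion lies in $Q$. Change variables via the potential by setting $x_u^{\sigma}:=\varphi(R_{T,u}^{\sigma})\in P$, and similarly $x_u^{\tau}$.

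Condition 2 of complex contraction together with the convexity of $P$ and the fundamental theorem of calculus applied along the segment between $\mathbf x^{\sigma}$ and $\mathbf x^{\tau}$ in $P^k$ gives the multivariate Lipschitz estimate
$$\bigl|F^\varphi_{\pmb\zeta,\mathbf s}(\mathbf x^{\sigma}) - F^\varphi_{\pmb\zeta,\mathbf s}(\mathbf x^{\tau})\bigr|\le (1-\eta)\max_i|x_i^{\sigma}-x_i^{\tau}|.$$
Applying this inductively up the SAW tree, and using that $x_u^{\sigma}=x_u^{\tau}$ whenever $\sigma$ and $\tau$ agree on the entire subtree under $u$, propagates the contraction only from vertices at distance $\ge \mathrm{dist}(v,S)$ from $v$, yielding
$$|x_v^{\sigma}-x_v^{\tau}|\le \mathrm{diam}(P)\cdot(1-\eta)^{\mathrm{dist}(v,S)}.$$
To translate back, note that $\varphi^{-1}:P\to Q$ is analytic on the compact set $P$ hence Lipschitz there, and $R\mapsto R/(1+R)$ is analytic on the compact set $Q$ (since $-1\notin Q$) and thus Lipschitz on $Q$. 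Composing gives $|p_v^{\sigma_{\Lambda_1}}-p_v^{\tau_{\Lambda_2}}|\le C(1-\eta)^{\mathrm{dist}(v,S)}=\exp(-\Omega(\mathrm{dist}(v,S)))$, which is exactly SSM. The FPTAS then follows by Weitz's algorithm: truncate the SAW tree at depth $O(\log(n/\varepsilon))$, plug in arbitrary boundary values in $Q$ (the truncation error is controlled by the same contraction estimate), and use self-reducibility to recover $Z_G(\pmb\zeta)$ in polynomial time.

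The main technical step is the Lipschitz estimate along the line segment inside $P^k$: for complex analytic $F^{\varphi}_{\pmb\zeta,\mathbf s}$, one rewrites $F^{\varphi}_{\pmb\zeta,\mathbf s}(\mathbf x^{\sigma})-F^{\varphi}_{\pmb\zeta,\mathbf s}(\mathbf x^{\tau})$ as $\int_0^1 \nabla F^{\varphi}_{\pmb\zeta,\mathbf s}(\mathbf x^{\tau}+t(\mathbf x^{\sigma}-\mathbf x^{\tau}))\cdot(\mathbf x^{\sigma}-\mathbf x^{\tau})\,dt$, which uses convexity of $P$ to stay in the domain, and then applies H\"older's inequality with the $\ell_1$--$\ell_\infty$ pairing to obtain the stated bound; this is where condition 2 is used in its full strength. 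A minor subtlety is the degenerate cases $\lambda=0$ and $\beta=0$ in Weitz's reduction, but as noted in Section~\ref{sec2} the recursion evaluates to $0$ there and SSM holds trivially in those directions.
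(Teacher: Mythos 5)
Your overall route is the same as the paper's (zero-freeness via Lemma \ref{complex-to-zerofree} for condition 1, then a SAW-tree induction with the potential-function contraction and a final Lipschitz translation back to $p_v$), but there is a genuine gap at the root of the SAW tree. Complex contraction only guarantees $F_{\pmb\zeta,\mathbf s}(Q^k)\subseteq Q$ and the gradient bound $\|\nabla F^\varphi_{\pmb\zeta,\mathbf s}\|_1\le 1-\eta$ for $\|\mathbf s\|_1\le\Delta-1$; for $\|\mathbf s\|_1=\Delta$ it only gives $-1\notin F_{\pmb\zeta,\mathbf s}(Q^k)$. In a $\Delta$-bounded graph the root $v$ may have degree $\Delta$, so its recursion step can have $\|\mathbf s\|_1=\Delta$. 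Then your claim that \emph{every} ratio produced by the recursion lies in $Q$ fails at $v$: $R^{\sigma}_{T,v}$ may leave $Q$, so $x_v^{\sigma}:=\varphi(R^{\sigma}_{T,v})$ need not be defined, the $(1-\eta)$-contraction is unavailable at the top step, and your final translation, which uses that $R\mapsto R/(1+R)$ is Lipschitz \emph{on $Q$} because $-1\notin Q$, does not apply to the root ratio.

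The fix is exactly the second half of the paper's proof: run your contraction argument only for subtrees whose roots have at most $\Delta-1$ children (this gives $\bigl|\varphi(R^{\sigma}_{T_i,v_i})-\varphi(R^{\tau}_{T_i,v_i})\bigr|\le C(1-\eta)^{t-1}$), and handle the top level separately with the map $G_{\pmb\zeta,\mathbf s}=F_{\pmb\zeta,\mathbf s}\circ\pmb\varphi^{-1}$, which is defined and analytic on $P^k$ for all $\|\mathbf s\|_1\le\Delta$; its gradient is merely bounded by some constant $M$ on the compact set $P^k$, costing only a constant factor. Likewise, the denominator bound must use $K=\inf_{z\in Q'}|1+z|>0$ with $Q'=\bigcup_{\|\mathbf s\|_1\le\Delta}F_{\pmb\zeta,\mathbf s}(Q^k)$, which is compact and avoids $-1$ by condition 1 at $\|\mathbf s\|_1=\Delta$ — not just $-1\notin Q$. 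With that top-level patch (and the harmless off-by-constant in your exponent), the rest of your argument — the line-segment integral in the convex set $P^k$ with the $\ell_1$--$\ell_\infty$ pairing, the distance bookkeeping via the set $S$ where the configurations differ, and the truncated-SAW-tree FPTAS — matches the paper's proof.
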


\section{From Real Contraction to Complex Contraction}\label{sec4}
In this section, we will prove our main result. 
We first give some preliminaries in complex analysis.
The main tools are the unique analytic continuation and the inverse function theorem.
Here, we slightly modify the statements to fit for our settings.
Please refer to \cite{complex} for the proofs.

\begin{theorem}[Unique analytic continuation]\label{analytic-continue}
Let $f({x})$ be a (real) analytic function defined on a compact real interval $I \subseteq \mathbb{R}$.
Then, there exists a complex 
neighborhood  $\widetilde{I}\subseteq \mathbb{C}$ of ${I}$, and a (complex) analytic function $\widetilde{f}({x})$ defined on   $\widetilde{ I}$ such that 
$\widetilde{f}({x})\equiv{f}({x})$ for all ${x} \in {I}$.
Moreover, if there is another (complex) analytic function $\widetilde{g}({x})$ also defined on  $\widetilde{I}$ such that 
$\widetilde{g}({x})\equiv\widetilde{f}({x})$ for all ${x} \in {I}$ and the measure $\frak m({I})\neq 0$,
then 
$\widetilde{g}({x})\equiv\widetilde{f}({x})$ for all ${x} \in \widetilde{{I}}$.
We call $\widetilde{f}({x})$ the unique analytic continuation of ${f}({x})$ on $\widetilde{{I}}$.
\end{theorem}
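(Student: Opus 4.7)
My plan is to establish existence by a local-to-global gluing argument and then prove uniqueness via the identity theorem for holomorphic functions. For existence, I exploit real analyticity pointwise: at each $x_0 \in I$, $f$ admits a Taylor expansion $f(x) = \sum_{n \geq 0} \frac{f^{(n)}(x_0)}{n!}(x-x_0)^n$ with some positive radius of convergence $r(x_0)$. Viewed as a complex power series, this same expansion defines a holomorphic function $\widetilde{f}_{x_0}$ on the open disk $D(x_0, r(x_0)) \subseteq \mathbb{C}$. Since $I$ is compact, a finite subcover $\{D(x_j, r(x_j)/2)\}_{j=1}^{N}$ suffices, and I take $\widetilde{I} := \bigcup_{j=1}^{N} D(x_j, r(x_j)/2)$, which is an open complex neighborhood of $I$.

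To see that the local pieces glue into a single holomorphic function $\widetilde{f}$ on $\widetilde{I}$, I need to check that $\widetilde{f}_{x_j}$ and $\widetilde{f}_{x_k}$ agree on any nonempty overlap $D(x_j, r(x_j)/2) \cap D(x_k, r(x_k)/2)$. Such an overlap is an open connected subset of $\mathbb{C}$, and because both centers $x_j, x_k$ lie on the real axis with $|x_j - x_k| < r(x_j)/2 + r(x_k)/2$, the overlap contains a real open subinterval of $I$. On that subinterval both $\widetilde{f}_{x_j}$ and $\widetilde{f}_{x_k}$ equal $f$, so the identity theorem (applied inside the connected overlap, where a subinterval provides an accumulation set) forces them to coincide on the entire overlap. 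Hence $\widetilde{f}$ is well-defined and holomorphic on $\widetilde{I}$, and $\widetilde{f}|_I = f$ by construction.

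For uniqueness, assume $\widetilde{g}$ is holomorphic on $\widetilde{I}$ with $\widetilde{g}(x) = \widetilde{f}(x)$ for all $x \in I$. The difference $h := \widetilde{g} - \widetilde{f}$ is holomorphic on $\widetilde{I}$ and vanishes on $I$. Because $\mathfrak{m}(I) \neq 0$, the set $I$ has positive Lebesgue measure and therefore contains an accumulation point in $\mathbb{R} \subseteq \mathbb{C}$. The neighborhood $\widetilde{I}$, being a union of open disks each centered on the connected set $I \subseteq \mathbb{R}$, is itself connected. The identity theorem then yields $h \equiv 0$ on $\widetilde{I}$, i.e., $\widetilde{g} \equiv \widetilde{f}$ everywhere.

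The main technical subtlety is ensuring the correct connectivity and the accumulation-point hypothesis needed for the identity theorem, rather than any deep obstacle. One must verify (i) that nonempty overlaps of cover disks contain a real interval so that the gluing argument goes through, which uses the fact that $I$ is an interval and the disks are centered on $I$, and (ii) that the constructed $\widetilde{I}$ is connected, which follows because each disk meets the connected set $I$. Once these routine points are in place, existence and uniqueness are both immediate consequences of standard power series and identity-theorem arguments in one-variable complex analysis.
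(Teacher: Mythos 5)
Your argument is correct: the paper itself gives no proof of this statement (it defers to the cited textbook of Stein and Shakarchi), and your proof --- extending $f$ locally by its Taylor series viewed as complex power series, gluing via the identity theorem on overlaps of disks centered on $I$, and proving uniqueness by the identity theorem on the connected neighborhood $\widetilde{I}$ using that $\mathfrak{m}(I)\neq 0$ forces an accumulation point of zeros --- is exactly the standard argument being referenced. You also correctly flag and handle the only delicate points, namely that overlaps of cover disks meet $I$ in a nondegenerate real interval (using convexity of $I$ and that the centers lie in $I$) and that $\widetilde{I}$ is connected, so there is nothing to add.
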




\begin{theorem}[Inverse function theorem]\label{inverse-function}
Let $\varphi$ be a  (complex) analytic function defined on $U\subseteq \mathbb{C}$, and $\varphi'(z)\neq 0$ for some $z\in U$.
Then there exists a complex neighborhood $D$ of $z$ such that $\varphi$ is invertible on $D$ and the inverse is also analytic.
\end{theorem}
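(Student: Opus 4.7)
The plan is to prove this classical inverse function theorem using the argument principle together with an explicit contour-integral formula for the inverse, which simultaneously yields analyticity without needing a separate verification. The three movements are: (i) local injectivity via a quantitative perturbation estimate, (ii) a winding-number count to invert $\varphi$ onto a neighborhood of $\varphi(z)$, and (iii) an explicit integral formula exhibiting the inverse as an analytic function.

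First I would shrink $U$ to a closed disk $\overline{B(z,r)}$ on which $\varphi'$ is nonvanishing (by continuity of $\varphi'$) and on which $|\varphi'(\zeta) - \varphi'(z)| < |\varphi'(z)|/2$. For any distinct $w_1, w_2 \in B(z,r)$, the identity
$$\varphi(w_1) - \varphi(w_2) = \varphi'(z)(w_1 - w_2) + \int_{[w_2,w_1]} \bigl(\varphi'(\zeta) - \varphi'(z)\bigr)\,d\zeta$$
together with the triangle inequality forces $|\varphi(w_1) - \varphi(w_2)| \geq \tfrac{1}{2}|\varphi'(z)||w_1 - w_2| > 0$, so $\varphi$ is injective on $B(z,r)$.

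Next, set $\Gamma = \partial B(z,r)$ and $m := \min_{\zeta \in \Gamma} |\varphi(\zeta) - \varphi(z)|$, which is strictly positive by injectivity and compactness. For any $w_0 \in B(\varphi(z), m)$, the argument principle
$$N(w_0) = \frac{1}{2\pi i} \oint_{\Gamma} \frac{\varphi'(\zeta)}{\varphi(\zeta) - w_0}\,d\zeta$$
counts the zeros of $\varphi - w_0$ inside $\Gamma$; it is integer-valued, continuous in $w_0$, and equals $1$ at $w_0 = \varphi(z)$, hence equals $1$ throughout $B(\varphi(z), m)$. Setting $D := \varphi^{-1}(B(\varphi(z), m)) \cap B(z,r)$, we conclude that $\varphi : D \to B(\varphi(z), m)$ is a bijection. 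Applying the residue theorem to $\zeta \cdot \varphi'(\zeta)/(\varphi(\zeta) - w_0)$, whose only pole inside $\Gamma$ is at the unique preimage of $w_0$ with residue equal to that preimage, yields
$$\varphi^{-1}(w_0) = \frac{1}{2\pi i} \oint_{\Gamma} \frac{\zeta\,\varphi'(\zeta)}{\varphi(\zeta) - w_0}\,d\zeta.$$
Differentiation under the integral sign, justified because the integrand is jointly continuous and holomorphic in $w_0$ on the compact contour, shows $\varphi^{-1}$ is analytic on $B(\varphi(z), m)$.

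The main subtlety is the opening injectivity step: the hypothesis $\varphi'(z) \neq 0$ is only pointwise, so one cannot skip the quantitative reduction to a disk on which $\varphi'$ stays close to $\varphi'(z)$. Once that is in place, the rest is essentially automatic from standard complex-analytic machinery. An alternative route is to invoke the real smooth inverse function theorem, noting that Cauchy--Riemann gives real Jacobian determinant $|\varphi'(z)|^2 > 0$, and then verify holomorphicity of the inverse from Cauchy--Riemann; but the contour-integral approach above is cleaner and produces an explicit formula in which analyticity is manifest.
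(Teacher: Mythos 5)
Your proof is correct and complete: the quantitative estimate forcing injectivity on a small closed disk, the argument-principle count showing each $w_0$ near $\varphi(z)$ has exactly one preimage, and the residue formula $\varphi^{-1}(w_0)=\frac{1}{2\pi i}\oint_{\Gamma}\frac{\zeta\,\varphi'(\zeta)}{\varphi(\zeta)-w_0}\,d\zeta$ with differentiation under the integral all go through (the only cosmetic point is that your injectivity estimate in fact holds on the closed disk $\overline{B(z,r)}$, which is what guarantees $m>0$ on $\Gamma$). The paper does not prove this statement itself but cites Stein--Shakarchi, and your argument is essentially the standard proof given there, so there is nothing to add.
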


Combining the above theorems, we have the following result.

\begin{lemma}\label{lem-invertible}
Let $\varphi:J\rightarrow I$ be a real analytic function, and $\varphi'(x)\neq 0$ for all $x\in J$ where $J$ and $I$ are both real compact intervals.
Then, there exists an analytic continuation $\widetilde{\varphi}$ on  a complex neighborhood $\widetilde{J}$ of $J$ such that $\widetilde{\varphi}$ is invertible on 
$\widetilde{J}$ and the inverse $\widetilde{\varphi}^{-1}$ is also analytic.
\end{lemma}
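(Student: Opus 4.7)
The plan is to first produce an analytic extension $\widetilde\varphi$ on some open complex neighborhood of $J$, then shrink that neighborhood so that $\widetilde\varphi$ becomes globally injective there, and finally invoke the inverse function theorem to get analyticity of the inverse. The nontrivial content is the global injectivity step; the rest is essentially bookkeeping.

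\textbf{Step 1: Global analytic extension.}
First I would cover $J$ by points. Around each $x_0 \in J$, the real analyticity of $\varphi$ gives a convergent power series with some positive radius of convergence $r(x_0)>0$, and this series defines an analytic function on the open disc $B(x_0,r(x_0))\subseteq\mathbb{C}$. By the uniqueness part of Theorem \ref{analytic-continue}, any two such local extensions agree on the overlap of their discs (since they agree on a real interval of positive measure). Compactness of $J$ then lets me extract a finite subcover, and the local extensions glue into a single analytic function $\widetilde\varphi$ defined on an open complex neighborhood $U$ of $J$. Its derivative $\widetilde\varphi'$ is continuous, and $\widetilde\varphi'(x)=\varphi'(x)\neq 0$ for every $x\in J$, so by continuity there is an $\epsilon_0>0$ such that $\widetilde\varphi'\neq 0$ on the closed $\epsilon_0$-neighborhood $N_{\epsilon_0}$ of $J$; shrinking if needed, I may assume $N_{\epsilon_0}\subseteq U$.

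\textbf{Step 2: Global injectivity on a strip around $J$.}
The main obstacle is that nonvanishing derivative only gives \emph{local} injectivity via Theorem \ref{inverse-function}. To upgrade this to \emph{global} injectivity on some complex strip around $J$ I would argue by contradiction. Suppose that for every $n\in\mathbb{N}$, there exist distinct points $z_n\neq w_n$ in the $1/n$-neighborhood of $J$ with $\widetilde\varphi(z_n)=\widetilde\varphi(w_n)$. The closed $\epsilon_0$-neighborhood of $J$ is compact, so I can pass to subsequences with $z_n\to z^*$ and $w_n\to w^*$ for some $z^*,w^*\in J$. By continuity, $\varphi(z^*)=\varphi(w^*)$, and since $\varphi'\neq 0$ on the connected set $J$, the real function $\varphi$ is strictly monotone, hence injective, so $z^*=w^*$. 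But Theorem \ref{inverse-function} applied at $z^*$ yields a disc $D$ around $z^*$ on which $\widetilde\varphi$ is injective, and for sufficiently large $n$ both $z_n,w_n\in D$, forcing $z_n=w_n$, a contradiction. Thus there exists $\delta>0$ such that $\widetilde\varphi$ is injective on the $\delta$-neighborhood $\widetilde J:=N_\delta$ of $J$, which I may further intersect with $N_{\epsilon_0}$ so that $\widetilde\varphi'\neq 0$ on $\widetilde J$.

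\textbf{Step 3: Analyticity of the inverse.}
Once $\widetilde\varphi:\widetilde J\to\widetilde\varphi(\widetilde J)$ is a bijective analytic map with nonvanishing derivative, the inverse function theorem (Theorem \ref{inverse-function}) applied at each point of $\widetilde J$ shows that $\widetilde\varphi^{-1}$ exists and is analytic in a complex neighborhood of every image point; these local inverses coincide with the global set-theoretic inverse by uniqueness, so $\widetilde\varphi^{-1}:\widetilde\varphi(\widetilde J)\to\widetilde J$ is analytic. This yields the desired invertible analytic extension. The only delicate step is the compactness/contradiction argument in Step 2; the others are standard.
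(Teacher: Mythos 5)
Your proof is correct, but it takes a genuinely different route from the paper's. The paper never argues injectivity of $\widetilde{\varphi}$ head-on: after disposing of the degenerate case $\mathfrak{m}(J)=0$ by Theorem \ref{inverse-function} alone, it observes that the real inverse $\varphi^{-1}:I\to J$ is analytic, takes analytic continuations $\widetilde{\varphi}$ of $\varphi$ and $\widetilde{\varphi^{-1}}$ of $\varphi^{-1}$ via Theorem \ref{analytic-continue}, shrinks $\widetilde{J}$ so that $\widetilde{\varphi}(\widetilde{J})$ lands in the domain of $\widetilde{\varphi^{-1}}$, and then uses the uniqueness part of Theorem \ref{analytic-continue} (valid since $\mathfrak{m}(J)\neq 0$) to conclude $\widetilde{\varphi^{-1}}\circ\widetilde{\varphi}\equiv \mathrm{id}$ on $\widetilde{J}$; this gives injectivity and exhibits the analytic inverse as $\widetilde{\varphi^{-1}}$ in one stroke. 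You instead build the continuation by gluing Taylor discs and upgrade the local injectivity coming from $\widetilde{\varphi}'\neq 0$ to global injectivity on a thin strip via the sequential-compactness contradiction (using strict monotonicity of $\varphi$ on $J$), and then get analyticity of the inverse from Theorem \ref{inverse-function} applied pointwise. Your route avoids continuing the inverse function (so it does not even need to name $\varphi^{-1}$ on $I$) and handles the single-point case of $J$ uniformly, at the price of the extra injectivity argument; the paper's route is shorter given Theorem \ref{analytic-continue}, but needs the separate measure-zero case and the careful shrinking of $\widetilde{J}$ to make the composition well-defined. Two small points you should make explicit, though neither affects correctness: in Step 1 the two local extensions agree on a disc overlap because the overlap meets $J$ itself in a subinterval of positive measure (both discs are centered at points of the interval $J$, so their real traces overlap inside $J$), and in Step 2 the witnesses $z_n\neq w_n$ must be taken inside $N_{\epsilon_0}$ where $\widetilde{\varphi}$ is defined, which is automatic for all large $n$.
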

\begin{proof}
If $\frak{m}(J)=0$, i.e., $J=\{x\}$, then by Theorem \ref{inverse-function}, there exists an analytic continuation $\widetilde{\varphi}$ of $\varphi$ defined on a neighborhood of $x$ on which $\widetilde{\varphi}$ is invertible and the inverse $\widetilde{\varphi}^{-1}$ is analytic.

Otherwise, $\frak{m}(J)\neq0$.
Since $\varphi(x)$ is analytic and $\varphi'(x)\neq 0$ for all $x \in J$,
 we have $\varphi$ is invertible and by Theorem \ref{inverse-function}, the inverse
$\varphi^{-1}: I\rightarrow J$ is analytic on $I$.
By Theorem \ref{analytic-continue}, there exists an analytic continuation $\widetilde{\varphi^{-1}}$ of $\varphi^{-1}$ defined on a neighborhood  $\widetilde{I}_1$ of $I$.
Similarly, there exists an analytic continuation $\widetilde{\varphi}$ of $\varphi$ defined on a neighborhood $\widetilde{J}$ of $J$.
We use $\widetilde{I}$ to denote the image $\widetilde{\varphi}(\widetilde{J})$. 
Since $\widetilde{\varphi}$ is analytic and by the open mapping theorem, we know $\widetilde{I}$ is an open set in the complex plane.
Clearly, we have $\varphi(J)=I\subseteq \widetilde{I}$.
We can pick $\widetilde{J}$ small enough while still keeping $J \subseteq \widetilde{J}$ such that the image $\widetilde{I}=\widetilde{\varphi}(\widetilde{J})\subseteq \widetilde{I}_1$ and still $I \subseteq \widetilde{I}$.
Thus, the composition $\widetilde{\varphi^{-1}}\circ \widetilde{\varphi}$ is a well-defined analytic function on $\widetilde{J}$.
Clearly, we have 
$$\widetilde{\varphi^{-1}}\circ \widetilde{\varphi}(x)={\varphi^{-1}}\circ {\varphi}(x)\equiv x \text{ for all $x\in J$}.$$
Since $\frak{m}(J)\neq0$, by Theorem \ref{analytic-continue},
we have $\widetilde{\varphi^{-1}}\circ \widetilde{\varphi}(x)\equiv x \text{ for all $x\in \widetilde{J}$}.$

Thus, $\widetilde{\varphi}$ is invertible on $\widetilde{J}$ and the inverse $\widetilde{\varphi}^{-1}=\widetilde{\varphi^{-1}}$ is analytic.
\end{proof}

Now, we are ready to prove our main result.
\begin{theorem}\label{real-to-complex}
If $\pmb \zeta_0$ satisfies  real contraction for $\Delta$, then there exists a $\delta>0$ such that for every $\pmb \zeta\in \mathbb{C}^3$ with $\|\pmb \zeta-\pmb \zeta_0\|_{\infty}<\delta$, $\pmb \zeta $ satisfies  complex contraction for $\Delta$.

\end{theorem}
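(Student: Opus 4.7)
The plan is to use the analytic continuation of $\varphi$ together with a continuity argument to lift each clause of real contraction to a small complex thickening. Let $J$, $\varphi$, $I$, $\eta$ be as in the hypothesis. By Lemma \ref{lem-invertible}, extend $\varphi$ to an analytic invertible $\widetilde\varphi$ on an open complex neighborhood $\widetilde J \supseteq J$ with analytic inverse $\widetilde\varphi^{-1}$ defined on $\widetilde I := \widetilde\varphi(\widetilde J)$. Choose a small $r>0$ (to be shrunk below) and set $P := \{z\in\mathbb{C} : d(z,I)\le r\}$, which is compact and convex as the Minkowski sum of the interval $I$ with a closed disk; provided $P\subseteq \widetilde I$, define $Q := \widetilde\varphi^{-1}(P)$. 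The positive distance from the compact $J$ to $\{-1,-\gamma_0\}$ ensures, after shrinking $r$, that $-1,-\gamma_0\notin Q$; then for $\pmb\zeta$ sufficiently close to $\pmb\zeta_0$ we also have $\lambda\in Q$ and $-\gamma\notin Q$. To make $F^{\widetilde\varphi}_{\pmb\zeta,\mathbf{s}}(\mathbf{x}) := \widetilde\varphi(F_{\pmb\zeta,\mathbf{s}}(\widetilde\varphi^{-1}(x_1),\ldots,\widetilde\varphi^{-1}(x_k)))$ well-defined and analytic on $P^k$, I would first fix a slightly larger thickening $I_{r_1}\subseteq\widetilde I$ with $r_1>r$ so that $F_{\pmb\zeta_0,\mathbf{s}}(\widetilde\varphi^{-1}(I_{r_1})^k)\subseteq\widetilde J$ (possible since $F_{\pmb\zeta_0,\mathbf{s}}(J^k)\subseteq J$ is compact inside the open $\widetilde J$), then invoke joint continuity in $\pmb\zeta$.

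For condition 2 of complex contraction, joint continuity of $\nabla F^{\widetilde\varphi}_{\pmb\zeta,\mathbf{s}}$ in $(\pmb\zeta,\mathbf{x})$ on $P^k$, combined with the bound $\|\nabla F^{\varphi}_{\pmb\zeta_0,\mathbf{s}}\|_1\le 1-\eta$ on the compact $I^k$, lets us shrink $r$ and $\delta$ so that $\|\nabla F^{\widetilde\varphi}_{\pmb\zeta,\mathbf{s}}\|_1\le 1-\eta/2$ on all of $P^k$ for every $\|\pmb\zeta-\pmb\zeta_0\|_\infty<\delta$ and every $\|\mathbf{s}\|_1\le\Delta-1$. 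For the invariance clause $F^{\widetilde\varphi}_{\pmb\zeta,\mathbf{s}}(P^k)\subseteq P$ ($\|\mathbf{s}\|_1\le\Delta-1$), fix $\mathbf{x}\in P^k$ and let $\mathbf{x}'\in I^k$ be its coordinatewise nearest real point, so $\|\mathbf{x}-\mathbf{x}'\|_\infty\le r$. Convexity of $P$ and the gradient bound give
\begin{equation*}
\bigl|F^{\widetilde\varphi}_{\pmb\zeta,\mathbf{s}}(\mathbf{x}) - F^{\widetilde\varphi}_{\pmb\zeta,\mathbf{s}}(\mathbf{x}')\bigr| \le (1-\eta/2)\,r,
\end{equation*}
while continuity in $\pmb\zeta$ on the compact $I^k$ yields $|F^{\widetilde\varphi}_{\pmb\zeta,\mathbf{s}}(\mathbf{x}') - F^{\varphi}_{\pmb\zeta_0,\mathbf{s}}(\mathbf{x}')|\le \eta r/4$ for $\delta$ small. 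Real contraction gives $F^{\varphi}_{\pmb\zeta_0,\mathbf{s}}(\mathbf{x}')\in I$, and the triangle inequality then yields $d(F^{\widetilde\varphi}_{\pmb\zeta,\mathbf{s}}(\mathbf{x}),I)\le (1-\eta/4)r<r$, so the image lies in $P$. The remaining clause ``$-1\notin F_{\pmb\zeta,\mathbf{s}}(Q^k)$ for $\|\mathbf{s}\|_1=\Delta$'' is easier: since $F_{\pmb\zeta_0,\mathbf{s}}(J^k)$ is compact and avoids $-1$, there is a positive gap, which is preserved by continuity for $\pmb\zeta$ near $\pmb\zeta_0$ and $\mathbf{y}\in Q^k$ (a small complex neighborhood of $J^k$).

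The main obstacle is the bookkeeping of quantifiers: the radius $r$ must be fixed before $\delta$ because the domain on which the composition $F^{\widetilde\varphi}$ is defined, the strengthened gradient bound $1-\eta/2$, and the separation from $-1$ all depend on $r$, while $\delta$ in turn must be small relative to all of these. Nevertheless, every inequality coming from real contraction is strict (slack $\eta$, positive distances to the forbidden points $-1,-\gamma_0$) and all objects are analytic on compacta, so the estimates are uniform and the choices can be threaded consistently; once this scheduling is set up, the triangle inequality closes the argument.
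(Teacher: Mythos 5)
Your proposal is correct and follows essentially the same route as the paper's proof: extend $\varphi$ via Lemma \ref{lem-invertible}, take $P$ a closed convex $\varepsilon$-thickening of $I$ and $Q=\widetilde\varphi^{-1}(P)$, get uniform gradient bounds by compactness, and prove invariance of $P$ by comparing each $\mathbf{x}\in P^k$ to a nearby real point of $I^k$ with a segment/triangle-inequality estimate, handling the $\|\mathbf{s}\|_1=\Delta$ clause and the exclusions of $-1,-\gamma$ by continuity. The only cosmetic difference is that you split the increment as (variation in $\mathbf{x}$ at parameter $\pmb\zeta$) plus (uniform continuity in $\pmb\zeta$ on $I^k$), whereas the paper uses an explicit Lipschitz constant $M$ in $\pmb\zeta$ and contracts in $\mathbf{x}$ at $\pmb\zeta_0$; both close the argument identically.
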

\begin{proof}
Let $\varphi:J \rightarrow I$ be a good potential function for $\pmb \zeta_0$. 
By Definition \ref{def:real-contract} and Lemma \ref{lem-invertible}, there exists a neighborhood $\widetilde{J}$ of $J$ such that 
the analytic continuation $\widetilde{\varphi}:\widetilde{J}\rightarrow \widetilde{I}$ of $\varphi$ on $\widetilde{J}$ is invertible.
Here $\widetilde{I}=\widetilde{\varphi}(\widetilde{J})$ is a neighborhood of $I$, and the inverse $\widetilde{\varphi}^{-1}$ is also analytic on $\widetilde{I}$.
We use $\mathcal{B}_{\delta}:=\{\mathbf{z}\in\mathbb{C}^3\mid \|\mathbf{z}-\pmb \zeta_0\|_{\infty}<\delta\}$ to denote the 3-dimensional complex ball of radius $\delta$ in terms of infinity norm around $\pmb \zeta_0$.
Recall that we define $I_{\varepsilon}=\{z\in \mathbb{C}\mid |z-z_0|<\varepsilon, z_0\in I\}$.
Given a set $U\subseteq \mathbb{C}^k$, we use $\overline{U}$
to denote its closure.

We first show that we can pick a pair of $(\delta_1, \varepsilon_1)$ 
such that for every $\mathbf s$ with  $\|\mathbf s\|_1\leq \Delta-1$, the composition $$F^{\widetilde{\varphi}}_{\mathbf s}(\pmb \zeta, \mathbf{x})=\widetilde\varphi(F_{\mathbf s}(\pmb \zeta, { \widetilde{\pmb\varphi}^{-1}(\mathbf x)})) \text{ is well-defined and analytic on }
\mathcal{B}_{\delta_1}\times I_{\varepsilon_1}^k.$$
Given some $\mathbf s$ with $\|\mathbf s\|_1\leq \Delta-1$, 
we consider the function $F_{\mathbf s}(\pmb \zeta, \mathbf{x})$.
We know that it is analytic on a neighborhood of $\{\pmb\zeta_0\}\times J^k$ 
and  by real contraction we have $F_{\mathbf s}(\pmb \zeta_0, J^k)\subseteq J$.
Then, we can pick some $\delta_{\mathbf s}$ and a neighborhood $\widetilde{J}_{\mathbf s}$ of $J$ that are small enough such that 
$F_{\mathbf s}(\pmb \zeta, \mathbf{x})$ is analytic on $\mathcal{B}_{\delta_{\mathbf s}}\times \widetilde{J}_{\mathbf s}^k$, 
and  $F_{\mathbf s}(\mathcal{B}_{\delta_{\mathbf s}}, \widetilde{J}_{\mathbf s}^k)\subseteq \widetilde{J}$.
Let $$\delta_1=\min_{{\|\mathbf s\|_1}\leq\Delta-1}\{\delta_{\mathbf s}\}\text{~~~ and ~~~}
\widetilde{J}_1=\bigcap_{{\|\mathbf s\|_1}\leq\Delta-1}\widetilde{J}_{\mathbf s}.$$
Since there is only a finite number of $\mathbf s$ with $\|\mathbf s\|_1\le \Delta -1$, we know $\delta_1>0$, and  $\widetilde{J}_1$ is open and it is a neighborhood of $J$. 
We have $F_{\mathbf s}(\mathcal{B}_{\delta_1}, \widetilde{J}_1)\subseteq \widetilde{J}$ for every $\mathbf s$ with ${\|\mathbf s\|_1}\leq\Delta-1$.
Since $\widetilde\varphi^{-1}$ is analytic on $\widetilde{I}$ and  $\widetilde\varphi^{-1}(I)=J$, 
similarly we can pick a small enough neighborhood $\widetilde{I}_1$ of $I$  
where $\widetilde{I}_1\subseteq\widetilde{I}$ such that
$\widetilde\varphi^{-1}(\widetilde{I}_1)\subseteq \widetilde{J}_1$.
For every $z_0\in I$, we can pick an $\varepsilon_{z_0}$ such that the disc $B_{z_0, \varepsilon_{z_0}}:=\{z\in\mathbb{C}\mid |z-z_0|<\varepsilon_{z_0}\}$ is in $\widetilde{I}_1$. 
Recall that $I$ is a compact real interval, 
by the finite cover theorem, 
we can uniformly pick a $\varepsilon_1$ such that $I\subseteq I_{\varepsilon_1}\subseteq \widetilde{I}_1$. 
Thus, we have $F^{\widetilde{\varphi}}_{\mathbf s}(\pmb \zeta, \mathbf{x})$ is well-defined and analytic on $\mathcal{B}_{\delta_1}\times I_{\varepsilon_1}^k$ for every  $\mathbf s$ with  $\|\mathbf s\|_1\leq \Delta-1$.
In fact, $F^{\widetilde{\varphi}}_{\mathbf s}$ is a (multivariate) analytic continuation of $F^{{\varphi}}_{\mathbf s}$.
Since $I$ is a compact interval, in the following when we pick a neighborhood $\widetilde{I}$ of $I$, without loss of generality, we may always pick $\widetilde{I}$ as an $\varepsilon$-strip $I_{\varepsilon}$ of $I$.

Then, we show that we can pick a pair of $(\delta_2, \varepsilon_2)$ where $\delta_2<\delta_1$ and $\varepsilon_2<\varepsilon_1$, a constant $M>0$ and a constant $\eta>0$ such that for every $\mathbf s$ with  $\|\mathbf s\|_1\leq \Delta-1$, we have 
$$\left\|\nabla F^{\widetilde\varphi}_{\pmb \zeta, \mathbf{s}}(\mathbf x)\right\|_1\leq1-\eta\ \text{ ~~~and ~~~ } \left\|\nabla F^{\widetilde\varphi}_{\mathbf{x}, \mathbf{s}}(\pmb{\zeta})\right\|_1\leq M $$
for all $\pmb \zeta \in \overline{\mathcal{B}_{\delta_2}}$  and all $\mathbf{x} \in \overline{I_{\varepsilon_2}^k}.$ 
By real contraction, there is an $\eta'>0$ such that $\left\|\nabla F^{\widetilde{\varphi}}_{\pmb \zeta_0, \bf s}(\mathbf{x})\right\|_1\le1-\eta'$ 
for every $\mathbf s$ with  $\|\mathbf s\|_1\leq \Delta-1$ 
and all $\mathbf{x} \in I^{k}$.
Given some $\mathbf s$ with $\|\mathbf s\|_1\leq \Delta-1$,
since $F^{\widetilde{\varphi}}_{\mathbf s}(\pmb \zeta, \mathbf{x})$ is  analytic on $\mathcal{B}_{\delta_1}\times I_{\varepsilon_1}^k$, 
by continuity we can pick some $\delta_{\mathbf s}<\delta_1$ and  $\varepsilon_{\mathbf s}<\varepsilon_1$ such that 
$\left\|\nabla F^{\widetilde{\varphi}}_{\pmb \zeta, \bf s}(\mathbf{x})\right\|_1\le1-\frac{\eta'}{2}$ 
for all $\pmb \zeta \in \overline{\mathcal{B}_{\delta_{\mathbf s}}}$  and all $\mathbf{x} \in \overline{I_{\varepsilon_{\mathbf s}}^k}.$ 
In addition, let $$M_{\mathbf s}=\sup_{\pmb \zeta \in \overline{\mathcal{B}_{\delta_{\mathbf s}}}, \mathbf{x} \in \overline{I_{\varepsilon_{\mathbf s}}^k}}\left\|\nabla F^{\widetilde\varphi}_{\mathbf{x}, \mathbf{s}}(\pmb{\zeta})\right\|_1,$$
and we know $M_{\mathbf s}<+\infty$ since $F^{\widetilde{\varphi}}$ is analytic on  $\overline{\mathcal{B}_{\delta_{\mathbf s}}}\times \overline{I_{\varepsilon_{\mathbf s}}^k}$ which is close and bounded.
Finally, let $$\eta=\frac{\eta'}{2}, ~~~
\delta_2=\min_{{\|\mathbf s\|_1}\leq\Delta-1}\{\delta_{\mathbf s}\}, ~~~
\varepsilon_2=\min_{{\|\mathbf s\|_1}\leq\Delta-1}\{\varepsilon_{\mathbf s}\},  ~~
\text{ and } ~~ M=\max_{{\|\mathbf s\|_1}\leq\Delta-1}\{M_{\mathbf s}\}.$$
These choices will satisfy our requirement.

For the case that $\|\mathbf s\|_1= \Delta$, 
we show that we can pick a pair of $(\delta_3, \varepsilon_3)$ where $\delta_3<\delta_1$ and $\varepsilon_3<\varepsilon_1$ such that for every $\mathbf s$ with  $\|\mathbf s\|_1= \Delta$, we have
$-1\notin F_{\mathbf{s}}(\overline{\mathcal{B}_{\delta_3}}, \widetilde{J}_2^k)$ 
where $\widetilde{J}_2=\widetilde{\varphi}^{-1}(\overline{I_{\varepsilon_3}})$ is a closed neighborhood of $J$.
Since $F_{\mathbf{s}}$ is analytic, and by real contraction, $-1\notin F_{\pmb \zeta_0, \bf s}(J^{k})$ which is closed.
Again by continuity we can pick some $(\delta_3, \varepsilon_3)$ that satisfy our requirement.

 Since $\pmb \zeta_0=(\beta_0, \gamma_0, \lambda_0)$ satisfies real contraction, we have $\lambda_0\in J$, $-\gamma_0\notin J$ and $-1\notin J$. Recall that $J=\widetilde{\varphi}^{-1}(I)$.
 Again, since $\widetilde{\varphi}^{-1}$ is analytic and by continuity, we can pick some $\varepsilon \leq \min\{\varepsilon_2, \varepsilon_3\}$ such that 
$\lambda_0\in \widetilde{\varphi}^{-1}(I_\varepsilon)$ (an open set), $-\gamma_0\notin \widetilde{\varphi}^{-1}(\overline{I_\varepsilon})$ (a closed set) and $-1\notin \widetilde{\varphi}^{-1}(\overline{I_\varepsilon})$. 
Moreover, we can pick some $\delta_4$ small enough such that the disc 
$B_{\lambda_0, \delta_4}:=\{z\in \mathbb{C}\mid |z-\lambda_0|<\delta_4\}$ is in $\widetilde{\varphi}^{-1}({I_\varepsilon})$,
and the disc $B_{-\gamma_0, \delta_4}:=\{z\in \mathbb{C}\mid |z-(-\gamma_0)|<\delta_4\}$ is disjoint with $\widetilde{\varphi}^{-1}(\overline{I_\varepsilon})$.
Let $P=\overline{I_\varepsilon}$ and $Q=\widetilde{\varphi}^{-1}(\overline{I_\varepsilon})$. 
Clearly, $P$ is convex.
For every  $\pmb \zeta $ with $\|\pmb \zeta-\pmb\zeta_0\|_{\infty}<\delta$,
we have $\lambda\in Q$, $-\gamma \notin Q$ and $-1\notin Q$.
In addition, we know that $Q$ is closed and bounded since $P$ is closed and bounded and $\widetilde\varphi^{-1}$ is analytic on $P$.
Finally, let $\delta=\min\{\delta_2, \delta_3, \delta_4, \frac{\varepsilon\eta}{M}\}$. 
We show that for every $\mathbf s$ with ${\|\mathbf s\|_1}\leq\Delta-1$, we have $F^{\widetilde\varphi}_{\mathbf{s}}(\mathcal{B}_{\delta}, P^k)\subseteq P$,
which implies that 
$F_{\mathbf{s}}(\mathcal{B}_{\delta}, Q^k)\subseteq Q$.

Consider some $\mathbf{x}\in P^k$. By the definition, there exists an $\mathbf{x}_0\in I^k$ such that $\|\mathbf{x}-\mathbf{x}_0\|_\infty\leq\varepsilon$.
Also, consider some $\pmb \zeta \in \mathcal{B}_{\delta}$, and we have $\|\pmb \zeta - \pmb \zeta_{0}\|_\infty<\delta$.
Then, for every $\mathbf s$ with ${\|\mathbf s\|_1}\leq\Delta-1$, consider $F^{\widetilde\varphi}_{\mathbf{s}}(\pmb \zeta, \mathbf{x})-F^{\widetilde\varphi}_{\mathbf{s}}(\pmb \zeta_0, \mathbf{x}_0).$ 
We have
\begin{equation*}
    \begin{aligned}
    &\left|F^{\widetilde\varphi}_{\mathbf{s}}(\pmb \zeta, \mathbf{x})-F^{\widetilde\varphi}_{\mathbf{s}}(\pmb \zeta_0, \mathbf{x}_0)\right|\\
    \leq  &\left|F^{\widetilde\varphi}_{\mathbf{s}}(\pmb \zeta, \mathbf{x})-F^{\widetilde\varphi}_{\mathbf{s}}(\pmb \zeta_0, \mathbf{x})\right|+
    \left|F^{\widetilde\varphi}_{\mathbf{s}}(\pmb \zeta_0, \mathbf{x})-F^{\widetilde\varphi}_{\mathbf{s}}(\pmb \zeta_0, \mathbf{x}_0)\right|\\
    \leq & \sup_{\pmb{\zeta}'\in \mathcal{B}_{\delta}}\left\|\nabla F^{\widetilde\varphi}_{\mathbf{x}, \mathbf{s}}(\pmb{\zeta}')\right\|_1\cdot \|\pmb{\zeta}-\pmb\zeta_0\|_{\infty}
    +\sup_{\mathbf{x}'\in P^k}\left\|\nabla F^{\widetilde\varphi}_{\pmb \zeta_0, \mathbf{s}}(\mathbf x')\right\|_1\cdot \left\|\mathbf{x}-\mathbf{x}_0\right\|_{\infty}\\
    \leq & M\delta+(1-\eta)\cdot \varepsilon \leq \varepsilon.\\
    \end{aligned}
\end{equation*}
The second inequality above uses the fact that both $\mathcal{B}_{\delta}$ and $P^k$ are convex, which ensures that the line between $\pmb \zeta_0$ and $\pmb \zeta$ is in $\mathcal{B}_{\delta}$ and the line between $\mathbf x_0$ and $\bf x$ is in $P^k$.
By real contraction, we know that $F^{\widetilde\varphi}_{\mathbf{s}}(\pmb \zeta_0, \mathbf{x}_0)\in I$ since $\mathbf{x}_0\in I^k$.
Thus, we have $F^{\widetilde\varphi}_{\mathbf{s}}(\pmb \zeta, \mathbf{x})\in P.$

Thus, for every  $\pmb \zeta $ with $\|\pmb \zeta-\pmb\zeta_0\|_{\infty}<\delta$,
we have $\lambda\in Q$, $-\gamma \notin Q$ and $-1\notin Q$, and
\begin{enumerate}
 \item   $F_{\pmb \zeta, \bf s}(Q^{k})\subseteq Q$ for every $\mathbf{s}$ with $\|\mathbf{s}\|_1\leq \Delta -1$ and $-1\notin F_{\pmb \zeta, \bf s}(Q^{k})$ for every $\mathbf{s}$ with $\|\mathbf{s}\|_1= \Delta$;
       \item there exists $\eta>0$ s.t. $\left\|\nabla F^\varphi_{\pmb \zeta, \bf s}(\mathbf{x})\right\|_1\le1-\eta$  for every $\mathbf{s}$ with $\|\mathbf{s}\|_1\leq \Delta -1$ and  all $\mathbf{x} \in P^{k}$.
\end{enumerate}
The function $\widetilde\varphi:Q\rightarrow P$ is a good potential function for $\pmb \zeta$.
\end{proof}



Combining Lemmas \ref{lem:c1c2}, \ref{complex-to-zerofree}, \ref{lem:complex-to-correlation} and Theorem \ref{real-to-complex}, we have the following result.
\begin{theorem}
Fix $\Delta\geq 3$.
For every $\pmb \zeta_0\in \mathcal{S}^{\Delta}_i$ $(i \in [4])$, there exists a $\delta
\footnote{The choice of $\delta$ does not depend on the size of the graph, only on $\Delta$ and $\pmb \zeta_0$.
In particular, let $D$ be a compact set in $\mathcal{S}^{\Delta}_i$ for some $i\in[4]$.
Then, there exists a uniform $\delta$ such that for all $\pmb \zeta$ in a complex neighborhood $D_{\delta}$ of radius $\delta$ around $D$, i.e.,  $\pmb\zeta \in D_{\delta}:=\{\mathbf z\in \mathbb{C}^3\mid \|\mathbf z-\mathbf{z}_0\|_{\infty}<\delta, \mathbf{z}_0\in D\}$, $Z_{G}(\pmb\zeta)\neq 0$ for every graph $G$ of degree at most $\Delta$.}
>0$ such that for any $\pmb \zeta \in \mathbb{C}^3$ where $\|\pmb\zeta-\pmb\zeta_0\|_{\infty}<\delta$, we have 
\begin{itemize}
    \item $Z^{\sigma_{\Lambda}}_{G}(\pmb\zeta)\neq 0$ for every graph $G$ of degree at most $\Delta$ and every feasible configuration $\sigma_{\Lambda}$; 
    \item the $\Delta$-bounded 2-spin system specified by $\pmb\zeta$ exhibits correlation decay.
\end{itemize}
Then via either Weitz's algorithm or Barvinok's algorithm, there is an FPTAS for computing $Z_{G}(\pmb\zeta)$.
\end{theorem}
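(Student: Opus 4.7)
The plan is to chain together the four building blocks that the paper has already developed, and so the proof is essentially a compositional statement. First, given $\pmb\zeta_0 \in \mathcal{S}^{\Delta}_i$ for some $i \in [4]$, I would invoke Lemma \ref{lem:c1c2} to conclude that $\pmb\zeta_0$ satisfies real contraction for $\Delta$. This pulls in an explicit good potential function $\varphi : J \to I$ with the gradient bound $\|\nabla F^{\varphi}_{\pmb\zeta_0, \mathbf s}\|_1 \le 1 - \eta$ on $I^k$ for all $\mathbf s$ with $\|\mathbf s\|_1 \le \Delta - 1$, together with the non-escape condition $F_{\pmb\zeta_0, \mathbf s}(J^k) \subseteq J$ and $-1 \notin F_{\pmb\zeta_0, \mathbf s}(J^k)$ for $\|\mathbf s\|_1 = \Delta$.

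Next, I would apply Theorem \ref{real-to-complex} directly to this $\pmb\zeta_0$: this produces the desired $\delta > 0$, depending only on $\Delta$ and $\pmb\zeta_0$, such that every $\pmb\zeta \in \mathbb{C}^3$ with $\|\pmb\zeta - \pmb\zeta_0\|_{\infty} < \delta$ satisfies complex contraction for $\Delta$. The witness region $Q = \widetilde\varphi^{-1}(\overline{I_\varepsilon})$ and potential $\widetilde\varphi$ are furnished by the theorem's proof, so no further work is needed at this step.

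From complex contraction, two independent consequences follow immediately. Lemma \ref{complex-to-zerofree} uses only condition 1 of complex contraction to give $Z_G^{\sigma_\Lambda}(\pmb\zeta) \neq 0$ for every graph $G$ of maximum degree at most $\Delta$ and every feasible configuration $\sigma_\Lambda$, which is the first bullet of the conclusion. Lemma \ref{lem:complex-to-correlation} then uses both conditions of complex contraction to deliver correlation decay for the $\Delta$-bounded system specified by $\pmb\zeta$, which is the second bullet.

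Finally, for the FPTAS claim, I would note that either of the two algorithms suffices. On one hand, the correlation decay just established combined with Weitz's algorithm (generalized to complex parameters as discussed in Section \ref{sec2}) directly yields an FPTAS for $Z_G(\pmb\zeta)$. On the other hand, the zero-freeness result, coupled with Lemma \ref{lem-bar}, gives an FPTAS via Barvinok's polynomial interpolation method, provided one picks a $\delta$-strip inside the zero-free region containing an easy evaluation point (e.g.\ $\lambda = 0$ after shifting coordinates). Since all four cited results are already in place, no new technical argument is needed; the only bookkeeping is to verify that the footnote's uniform-$\delta$ refinement holds, which follows because every step of the above chain depends only on $\Delta$ and on $\pmb\zeta_0$ through continuous quantities, so compactness of any $D \subseteq \mathcal{S}^\Delta_i$ allows one to take the infimum of the local radii and still obtain a positive $\delta$.
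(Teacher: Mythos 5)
Your proposal is correct and matches the paper's own proof, which is exactly the stated chaining: Lemma \ref{lem:c1c2} gives real contraction, Theorem \ref{real-to-complex} gives complex contraction in a $\delta$-neighborhood, and Lemmas \ref{complex-to-zerofree} and \ref{lem:complex-to-correlation} yield zero-freeness and correlation decay, hence an FPTAS via Weitz's or Barvinok's algorithm (the paper handles the $\mathcal{S}^{\Delta}_4$ case for Barvinok by swapping $\beta,\gamma$ and replacing $\lambda$ by $1/\lambda$, which is the precise form of your ``shift of coordinates''). The only minor imprecision is the footnote argument: over a compact $D$ one should extract a finite subcover of the balls of radius $\delta_{\pmb\zeta_0}/2$ rather than take a bare infimum of local radii, but this is routine.
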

\begin{remark}
In order to apply Barvinok's algorithm, by Lemma \ref{lem-bar}, we need to make sure that the zero-free regions contain $\lambda=0$ (an easy computing point).
This is true for $\mathcal{S}^{\Delta}_1$, $\mathcal{S}^{\Delta}_2$ and $\mathcal{S}^{\Delta}_3$. For parameters in $\mathcal{S}^{\Delta}_4$, we will reduce the problem to a case in $\mathcal{S}^{\Delta}_3$ by swapping $\beta$ and $\gamma$ and replacing $\lambda$ by $1/\lambda$. Then, one can apply Barvinok's algorithm.

\end{remark}

\section*{Acknowledgement}
The authors would like to thank Professor Jin-Yi Cai for  valuable discussions and suggestions on a preliminary version of this paper.
\renewcommand{\refname}{References}

\newpage
\appendix
\section{Appendix}
\subsection{Self-Avoiding Walk Tree}
We  adapt the description of Weitz's self-avoiding walk (SAW) tree construction from \cite{gl18} with slight modifications.
    Given a graph $G=(V,E)$ and a vertex $v\in V$,
 the SAW tree of $G$ at $v$ denoted by $T_{\text{SAW}}(G,v)$, is a tree with root $v$ that
enumerates all paths originating from $v$ in $G$.
Additional vertices closing cycles of $G$ are added as leaves of the tree  (see Figure \ref{fig:saw-tree} for an example).
Each vertex in $V$ of $G$ is mapped to some vertices in $V_{\text{SAW}}$ of $T_{\text{SAW}}(G,v)$.
For leaves in $V_{\text{SAW}}$ that close cycles,
a boundary condition is imposed.
The imposed spin of such a leaf depends on whether the orientation of the cycle is from a lower indexed vertex to a higher indexed vertex or conversely,
where the order of indices is arbitrarily chosen in $G$.
Vertex sets $S\subseteq \Lambda\subseteq V$ are mapped to  $S_{\text{SAW}}\subseteq\Lambda_{\text{SAW}}\subseteq V_{\text{SAW}}$ respectively,
and any configuration $\sigma_\Lambda\in\{0,1\}^\Lambda$ is mapped to a corresponding $\sigma_{\Lambda_{\text{SAW}}}\in\{0,1\}^{\Lambda_{\text{SAW}}}$.

	\begin{figure}[!hbtp]
\centering
	\includegraphics[scale=0.45]{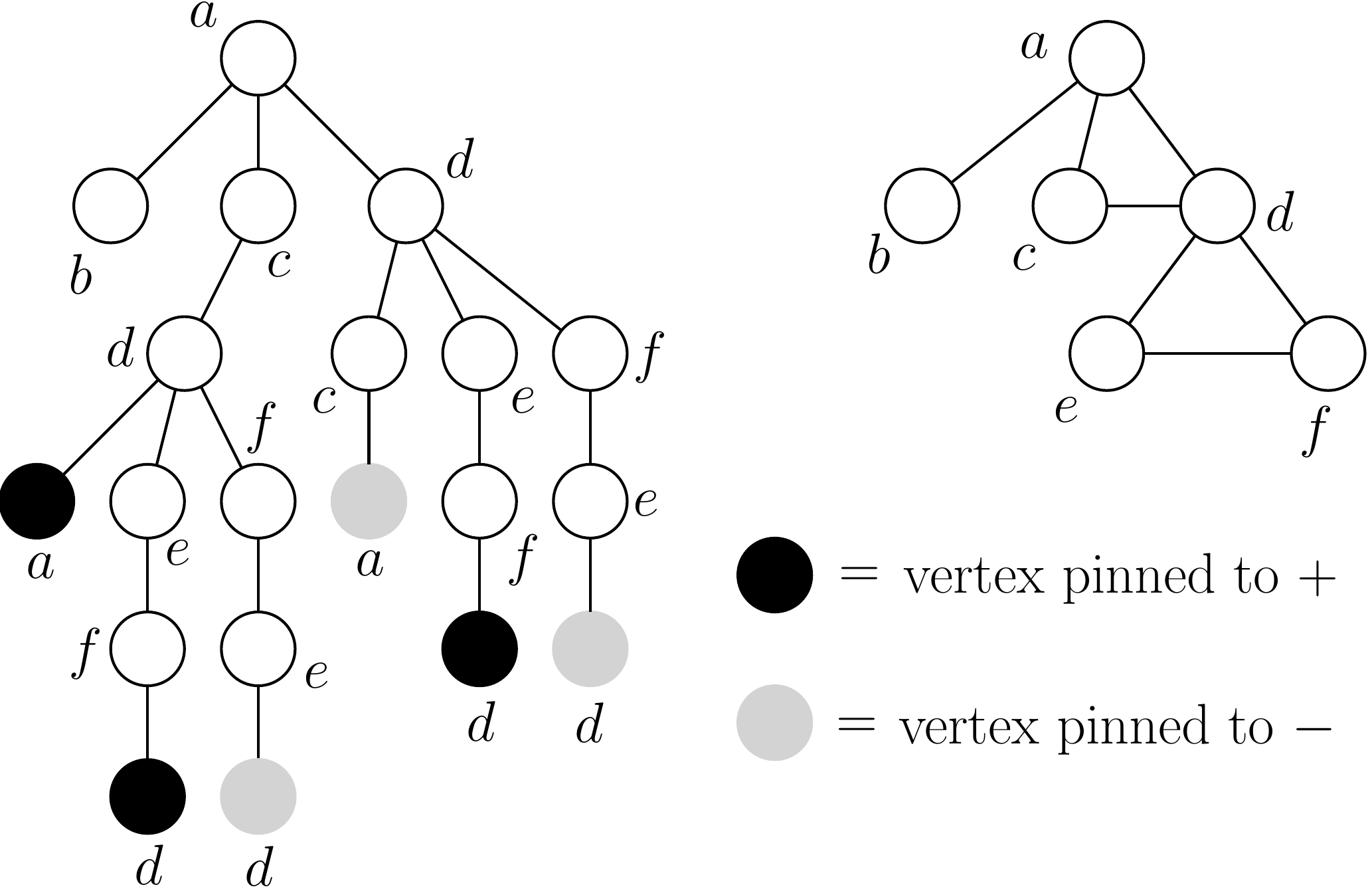}
	\caption{Weitz's SAW tree construction}
	\label{fig:saw-tree}
	\end{figure}
	
Here is the key result (Theorem 3.1 of Weitz~\cite{weitz}) for the SAW tree construction.
\begin{theorem}\label{prop:SAW}
  Let $G=(V,E)$ be a graph, $v\in V$ and $T=T_{\mathrm{SAW}}(G,v)$.
 Let $\sigma_\Lambda\in\{0,1\}^\Lambda$ be a configuration on $\Lambda\subseteq V$ where $v\notin \Lambda$,
  and $S\subseteq V$.
Then, we have
    $$R_{G,v}^{\sigma_\Lambda}(\pmb \zeta)=R_{T, v}^{\sigma_{\Lambda_{\rm{SAW}}}}(\pmb \zeta).$$
  Moreover, $\mathrm{dist}_G(v,S)=\mathrm{dist}_T(v,S_{\rm{SAW}})$, the maximum degree of $T$ is equal to the maximum degree of $G$,
   and the neighborhood of any vertex in $V_{\rm{SAW}}$ can be constructed in time proportional to the size of the neighborhood of the corresponding vertex in $V$.
\end{theorem}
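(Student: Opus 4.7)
The plan is to establish the ratio identity first --- which is the substantive content --- and then to read off the structural claims directly from the explicit construction of $T_{\mathrm{SAW}}(G,v)$.

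For the identity $R_{G,v}^{\sigma_\Lambda}=R_{T,v}^{\sigma_{\Lambda_{\mathrm{SAW}}}}$, I would induct on the cyclomatic complexity $|E(G)|-|V(G)|+c(G)$ of the component of $v$. The base case is when that component is already a tree; then $T_{\mathrm{SAW}}(G,v)$ is isomorphic to it with $\sigma_\Lambda$ transferred in the obvious way, and both sides are defined by the same tree recursion for $R^{\sigma_\Lambda}$ that was derived for $F_{\mathbf s}$ in Section~\ref{sec2}, so equality is immediate. For the inductive step I would invoke Weitz's \emph{graph-splitting lemma}: given a free vertex $u$ lying on some cycle through $v$ with neighbors $w_1,\ldots,w_d$ listed according to the fixed global vertex order, we replace $u$ by $d$ phantom copies $u_1,\ldots,u_d$, each incident only to the corresponding $w_i$, and impose boundary pinnings on all but the copy at position $k$ (where $k$ is determined by the index of $u$ in the ordering along the cycle) --- copies with index below $k$ are pinned to one spin and those above $k$ to the other, mirroring the orientation convention used to close cycles in the SAW tree. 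Expanding $Z_{G,v}^{\sigma_\Lambda,\pm}$ according to the spin of $u$ and collecting the edge weights $\beta,\gamma$ and vertex weight $\lambda$ contributed by $u$ shows that the ratio $R_{G,v}^{\sigma_\Lambda}$ is unchanged when $G$ is replaced by the split graph $G'$. Each split strictly decreases the cyclomatic complexity, so iterating the procedure along every self-avoiding walk from $v$ until no non-tree edge remains produces exactly $T_{\mathrm{SAW}}(G,v)$ with the mapped configuration $\sigma_{\Lambda_{\mathrm{SAW}}}$, and the inductive hypothesis closes the argument.

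The remaining structural claims follow from the construction. Vertices at depth $\ell$ in $T_{\mathrm{SAW}}(G,v)$ are in bijection with self-avoiding walks of length $\ell$ from $v$ in $G$, so every $w\in V$ appears at depth at most $\mathrm{dist}_G(v,w)$ through a shortest walk and at depth no less than $\mathrm{dist}_G(v,w)$ (since every walk has length at least the graph distance); hence $\mathrm{dist}_G(v,S)=\mathrm{dist}_T(v,S_{\mathrm{SAW}})$. The children of the tree node corresponding to a walk $v\to\cdots\to u$ are in bijection with $N_G(u)\setminus\{\text{predecessor on the walk}\}$, where any neighbor lying on the current walk is turned into a pinned leaf rather than expanded; consequently the degree of this node is at most $\deg_G(u)\le\Delta$, and the walk containing $v$ itself has degree at most $\deg_G(v)\le\Delta$. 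Enumerating the neighborhood of the tree node only requires scanning $N_G(u)$ once and checking each entry for membership in the (already stored) current walk, giving the stated $O(|N_G(u)|)$ construction time.

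I expect the main difficulty to lie in the splitting-lemma step: one must show not merely that \emph{some} ratio identity holds, but that the \emph{specific} pinning convention --- alternating between $+$ and $-$ boundary spins on $u_1,\ldots,u_d$ at positions determined by the fixed vertex ordering and the cycle orientation --- correctly matches the terms of the expanded partition function in $G$ with those in $G'$. This requires grouping configurations by the spin of $u$, tracking exactly which of the factors $\beta^{m_+}\gamma^{m_-}\lambda^{n_+}$ around $u$ are contributed in $G$, and verifying that pinning $u_i$ to $+$ for $i<k$ and to $-$ for $i>k$ (or the reverse, depending on orientation) produces the identical factor after the product over branches is reassembled. Once this bookkeeping is settled, the cyclomatic induction together with the trivial tree base case completes the proof of the identity, and the structural claims require only the remarks of the previous paragraph.
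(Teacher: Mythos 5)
The paper itself does not prove Theorem~\ref{prop:SAW}; it quotes it as Theorem~3.1 of Weitz~\cite{weitz}, so your proposal must stand on its own, and it does not: the ``graph-splitting lemma'' at its core is false as you state it. You split an arbitrary free vertex $u$ on a cycle into degree-one copies $u_1,\ldots,u_d$, keep a single copy free, pin the others, and claim $R_{G,v}^{\sigma_\Lambda}$ is unchanged. Test this on the triangle $\{v,a,b\}$ in the hard-core model ($\beta=0$, $\gamma=1$, uniform $\lambda$): there $R_{G,v}=\lambda/(1+2\lambda)$, while splitting $a$ into $a_1$ (adjacent to $v$, free) and $a_2$ (adjacent to $b$, pinned) yields $\lambda/(1+\lambda)^2$ if $a_2$ is pinned to $-$ and $\lambda/(1+\lambda)$ if $a_2$ is pinned to $+$; no pinning convention (nor any redistribution of fields among the copies) recovers $R_{G,v}$. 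Pinning all but one copy of a non-root vertex destroys precisely the correlation between the branches through $u$ that the partition function of $G$ retains, so the ``bookkeeping'' you flag as the main difficulty cannot be settled. The mismatch also shows structurally: your procedure leaves one free copy per split vertex, whereas in $T_{\mathrm{SAW}}(G,v)$ a vertex $u$ appears as a free vertex once for every self-avoiding walk from $v$ to $u$, and the pinned leaves are copies of the vertices at which walks return to an already visited vertex.

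The correct argument (Weitz's) splits only the root $v$ into $d$ degree-one copies $v_1,\ldots,v_d$, with the field at $v$ distributed among the copies (e.g.\ $\lambda^{1/d}$ each), and it does not assert that any single pinned configuration preserves the ratio; instead it uses the telescoping identity $R_{G,v}^{\sigma_\Lambda}=\prod_{i=1}^{d}R^{\tau_i}_{G',v_i}$, where $\tau_i$ pins $v_1,\ldots,v_{i-1}$ to $+$ and $v_{i+1},\ldots,v_d$ to $-$. Each factor is a ratio at a degree-one vertex, to which the recursion of Section~\ref{sec2} applies, and one then inducts on the number of free vertices (not on cyclomatic complexity); unrolling this recursion is exactly what generates $T_{\mathrm{SAW}}(G,v)$, the pinned copies of the successive roots becoming the leaves that close cycles with the stated orientation convention. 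Your base case and your final paragraph on distances, degrees, and construction time are fine and can be kept, but the inductive engine must be replaced by this root-splitting telescoping lemma; in addition, for complex $\pmb\zeta$ one must first know the ratios are well defined (the paper arranges this separately through feasible configurations and zero-freeness), which your proof never addresses.
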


\subsection{Proof of Lemma \ref{lem:c1c2}}
\begin{lemma}
Fix $\Delta\geq 3$.
For every $\pmb \zeta \in \mathcal{S}^{\Delta}_i (i\in [4])$, it satisfies real contraction for ${\Delta}$.
\end{lemma}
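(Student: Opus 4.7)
The plan is to handle the four sets separately: $\mathcal{S}_1^\Delta$ and $\mathcal{S}_2^\Delta$ reduce essentially to known arguments from \cite{zhangbai} and \cite{lly13} respectively, while $\mathcal{S}_3^\Delta$ and $\mathcal{S}_4^\Delta$ require new analysis. For $\mathcal{S}_1^\Delta$, I would use $\varphi(x) = \log x$ on a compact $J \subseteq (0,\infty)$ that contains $\lambda$ and is invariant under every $F_{\pmb\zeta,\mathbf{s}}$; a direct computation gives $\|\nabla F^\varphi_{\pmb\zeta,\mathbf{s}}\|_1 = \sum_{j=1}^{k} g(x_j)$ where $g(x) = (\beta\gamma-1)x/[(\beta x+1)(x+\gamma)]$ has global supremum $|\sqrt{\beta\gamma}-1|/(\sqrt{\beta\gamma}+1)$ over $x \geq 0$. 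The hypothesis $(\Delta-2)/\Delta < \sqrt{\beta\gamma} < \Delta/(\Delta-2)$ then makes $k$ copies of this maximum strictly less than $1$ for $k \leq \Delta-1$. For $\mathcal{S}_2^\Delta$, the up-to-$\Delta$ uniqueness condition is exactly the hypothesis under which \cite{lly13} construct a potential function and an invariant interval; transcribing their proof gives real contraction.

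The substantive new case is $\mathcal{S}_3^\Delta$. I would take $J = [0, M]$ with $M = \lambda t^{\Delta-1}$, so that $\lambda \in J$, $-\gamma \notin J$, and $-1 \notin J$ hold trivially. For condition~1, note that in the ferromagnetic regime $\beta\gamma > 1$ we always have $\beta^{s_1}\gamma^{-s_2} \leq t^{s_1+s_2}$ (the case $\gamma < 1$ forces $\beta > 1/\gamma > 1$, and then taking $s_1$ maximal is the worst case), while $\phi(x) = (\beta x+1)/(x+\gamma) \leq t$ on $[0,\infty)$. Therefore $F_{\pmb\zeta,\mathbf{s}}(J^k) \leq \lambda\, t^{s_1+s_2+k} \leq \lambda t^{\Delta-1} = M$. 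For condition~2 I would use $\varphi(x) = \log(x + c)$ for a small constant $c > 0$ (so that $\varphi$ is analytic on all of $[0, M]$), giving $\|\nabla F^\varphi_{\pmb\zeta,\mathbf{s}}\|_1 \leq (\Delta-1)\tilde g(M)$ where $\tilde g$ is a small perturbation of $g$. In the limit $c \to 0$ this is equivalent to the quadratic inequality
\[
\beta M^2 - \bigl[(\Delta-2)\beta\gamma - \Delta\bigr]\,M + \gamma \;>\; 0,
\]
and substituting $M = \lambda t^{\Delta-1}$ together with the hypothesis $\lambda < \gamma/\bigl[t^{\Delta-1}((\Delta-2)\beta\gamma-\Delta)\bigr]$ gives $\bigl[(\Delta-2)\beta\gamma-\Delta\bigr]M < \gamma$, so the quadratic simplifies to $\beta M^2 + (\text{positive})$ and is positive; choosing $c$ small enough preserves strict inequality by continuity.

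For $\mathcal{S}_4^\Delta$ I would reduce to $\mathcal{S}_3^\Delta$ via the spin-flip symmetry $(\beta,\gamma,\lambda) \mapsto (\gamma,\beta,1/\lambda)$, under which $Z_G$ changes by a multiplicative monomial, each recursion ratio is inverted, and a good potential on the swapped side pulls back via $x \mapsto 1/x$. Since $r = \min\{1,1/\gamma\}$ in $\mathcal{S}_4^\Delta$ equals $1/\max\{1,\gamma\}$, the inequality $\lambda > ((\Delta-2)\beta\gamma-\Delta)/(\beta r^{\Delta-1})$ transforms exactly into the $\mathcal{S}_3^\Delta$ inequality for $(\gamma,\beta,1/\lambda)$. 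The chief obstacle is the calibration in $\mathcal{S}_3^\Delta$: $M$ must be small enough to yield a strictly contractive gradient yet large enough for invariance under every $F_{\pmb\zeta,\mathbf{s}}$; the choice $M = \lambda t^{\Delta-1}$ is essentially forced, and the constant $(\Delta-2)\beta\gamma-\Delta$ in the defining hypothesis of $\mathcal{S}_3^\Delta$ (and hence $\mathcal{S}_4^\Delta$) is precisely the slack in the governing quadratic that makes both requirements compatible.
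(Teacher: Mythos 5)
Your overall strategy matches the paper's for $\mathcal{S}_1^\Delta$ and $\mathcal{S}_3^\Delta$ (a logarithm-type potential on an interval whose right endpoint is $M=\lambda t^{\Delta-1}$, with the hypothesis on $\lambda$ forcing $\gamma/M>(\Delta-2)\beta\gamma-\Delta$). Your treatment of $\mathcal{S}_4^\Delta$ via the spin-flip symmetry $(\beta,\gamma,\lambda)\mapsto(\gamma,\beta,1/\lambda)$ is a correct alternative route: the $\mathcal{S}_4^\Delta$ inequality does transform exactly into the $\mathcal{S}_3^\Delta$ inequality for the swapped parameters, and interval and potential pull back under $x\mapsto 1/x$; the paper instead handles $\mathcal{S}_4^\Delta$ inside the same computation as $\mathcal{S}_3^\Delta$ by using the \emph{lower} endpoint of $J$, via $\beta e^{x_i}\ge \beta\lambda r^{\Delta-1}>(\Delta-2)\beta\gamma-\Delta$, which avoids the change of variables.

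There is, however, one step in your $\mathcal{S}_3^\Delta$ argument that is false as stated: the bound $\left\|\nabla F^\varphi_{\pmb\zeta,\mathbf{s}}\right\|_1\le(\Delta-1)\tilde g(M)$ presumes the per-coordinate quantity $g(y)=\frac{(\beta\gamma-1)y}{(\beta y+1)(y+\gamma)}$ is maximized over $[0,M]$ at $y=M$. But $g$ increases only up to $y=\sqrt{\gamma/\beta}$ and then decreases, and $\mathcal{S}_3^\Delta$ permits $M>\sqrt{\gamma/\beta}$: for $\Delta=3$, $\beta=\gamma=\sqrt{3.1}$, $\lambda=5$ (admissible since $\gamma/(t^{2}[(\Delta-2)\beta\gamma-\Delta])\approx 5.7$) one gets $M=15.5$ with $g(1)\approx 0.28>g(M)\approx 0.07$, so your claimed inequality fails at interior points of $J^k$. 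The conclusion still holds, but the argument must be made pointwise rather than at the endpoint: for every $y\in(0,M]$ one has $\gamma/y\ge\gamma/M>(\Delta-2)\beta\gamma-\Delta$, hence $g(y)\le\frac{\beta\gamma-1}{\gamma/M+1+\beta\gamma}<\frac{1}{\Delta-1}$ uniformly — equivalently, your quadratic $\beta y^2-[(\Delta-2)\beta\gamma-\Delta]y+\gamma>0$ must be (and easily is) verified for all $y\in[0,M]$, not only $y=M$, since $[(\Delta-2)\beta\gamma-\Delta]\,y\le[(\Delta-2)\beta\gamma-\Delta]\,M<\gamma$. This is exactly the computation in the paper. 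Two smaller omissions: you dispose of $\mathcal{S}_2^\Delta$ by citing \cite{lly13}, while the paper actually carries out the nontrivial verification (the potential $\varphi(x)=\int_1^x [y(\beta y+1)(y+\gamma)]^{-1/2}\,dy$, a Cauchy--Schwarz/AM--GM symmetrization, a fixed-point analysis tying the contraction margin to the constant $c<1$ in up-to-$\Delta$ uniqueness, and a separate invariant interval for $\beta=0$); and you do not address the boundary case $\lambda=0$ (allowed in $\mathcal{S}_1,\mathcal{S}_2,\mathcal{S}_3$), where $\log x$ on the degenerate interval is unusable and the paper switches to the trivial potential $\varphi(x)=x$ on $J=[0,1]$; your $\log(x+c)$ device would also cover it, but this needs to be said.
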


\begin{proof}
\vspace{-1ex}
{\bf Case 1:} $\lambda =0$.

\vspace{1ex}
We first consider a trivial case that $\lambda =0$, in which 
$F_{\pmb \zeta, \mathbf{s}}(\mathbf{x})\equiv 0$.
We pick $J=[0, 1]$ and the potential function $\varphi(x)=x$. 
Clearly, $\varphi$ is analytic  on $J$ and $\varphi'(x)=1\neq 0$ for all $x\in J$.
Also, we know $\lambda=0 \in J$, $-\gamma \notin J$ and $-1 \notin J$.
Moreover, for every $\mathbf{s}\in \mathbb{N}^3$ and all $\mathbf{x}\in J^k$,
we have $F^\varphi_{\pmb \zeta, \bf s}(\mathbf{x})=F_{\pmb \zeta, \bf s}(\mathbf{x})\equiv 0 \in J$  and $\left\|\nabla F^\varphi_{\pmb \zeta, \bf s}(\mathbf{x})\right\|_1\equiv 0$.

Thus, the function $\varphi$ defined on $J$ is a good potential function for $\pmb \zeta$.

\noindent{\bf Case 2:} $\lambda\neq 0$ and $\pmb\zeta \in \mathcal{S}_1^\Delta$.

\vspace{1ex}

Let $r=\min\{1, \beta, {1}/{\gamma}\}\leq 1$ and $t=\max\{1, \beta, {1}/{\gamma}\}\geq 1$.
We pick the interval $J=[\lambda r^{\Delta-1}, \lambda t^{\Delta-1}]$ and the potential function $\varphi=\log(x)$. Clearly, $\varphi$ is analytic on $J$ and $\varphi'(x)\neq 0$ for all $x\in J$.
Also, we know that $\lambda \in J$, $-\gamma \notin J$ and $-1 \notin J$ and $-1\notin F_{\pmb\zeta,\mathbf{s}}(J^k)$ for every $\|\mathbf{s}\|_1=\Delta$. Since $\beta>0$ and $\gamma>0$, for any $x>0$, we have 
$$r\leq \min\{\beta, 1/\gamma\}\leq  \frac{\beta x+1}{x+\gamma}\leq \max\{\beta, 1/\gamma\}\leq t.$$
Thus, for any $\mathbf{x}\in J^k$, we have
$$F_{\pmb \zeta, \mathbf{s}}(\mathbf{x})=\lambda \beta^{s_1}\gamma^{-s_2}\prod^k_{i=1}\left(\frac{\beta x_i+1}{x_i+\gamma}\right)\in \left[\lambda r^{\|\mathbf{s}\|_1}, \lambda t^{\|\mathbf{s}\|_1}\right]\subseteq \left[\lambda r^{\Delta-1}, \lambda t^{\Delta-1}\right].$$
Hence, $F_{\pmb \zeta, \mathbf{s}}(J^k)\subseteq J$ for every $\|\mathbf{s}\|_1\leq \Delta-1$.

Let $I=\varphi(J)$. Then, we consider the gradient $\nabla F^\varphi_{\pmb \zeta, \bf s}(\mathbf{x})$ for every  $\|\mathbf{s}\|_1\leq \Delta-1$ and  all $\mathbf{x}\in I^k$.
We have $$F^\varphi_{\pmb \zeta, \bf s}(\mathbf{x})=\log \lambda +s_1 \log \beta -s_2 \log \gamma+\sum_{i=1}^{k}{\log\left(\frac{\beta e^{x_i}+1}{e^{x_i}+\gamma}\right)}.$$
Thus, we have 
$$\left|\frac{\partial F^\varphi_{\pmb \zeta, \bf s}}{\partial x_i}\right|=\frac{|1-\beta\gamma|}{\beta e^{x_i}+\gamma e^{-x_i}+1+\beta\gamma}\le\frac{|1-\beta\gamma|}{2\sqrt{\beta\gamma}+1+\beta\gamma}=\frac{|1-\sqrt{\beta\gamma}|}{1+\sqrt{\beta\gamma}}.$$
Here $\beta e^{x_i}+\gamma e^{-x_i}\geq 2\sqrt{\beta\gamma}$ due to the AM-GM inequality.
Since $\frac{\Delta-2}{\Delta}<\sqrt{\beta\gamma}<\frac{\Delta}{\Delta-2}$, we have $\frac{|1-\sqrt{\beta\gamma}|}{1+\sqrt{\beta\gamma}}<\frac{1}{\Delta-1}$. We can pick an positive $\eta$ such that $\left|\frac{\partial F^\varphi_{\pmb \zeta, \bf s}}{\partial x_i}\right|\le\frac{1-\eta}{\Delta-1}$.
Therefore, we have 
$$\left\|\nabla F^\varphi_{\pmb \zeta, \bf s}(\mathbf{x})\right\|_1\leq\sum_{i=1}^k\left|\frac{\partial F^\varphi_{\pmb \zeta, \bf s}}{\partial x_i}\right|\leq\frac{(1-\eta)k}{\Delta-1}\leq\frac{(1-\eta)\|\mathbf{s}\|_1}{\Delta-1}\leq 1-\eta$$
for  every  $\|\mathbf{s}\|_1\leq \Delta-1$ and all $\mathbf{x}\in I^k$.

\vspace{1.5ex}

\noindent{\bf Case 3:} $\lambda\neq 0$ and $\pmb \zeta \in \mathcal{S}_3^\Delta$ or $\mathcal{S}_{4}^\Delta$.

\vspace{1ex}
Since $\beta \gamma>\frac{\Delta}{\Delta-2}>1$, we have $1/\gamma<\beta$.
We still pick the interval $J=[\lambda r^{\Delta-1}, \lambda t^{\Delta-1}]$ where $r=\min\{1, {1}/{\gamma}\}$ and $t=\max\{1, \beta\}$ and the the potential function $\varphi=\log(x)$.
By the same argument as in case 2, 
we know condition 1 of real contraction is satisfied. 
We need to bound the gradient  $\nabla F^\varphi_{\pmb \zeta, \bf s}(\mathbf{x})$ for  $\|\mathbf{s}\|_1\leq \Delta-1$ and  $\mathbf{x}\in I^k$ where $I=\varphi(J)$. 
Note that when $x_i\in I$, we have $e^{x_i}=\varphi^{-1}(x_i)\in J$, and $e^{-x_i}\in [\frac{1}{\lambda t^{\Delta-1}}, \frac{1}{\lambda r^{\Delta-1}}]$.

If $\pmb \zeta \in \mathcal{S}_3^\Delta$, then
we have 
$$\left|\frac{\partial F^\varphi_{\pmb \zeta, \bf s}}{\partial x_i}\right|=\frac{\beta\gamma-1}{\beta e^{x_i}+\gamma e^{-x_i}+1+\beta\gamma}\le\frac{\beta\gamma-1}{\frac{\gamma}{\lambda t^{\Delta-1}}+1+\beta\gamma}< \frac{{\beta\gamma}-1}{(\Delta-2)\beta\gamma-\Delta+1+{\beta\gamma}}=\frac{1}{\Delta-1}.$$
Otherwise, $\pmb \zeta \in \mathcal{S}_4^\Delta$. We have
$$\left|\frac{\partial F^\varphi_{\pmb \zeta, \bf s}}{\partial x_i}\right|=\frac{\beta\gamma-1}{\beta e^{x_i}+\gamma e^{-x_i}+1+\beta\gamma}\le\frac{\beta\gamma-1}{{\beta\lambda}{ r^{\Delta-1}}+1+\beta\gamma}< \frac{{\beta\gamma}-1}{(\Delta-2)\beta\gamma-\Delta+1+{\beta\gamma}}=\frac{1}{\Delta-1}.$$
Thus, in both cases, there exists some $\eta>0$ such that $\left\|\nabla F^\varphi_{\pmb \zeta, \bf s}(\mathbf{x})\right\|_1\leq1-\eta$ 
for  every  $\|\mathbf{s}\|_1\leq \Delta-1$ and all $\mathbf{x}\in I^k$. 

\vspace{1.5ex}

\noindent{\bf Case 4:} $\lambda\neq 0$ and $\pmb \zeta \in \mathcal{S}_2^\Delta$.

\vspace{1ex}

If $\beta>0$, we still pick the same interval $J$ as in case 2. We know that condition 1 of real contraction is satisfied. 

For the case that $\beta =0$, we pick  the interval $J=[\ell, m]$, where $m=\max\{\lambda,\lambda/\gamma^{\Delta-1}\}>0$ and $\ell=\min\{\lambda,\lambda/(m+\gamma)^{\Delta-1}\}>0$. 
Clearly, we have  $\lambda\in J,-\gamma\notin J,-1\notin J$, and $-1\notin F_{\pmb\zeta,\mathbf{s}}(J^k)$ for every $\|\mathbf{s}\|_1=\Delta$.
Recall that when $\beta=0$, we only consider recursion functions $F_{\pmb\zeta,\mathbf{s}}$ where $s_1=0$. 
Then for every $\mathbf{s}$ where $\|\mathbf{s}\|_1\leq \Delta-1$ and $s_1=0$ and all $\mathbf{x}\in J^k$, we have 
$$\ell\leq \frac{\lambda}{(m+\gamma)^{s_2+k}} \leq F_{\pmb\zeta,\mathbf{s}}(\mathbf{x})=\lambda\gamma^{-s_2}\prod_{i=1}^{k}{\left(\frac{1}{x_i+\gamma}\right)}\leq \frac{\lambda}{\gamma^{s_2+k}}\leq m.$$
Hence $F_{\pmb\zeta,\mathbf{s}}(J^k)\subseteq J$ for every $\|\mathbf{s}\|_1\le\Delta-1$.

Now, we pick the potential function $$\varphi(x)=\int^x_1\frac{1}{\sqrt{y(\beta y+1)(y+\gamma)}}dy.$$
Clearly, $\varphi$ is analytic on $J$ and $\varphi'(x)\neq 0$ for all $x\in J$.
We consider the gradient $\nabla F^\varphi_{\pmb \zeta, \bf s}(\mathbf{x})$.
By calculation, we have $\left\|\nabla F^\varphi_{\pmb\zeta,\mathbf{s}}(\mathbf{x})\right\|_1=H_{\pmb\zeta,\mathbf{s}}(\pmb{\varphi}^{-1}(\mathbf{x}))$ where ${\pmb \varphi^{-1}(\mathbf x)}=(\varphi^{-1}(x_1), \ldots, \varphi^{-1}(x_k))$ and
\begin{align*}
H_{\pmb\zeta,\mathbf{s}}(\mathbf{x})=(1-\beta\gamma)\cdot\sqrt{\frac{F_{\pmb\zeta,\mathbf{s}}(\mathbf{x})}{\left(\beta F_{\pmb\zeta,\mathbf{s}}(\mathbf{x})+1\right)\left(F_{\pmb\zeta,\mathbf{s}}(\mathbf{x})+\gamma\right)}}\cdot\sum_{i=1}^{k}{\sqrt{\frac{x_i}{(\beta x_i+1)(x_i+\gamma)}}}.
\end{align*}
We want to bound $\left\|\nabla F^\varphi_{\pmb\zeta,\mathbf{s}}(\mathbf{x})\right\|_1$ for every $\|\mathbf{s}\|_1\le\Delta-1$ and all $\mathbf{x}\in I^k$ where $I=\varphi(J)$, which is equivalent to bound $H_{\pmb\zeta,\mathbf{s}}(\pmb{\varphi}^{-1}(\mathbf{x}))$ for all $\pmb{\varphi}^{-1}(\mathbf{x})\in J^k\subseteq\mathbb{R}_+^k$.
We will show that there exists some $\eta>0$ such that $H_{\pmb\zeta,\mathbf{s}}(\mathbf{x})\leq 1- \eta$ for every $\|\mathbf{s}\|_1\leq \Delta-1$ and  all $\mathbf{x}\in \mathbb{R}_{+}^k$. 
This will finish the proof. Note that for any $\mathbf{s}=(s_1,s_2,k)$ with $k=0$, we have $\nabla F_{\pmb\zeta,\mathbf{s}}^\varphi(\mathbf{x})\equiv0$ and we are done. Thus, we may assume that $k\geq 1$. 
We prove our claim in two steps.

\emph{Step 1.} Let \begin{align*}
h_d(x)=d(1-\beta\gamma)\cdot\sqrt{\frac{x}{(\beta x+1)(x+\gamma)}}\cdot\sqrt{\frac{\lambda\left(\frac{\beta x+1}{x+\gamma}\right)^d}{\left(\beta\lambda\left(\frac{\beta x+1}{x+\gamma}\right)^d+1\right)\left(\lambda\left(\frac{\beta x+1}{x+\gamma}\right)^d+\gamma\right)}}
\end{align*} be the symmetrized univariate version of $H_{\pmb\zeta,\mathbf{s}}(\mathbf{x})$ where $d=\|\mathbf{s}\|_1\leq \Delta-1$.
We show that there exists some $\hat{x}>0$ such that $H_{\pmb\zeta,\mathbf{s}}(\mathbf{x})\le h_d(\hat{x})$.

Consider some $\mathbf{x}=(x_1, \ldots, x_k)\in \mathbb{R}_+^k$.
For every $x_i>0$,  let $z_i=\frac{\beta x_i+1}{x_i+\gamma}$, and we have $z_i\in\left(\beta,\frac{1}{\gamma}\right)$. 
Also, let $z_{k+1}=\cdots=z_{k+s_1}=\beta$ and $z_{k+s_1+1}=\cdots=z_d=\frac{1}{\gamma}$. 
Then, we have
\begin{align*}
H_{\pmb\zeta,\mathbf{s}}(\mathbf{x})=\sqrt{\frac{\lambda\prod_{i=1}^{d}{z_i}}{\left(\beta\cdot\lambda\prod_{i=1}^{d}{z_i}+1\right)\left(\lambda\prod_{i=1}^{d}{z_i}+\gamma\right)}}\cdot\sum_{i=1}^{d}{\sqrt{\left(\frac{1}{z_i}-\gamma\right)(z_i-\beta)}}.
\end{align*}
By Cauchy-Schwarz inequality and AM-GM inequality, we have
\begin{align*}
\sum_{i=1}^{d}{\sqrt{\left(\frac{1}{z_i}-\gamma\right)(z_i-\beta)}}\le d\sqrt{1+\beta\gamma-\frac{1}{d}\sum_{i=1}^{d}{\left(\gamma z_i+\frac{\beta}{z_i}\right)}}\le d\sqrt{1+\beta\gamma-\gamma\left(\prod_{i=1}^{d}{z_i}\right)^\frac{1}{d}-\beta\left(\prod_{i=1}^{d}{z_i}\right)^{-\frac{1}{d}}}.
\end{align*}
Let $\hat{z}=\left(\prod_{i=1}^{d}{z_i}\right)^\frac{1}{d}$. Since  $z_i \in\left(\beta,\frac{1}{\gamma}\right)$ for $i\in [k]$ and $k\geq 1$, we know $\hat{z}\in\left(\beta,\frac{1}{\gamma}\right)$. 
Then, we have
\begin{align*}
H_{\pmb\zeta,\mathbf{s}}(\mathbf{x})\le\sqrt{\frac{\lambda\hat{z}^d}{\left(\beta\lambda\hat{z}^d+1\right)\left(\lambda\hat{z}^d+\gamma\right)}}\cdot d\sqrt{1+\beta\gamma-\gamma\hat{z}-\frac{\beta}{\hat{z}}}=d\sqrt{\frac{\lambda\hat{z}^d\left(\frac{1}{\hat{z}}-\gamma\right)(\hat{z}-\beta)}{\left(\beta\lambda\hat{z}^d+1\right)\left(\lambda\hat{z}^d+\gamma\right)}}.
\end{align*}
Let $\hat{x}=\frac{1-\gamma\hat{z}}{\hat{z}-\beta}$. Then we have $\hat{x}>0$, and
\begin{align*}
H_{\pmb\zeta,\mathbf{s}}(\mathbf{x})\le k(1-\beta\gamma)\cdot\sqrt{\frac{\hat{x}}{(\beta\hat{x}+1)(\hat{x}+\gamma)}}\cdot\sqrt{\frac{\lambda\left(\frac{\beta \hat{x}+1}{\hat{x}+\gamma}\right)^d}{\left(\beta\lambda\left(\frac{\beta\hat{x}+1}{\hat{x}+\gamma}\right)^d+1\right)\left(\lambda\left(\frac{\beta\hat{x}+1}{\hat{x}+\gamma}\right)^d+\gamma\right)}}=h_d(\hat{x}).
\end{align*}

\emph{Step 2.} We show that there exists some $\eta>0$ such that $h_d(x)\le 1-\eta$  for every $1\le d\le\Delta-1$ and all $x>0$.

We characterize the point $x$ at which $h_d(x)$ achieves its maximum. Recall that we define $f_d(x)=\lambda\left(\frac{\beta x+1}{x+\gamma}\right)^d$ and we have $f'_d(x)=\frac{d(\beta\gamma-1)f_d(x)}{(\beta x+1)(x+\gamma)}$.
Consider the derivative of $h_d(x)$, we have
\begin{align*}
h'_d(x)=\frac{d(1-\beta\gamma)g'_d(x)}{2\sqrt{g_d(x)}},
\end{align*}
where $$g_d(x)=\frac{xf_d(x)}{(\beta x+1)(x+\gamma)(\beta f_d(x)+1)(f_d(x)+\gamma)},$$ and its derivative 
\begin{align*}
g'_d(x)=\frac{d(1-\beta\gamma)xf_d(x)}{(\beta x+1)^2(x+\gamma)^2(\beta f_d(x)+1)(f_d(x)+\gamma)}\cdot\left(\frac{\gamma-\beta x^2}{d(1-\beta\gamma)x}-\frac{\gamma-\beta f_d(x)^2}{(\beta f_d(x)+1)(f_d(x)+\gamma)}\right).
\end{align*}
We want to solve $h'_d(x)=0$. 
Since $1-\beta\gamma> 0$, it is equivalent to solve the equation \begin{align}\label{zero}
\frac{\gamma-\beta x^2}{d(1-\beta\gamma)x}=\frac{\gamma-\beta f_d(x)^2}{(\beta f_d(x)+1)(f_d(x)+\gamma)}.
\end{align}
Note that as $x$ increases from $0$ to $+\infty$, the function $\frac{\gamma-\beta x^2}{d(1-\beta\gamma)x}$  strictly decreases from $+\infty$ to $-\infty$. On the other hand, the function $\frac{\gamma-\beta f_d(x)^2}{(\beta f_d(x)+1)(f_d(x)+\gamma)}$ strictly increases since $f_d(x)$ strictly decreases as $x$ increases.
Therefore equation (\ref{zero})
has a unique solution in $(0,+\infty)$, denoted by $x_d$. Furthermore, we have
\begin{align}\label{derivative}
g'_d(x)
\begin{cases}
>0 & \text{if $0<x<x_d$,}\\
=0 & \text{if $x=x_d$,}\\
<0 & \text{if $x>x_d$.}
\end{cases}
\end{align}
Clearly the sign of $h'_d(x)$ is the same as that of $g'_d(x)$. Hence $h_d(x)$ achieves its maximum when $x=x_d$.
Then for any $x>0$, we have
\begin{align}\label{e1}
h_d(x)\le h_d(x_d)&=d(1-\beta\gamma)\cdot\sqrt{\frac{x_df_d(x_d)}{(\beta x_d+1)(x_d+\gamma)(\beta f_d(x_d)+1)(f_d(x_d)+\gamma)}}\notag\\&=\sqrt{\frac{d(1-\beta\gamma)f_d(x_d)(\gamma-\beta x_d^2)}{(\beta x_d+1)(x_d+\gamma)(\gamma-f_d(x_d)^2)}}.
\end{align}
Here, we substitute $(\beta f_d(x_d)+1)(f_d(x_d)+\gamma)$ by $\frac{d(1-\beta\gamma)\left(\gamma-\beta f_d(x_d)^2\right)x_d}{\gamma-\beta x_d^2}$ according to~\eqref{zero}.
Consider the function 
\begin{align*}
p_d(x):=\sqrt{\frac{d(1-\beta\gamma)f_d(x)(\gamma-\beta x^2)}{(\beta x+1)(x+\gamma)(\gamma-f_d(x)^2)}}.
\end{align*}
Then, we have $h_d(x)\leq p_d(x_d)$ for any $x>0$.
Now, we claim that for any $1\le d\le\Delta-1$,
\begin{align}\label{e2}
p_d(x_d)\le p_d(\hat{x}_d),
\end{align}
where $\hat{x}_d$ is the unique positive fixed point of $f_d(x)$. To prove the above claim, we only need to show that $p_d(x)$ is decreasing if $\hat{x}_d\le x_d$ and increasing if $\hat{x}_d>x_d$.
\begin{itemize}
\item If $\hat{x}_d\le x_d$, we will  show that $p_d(x)$ is decreasing on the range $[\hat{x}_d,x_d]$. By~\eqref{derivative}, we know  $g'_d(\hat{x}_d)\ge0$. Note that
\begin{align*}
g'_d(\hat{x}_d)=\frac{d(1-\beta\gamma)(\gamma-\beta\hat{x}_d^2)\hat{x}_d^2}{(\beta\hat{x}_d+1)^3(\hat{x}_d+\gamma)^3}\cdot\left(\frac{1}{d(1-\beta\gamma)\hat{x}_d}-\frac{1}{(\beta\hat{x}_d+1)(\hat{x}_d+\gamma)}\right).
\end{align*}
Since $\pmb \zeta$ is up-to-$\Delta$ unique, we have $\left|f'_d(\hat{x}_d)\right|=\frac{d(1-\beta\gamma)\hat{x}_d}{(\beta\hat{x}_d+1)(\hat{x}_d+\gamma)}<1$ and hence $\frac{1}{d(1-\beta\gamma)\hat{x}_d}-\frac{1}{(\beta\hat{x}_d+1)(\hat{x}_d+\gamma)}>0$. 
Thus, we have $\gamma-\beta\hat{x}_d^2\ge0$.  Also, since $f_d(x)$  strictly decreases as $x$ increases, we have
\begin{align*}
\gamma-\beta f_d(x_d)^2\ge\gamma-\beta f_d(\hat{x}_d)^2=\gamma-\beta\hat{x}_d^2\ge0.
\end{align*}
Then by equality~\eqref{zero}, $\gamma-\beta x_d^2$ and $\gamma-\beta f_d(x_d)^2$ must be both  positive or negative. Thus we have $\gamma-\beta x_d^2\ge0$. Then both $\frac{\gamma-\beta x^2}{(\beta x+1)(x+\gamma)}$ and $\frac{f_d(x)}{\gamma-\beta f_d(x)^2}$ are positive and strictly decreasing on $[\hat{x}_d,x_d]$.
Thus, $p_d(x)$ is strictly decreasing on $[\hat{x}_d,x_d]$, and hence $p_d(x_d)\le p_d(\hat{x}_d)$.
\item Otherwise, $\hat{x}_d>x_d$. By a similar argument as the above,
we have $\gamma-\beta f_d(\hat{x}_d)^2=\gamma-\beta\hat{x_d}^2<0,\gamma-\beta f_d(x_d)^2<0$ and $\gamma-\beta x_d^2<0$.
Hence both $\frac{\gamma-\beta x^2}{(\beta x+1)(x+\gamma)}$ and $\frac{f_d(x)}{\gamma-\beta f_d(x)^2}$ are negative and strictly decreasing on $[x_d,\hat{x}_d]$ and hence their product is positive and increasing on $[x_d,\tilde{x}_d]$. Thus we have $p_d(x)$ is increasing, and $p_d(x_d)\le p_d(\hat{x}_d)$.
\end{itemize}
Combining	~\eqref{e1} and~\eqref{e2}, we have  for all $x>0$,
\begin{align*}
h_d(x)\le h_d(x_d)=p_d(x_d)\le p_d(\hat{x}_d)=\sqrt{\frac{d(1-\beta\gamma)\hat{x}_d}{(\beta\hat{x}_d+1)(\hat{x}_d+\gamma)}}=\sqrt{\left|f'_d(\hat{x}_d)\right|}.
\end{align*}
Since $\pmb \zeta$ is up-to-$\Delta$ unique, there exists a constant $0<c<1$ such that $\left|f'_d(\hat{x}_d)\right|\le c$ for every integer $1\le d\le\Delta-1$.
Let $\eta=1-\sqrt{c}>0$.
Then, we have $h_d(x)\leq 1-\eta$ for all $x>0$.
\end{proof}

\subsection{Proof of Lemma \ref{complex-to-zerofree}}
\begin{lemma}
If $\pmb \zeta$ satisfies complex contraction for $\Delta$, then $Z_{G}^{\sigma_{\Lambda}}(\pmb \zeta)\neq 0$ for any graph $G$ of degree at most $\Delta$ and any feasible configuration $\sigma_{\Lambda}$.
\end{lemma}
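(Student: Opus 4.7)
\medskip

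\noindent\textbf{Proof proposal.} The plan is a double induction that combines Weitz's SAW-tree reduction with a depth induction on the SAW tree driven by condition~1 of complex contraction. First I would reduce to a statement about ratios. For any graph $G$ of degree $\leq\Delta$ and feasible configuration $\sigma_\Lambda$, induct on the number of free vertices. If every vertex is pinned, then $Z_G^{\sigma_\Lambda}(\pmb\zeta)$ is a single monomial $\beta^{m_+}\gamma^{m_-}\lambda^{n_+}$ which is nonzero by feasibility (and by $\gamma\neq 0$). Otherwise, pick a free vertex $v$. Pinning $v$ to $-$ keeps the configuration feasible, so by the inductive hypothesis $Z_{G,v}^{\sigma_\Lambda,-}(\pmb\zeta)\neq 0$. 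Hence $R_{G,v}^{\sigma_\Lambda}$ is well-defined and $Z_G^{\sigma_\Lambda}=Z_{G,v}^{\sigma_\Lambda,-}\cdot(1+R_{G,v}^{\sigma_\Lambda})$, so it suffices to show $R_{G,v}^{\sigma_\Lambda}\neq -1$.

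Next I would invoke Weitz's theorem (Theorem~\ref{prop:SAW}) to replace $R_{G,v}^{\sigma_\Lambda}$ by $R_{T,v}^{\sigma_{\Lambda_{\mathrm{SAW}}}}$ on the SAW tree $T=T_{\mathrm{SAW}}(G,v)$, which still has maximum degree $\Delta$ and whose boundary configuration remains feasible. The degenerate cases $\lambda=0$ and $\beta=0$ are handled separately using the conventions already set up in Section~\ref{sec2}: when $\lambda=0$ we have $R_{G,v}^{\sigma_\Lambda}\equiv 0\neq -1$; when $\beta=0$, children pinned to $+$ force the ratio at the parent to vanish, and we view such parents as pinned to $-$.

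The core step is then a depth induction on $T$ proving the following claim: for every subtree $T'$ rooted at $u$ (which is not the overall root, hence has at most $\Delta-1$ children in $T$), and for every feasible boundary condition $\tau$ on $T'$, the ratio $R_{T',u}^{\tau}(\pmb\zeta)$ is well-defined and lies in $Q$. The base case is a leaf, where the ratio equals $\lambda\in Q$ (or is forced by pinning, which gives a value in $Q$ by feasibility since $-\gamma\notin Q$ and $-1\notin Q$ guarantee well-definedness). For the inductive step, the children ratios lie in $Q$ by hypothesis, so the recursion yields $R_{T',u}^{\tau}=F_{\pmb\zeta,\mathbf s}(\mathbf R)$ with $\|\mathbf s\|_1\leq \Delta-1$; condition~1 of complex contraction then places this value in $Q$, and $-\gamma\notin Q$ ensures the denominators in the recursion are nonzero. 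Finally, at the overall root $v$, which may have up to $\Delta$ children in $T$, we instead apply the $\|\mathbf s\|_1=\Delta$ clause of condition~1 to conclude $R_{T,v}^{\sigma_{\Lambda_{\mathrm{SAW}}}}\in F_{\pmb\zeta,\mathbf s}(Q^k)$, which excludes $-1$. This closes the induction and yields $Z_G^{\sigma_\Lambda}(\pmb\zeta)\neq 0$.

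The main obstacle I expect is bookkeeping rather than conceptual: one must carefully verify that pinning preserves feasibility at every stage of both inductions, that Weitz's construction carries feasible boundary conditions on $G$ to feasible boundary conditions on $T$ (so leaves closing cycles get consistent pinned spins), and that the $\lambda=0$ and $\beta=0$ cases really do reduce to the recursion framework rather than requiring a separate argument. The other potentially subtle point is ensuring every denominator encountered along the recursion is nonzero; this is precisely why the definition of complex contraction insists on $-\gamma\notin Q$ in addition to $-1\notin Q$, so the condition is tailored to make the induction go through cleanly. Condition~2 of complex contraction plays no role here; it is reserved for the correlation-decay statement of Lemma~\ref{lem:complex-to-correlation}.
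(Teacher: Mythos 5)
Your proposal is correct and follows essentially the same route as the paper's proof: induction on the number of free vertices, pinning a free vertex to $-$ to make $R_{G,v}^{\sigma_\Lambda}$ well-defined, Weitz's SAW-tree recursion, and condition~1 of complex contraction (the $\|\mathbf s\|_1\leq\Delta-1$ clause for subtrees, the $\|\mathbf s\|_1=\Delta$ clause at the root) to rule out $R=-1$. The only cosmetic difference is that the paper disposes of the $\lambda=0$ case up front via $Z_G^{\sigma_\Lambda}=\gamma^{|E|}\neq 0$ rather than through the ratio convention, which is immaterial.
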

\begin{proof}
If $\lambda=0$, then we have $Z_G^{\sigma_\Lambda}(\pmb\zeta)=\gamma^{|E|}\ne0$ for any graph $G=(V,E)$ and any feasible configuration $\sigma_\Lambda$. Hence we assume that $\lambda\ne0$ in the rest of the proof.

Let $t=t(G, \sigma_\Lambda)$ be the number of free vertices of a graph $G=(V,E)$ with a configuration $\sigma_\Lambda$, i.e., $t=|V|-|\Lambda|$.
We prove this lemma by induction on $t$.
For the base case $t(G, \sigma_\Lambda)=0$, we know all vertices of $G$ are pinned by $\sigma_\Lambda$.  Since $\sigma_\Lambda$ is feasible, we have $Z_G^{\sigma_\Lambda}(\pmb\zeta)\ne0$.

Now suppose that for some nonnegative integer $n$, 
it holds that $Z^{\tau_{\Lambda'}}_{G'}(\pmb\zeta)\ne0$ 
for any graph $G'$ of degree at most $\Delta$ and any feasible configuration $\tau_{\Lambda'}$ 
where $t(G', \tau_{\Lambda'})\leq n$.
We consider an arbitrary graph $G$ of degree at most $\Delta$ and 
a feasible configuration $\sigma_\Lambda$ where $t(G, \sigma_\Lambda)=n+1$.
We show that $Z_G^{\sigma_\Lambda}(\pmb\zeta)\ne0$.
We pick a free vertex $v$ in $G$. By the induction hypothesis, we have $Z_{G,v}^{\sigma_\Lambda,-}(\pmb\zeta)\neq 0$ since we further pinned one vertex of $G$ to spin $-$. 
Thus,  the ratio $R^{\sigma_\Lambda}_{G,v}=\frac{Z^{\sigma_\Lambda,+}_{G,v}(\pmb\zeta)}{Z^{\sigma_\Lambda,-}_{G,v}(\pmb\zeta)}$ is well-defined and 
it can be computed by recursion via SAW tree.
Let $T$ be the corresponding SAW tree where $v$ is the root. 
There exists an $\mathbf{s}$ with $\|\mathbf{s}\|_1\leq \Delta$ such that $$R^{\sigma_\Lambda}_{G,v}=R^{\sigma_\Lambda}_{T,v}=F_{\pmb \zeta, \mathbf{s}}\left( R^{\sigma_\Lambda}_{T_1,v_1}, \ldots, R^{\sigma_\Lambda}_{T_k,v_k}\right)$$
where $v_1, \ldots, v_k$ are free vertices of the children of $v$ and $T_1, \ldots, T_k$ are the corresponding subtrees rooted at them.
Note that in $T$, only $v$ may have $\Delta$ many children, while other nodes have at most $\Delta-1$ many children.
Therefore, for any node $v'\neq v$ and the subtree rooted at $v'$, 
the ratio $R^{\sigma_\Lambda}_{T',v'}$ can be computed by some recursion function $F_{\pmb \zeta, \mathbf{s}_{v'}}$ with $\|\mathbf{s}_{v'}\|_1\leq \Delta-1$. 
Clearly, for any free vertex $v_\ell$ at the leaf of $T$, we have $R^{\sigma_\Lambda}_{T_\ell,v_\ell}=\lambda \in Q$, where $T_\ell$ is a tree of only one vertex $v_\ell$.
By complex contraction, we have $F_{\pmb \zeta, \mathbf{s}}(Q^k)\subseteq Q$ for every $\mathbf{s}$ with $\|\mathbf{s}\|_1\leq \Delta-1$. 
By iteration on each subtree $T_i$, 
we have  $R^{\sigma_\Lambda}_{T_i,v_i}\in Q$ for every $i\in [k]$.
Also by complex contraction, we have $-1\notin F_{\pmb \zeta, \mathbf{s}}(Q^k)$ for every $\mathbf{s}$ with $\|\mathbf{s}\|_1\leq \Delta$.
Thus, we have $R^{\sigma_\Lambda}_{G,v}\neq-1$. This implies that $Z_{G}^{\sigma_\Lambda}(\pmb\zeta)\neq 0$.
\end{proof}

\subsection{Proof of Lemma \ref{lem:complex-to-correlation}}
\begin{lemma}
If $\pmb{\zeta}$ satisfies complex contraction for $\Delta$, then the $\Delta$-bounded 2-spin system specified by $\pmb{\zeta}$
exhibits SSM (correlation decay).
\end{lemma}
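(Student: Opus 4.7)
The plan is to run the standard potential-function correlation decay argument, but in the complex regime, leveraging that $P$ is convex and that all relevant maps are analytic on compact sets (hence Lipschitz). I first verify well-definedness, then carry out the contraction on a self-avoiding walk tree, and finally translate the bound on ratios into a bound on marginals.

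Let $\sigma_{\Lambda_1}$ and $\tau_{\Lambda_2}$ be feasible configurations on a graph $G$ of degree at most $\Delta$, let $v \notin \Lambda_1\cup\Lambda_2$, and let $S\subseteq \Lambda_1\cup\Lambda_2$ be the set on which the two configurations differ. By \lemref{complex-to-zerofree}, $Z_G^{\sigma_{\Lambda_1}}(\pmb\zeta)$ and $Z_G^{\tau_{\Lambda_2}}(\pmb\zeta)$ are nonzero, so the ratios $R^{\sigma_{\Lambda_1}}_{G,v}$ and $R^{\tau_{\Lambda_2}}_{G,v}$ are well-defined, and so are the corresponding marginals. Weitz's SAW construction (Theorem~\ref{prop:SAW}) transfers the problem to a tree $T$ of maximum degree $\Delta$, and the argument used in the proof of \lemref{complex-to-zerofree} shows that for every internal non-root vertex $u$ of $T$, $R^{\sigma}_{T,u}$ and $R^{\tau}_{T,u}$ both lie in $Q$ (since $u$ has at most $\Delta-1$ free children, invoking $F_{\pmb\zeta,\mathbf s}(Q^k)\subseteq Q$ for $\|\mathbf s\|_1\leq\Delta-1$), while at the root $v$ the ratios lie in $F_{\pmb\zeta,\mathbf s}(Q^k)$ for some $\|\mathbf s\|_1\leq \Delta$, which avoids $-1$.

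Next I move to the potential coordinate. For any internal non-root vertex $u$ with recursion index $\mathbf s_u$, $\|\mathbf s_u\|_1 \leq \Delta-1$, define $\widehat R_u^{\sigma}=\varphi(R_u^\sigma)\in P$ and similarly $\widehat R_u^{\tau}\in P$; then $\widehat R_u^{\sigma}=F^{\varphi}_{\pmb\zeta,\mathbf s_u}(\widehat R^{\sigma}_{u_1},\ldots,\widehat R^{\sigma}_{u_k})$. Since $P$ is convex and $F^\varphi_{\pmb\zeta,\mathbf s_u}$ is analytic on $P^k$ with $\|\nabla F^\varphi_{\pmb\zeta,\mathbf s_u}\|_1\leq 1-\eta$, writing the difference as a line integral along the segment between the two argument vectors in $P^k$ yields
\[
\bigl|\widehat R_u^{\sigma}-\widehat R_u^{\tau}\bigr|\;\leq\;(1-\eta)\,\max_i\bigl|\widehat R_{u_i}^{\sigma}-\widehat R_{u_i}^{\tau}\bigr|.
\]
At a leaf or pinned vertex where the two configurations agree, the difference is $0$; at a vertex in $S^{\rm SAW}$ the difference is trivially bounded by $\mathrm{diam}(P)$. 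Iterating the above contraction up the tree gives $|\widehat R_{u}^\sigma-\widehat R_{u}^\tau|\leq (1-\eta)^{h(u)}\mathrm{diam}(P)$, where $h(u)$ is the distance in $T$ from $u$ to the nearest vertex where the configurations differ.

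It remains to translate this back to $R_v$, and then to $p_v$. Because $\varphi^{-1}$ is analytic on the closed bounded set $P$, it is Lipschitz there with some constant $L_\varphi$; similarly, at the root $v$ the map $F_{\pmb\zeta,\mathbf s_v}$ is analytic on the compact set $Q^k$, so it is Lipschitz with some constant $L_F$. Combining,
\[
\bigl|R_v^{\sigma}-R_v^{\tau}\bigr|\;\leq\;L_F\,\max_i \bigl|R_{v_i}^{\sigma}-R_{v_i}^{\tau}\bigr|\;\leq\;L_F\, L_\varphi\,(1-\eta)^{\mathrm{dist}_T(v,S^{\rm SAW})-1}\mathrm{diam}(P).
\]
Finally, since $-1$ lies outside the compact set $\bigcup_{\|\mathbf s\|_1\leq\Delta}F_{\pmb\zeta,\mathbf s}(Q^k)$, there is a uniform constant $c>0$ with $|1+R_v|\geq c$ for both configurations; applying this to $p_v=R_v/(1+R_v)$ gives $|p_v^{\sigma}-p_v^{\tau}|\leq c^{-2}|R_v^{\sigma}-R_v^{\tau}|$. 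By Theorem~\ref{prop:SAW}, $\mathrm{dist}_T(v,S^{\rm SAW})=\mathrm{dist}_G(v,S)$, so this is the desired $\exp(-\Omega(\mathrm{dist}_G(v,S)))$ bound.

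The only step that is not routine is the subtle bookkeeping between the root and the non-root recursions: the root may have $\Delta$ free children and so the root's recursion function need not map into $Q$, but we do not need it to, since for non-root vertices the uniform contraction factor $1-\eta$ in the potential coordinate already produces the decay, and at the root one extra Lipschitz estimate for $F_{\pmb\zeta,\mathbf s_v}$ on $Q^k$ suffices. The other point that must be handled carefully is that $\Lambda_1\neq \Lambda_2$ in general, but the argument only uses the common SAW tree $T$ and the location of the symmetric difference $S$ on it, so no additional work is needed there.
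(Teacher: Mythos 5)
Your proposal is correct and follows essentially the same route as the paper's proof: reduce to the SAW tree, run the induction in the potential coordinate using convexity of $P$ and the gradient bound $\left\|\nabla F^{\varphi}_{\pmb\zeta,\mathbf{s}}\right\|_1\le 1-\eta$, treat the root (which may have $\Delta$ children) with one extra Lipschitz estimate, and convert ratio differences to marginal differences via a uniform lower bound on $|1+R|$ coming from compactness and $-1\notin F_{\pmb\zeta,\mathbf{s}}(Q^k)$. The differences are only cosmetic bookkeeping: the paper packages the back-translation as the single map $G_{\pmb\zeta,\mathbf{s}}=F_{\pmb\zeta,\mathbf{s}}\circ\pmb\varphi^{-1}$ on the convex set $P^k$ (sidestepping your appeal to Lipschitzness of $F_{\pmb\zeta,\mathbf{s}}$ on the possibly non-convex $Q^k$, which still holds but needs the standard neighborhood argument), it explicitly handles $\lambda=0$ and the $\mathrm{dist}(v,S)=1$ case at the root, and your exponent $h(u)$ should be $h(u)-1$ since the contraction step cannot be applied at a vertex whose children lie in $S$ --- none of which affects the $\exp(-\Omega(\mathrm{dist}_G(v,S)))$ conclusion.
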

\begin{proof}
By Lemma~\ref{complex-to-zerofree}, we know condition 1 of SSM (Definition~\ref{def:correlation-decay}) is satisfied. We only need to show that condition 2 is satisfied. 
If $\lambda=0$, then we have $p_v^{\sigma_{\Lambda}}\equiv 0$ for any feasible configuration $\sigma_{\Lambda}$ and SSM holds trivially. 
Thus, we assume that $\lambda\ne0$.
By Weitz's SAW tree construction, we only need to show that the 2-spin systems on trees of degree at most $\Delta$ exhibits SSM.

Let $\varphi:P\to Q$ be a good potential function for $\pmb\zeta$. Let $T=(V, E)$ be a tree of degree at most $\Delta$ and $v$ be the root of $T$. 
Consider two feasible configurations $\sigma_{\Lambda_1}$ and $\tau_{\Lambda_2}$ on $\Lambda_1\subseteq V$ and $\Lambda_2\subseteq V$ respectively where $v \notin \Lambda_1 \cup \Lambda_2$. 
We want to show that  $\big|p_v^{\sigma_{\Lambda_1}}-p_v^{\tau_{\Lambda_2}}\big|\le\exp\left(-\Omega\left(\mathrm{dist}_T(v,S)\right)\right)$, where $S\subseteq\Lambda_1 \cup \Lambda_2$ is the subset on which $\sigma_{\Lambda_1}$ and $\tau_{\Lambda_2}$ differ.
Note that all vertices in $T$ except the root $v$ have at most $\Delta-1$ many children. 
We first consider the case that $v$ has at most $\Delta-1$ many children. 
Let $t=\mathrm{dist}_T(v,S)$.
We will show  $$\left|\varphi\big(R^{\sigma_{\Lambda_1}}_{T,v}\big)-\varphi\big(R^{\tau_{\Lambda_2}}_{T,v}\big)\right|\le C(1-\eta)^{t-1}$$ for some constant $C, \eta>0$ by induction on $t$.

 For the base case $t=1$, 
 since $\pmb\zeta$ satisfies complex contraction, we have $R^{\sigma_{\Lambda_1}}_{T,v}, R^{\tau_{\Lambda_2}}_{T,v}\in Q$ and hence $\varphi\big(R_{T,v}^{\sigma_{\Lambda_1}}\big),\varphi\big(R_{T,v}^{\tau_{\Lambda_2}}\big)\in P$.
 Let $C=\sup_{{z}_1, {z}_2 \in P}|z_1-z_2|.$
 Since $P$ is a closed and bounded region, we know $C<+\infty$. 
 Clearly, we have $\left|\varphi\big(R^{\sigma_{\Lambda_1}}_{T,v}\big)-\varphi\big(R^{\tau_{\Lambda_2}}_{T,v}\big)\right|\le C$.
 
Since $\pmb \zeta$ satisfies complex contraction, let $\eta$ be the constant such that $\left\|\nabla F^\varphi_{\pmb \zeta, \bf s}(\mathbf{x})\right\|_1\le1-\eta$  for every $\mathbf{s}$ with $\|\mathbf{s}\|_1\leq \Delta -1$ and  all $\mathbf{x} \in P^{k}$.
 Suppose that $\left|\varphi\big(R^{\sigma_{\Lambda_1}}_{T,v}\big)-\varphi\big(R^{\tau_{\Lambda_2}}_{T,v}\big)\right|\le C(1-\eta)^{t-1}$ for $t\leq n$ where $n$ is a positive integer. 
 We consider $t=n+1$. Since $t>1$, the configurations of all children of $v$ are the same in both $\sigma_{\Lambda_1}$ and $\tau_{\Lambda_2}$. Suppose that $v$ has $d$ children, and in both configurations $\sigma_{\Lambda_1}$ and $\tau_{\Lambda_2}$, $s_1$ of them are pinned to $+$, $s_2$ are pinned to  $-$, and $k$ are free. We denote these $k$ free vertices by $v_i$ $(i\in[k])$.
Let $T_i$ be the corresponding subtree rooted at $v_i$, and 
$\sigma^i_{\Lambda_1}$ and $\tau^i_{\Lambda_2}$ denote the configurations $\sigma_{\Lambda_1}$ and $\tau_{\Lambda_2}$ restricted on subtree $T_i$ respectively. 
Let $x_i=\varphi\Big(R_{T_i,v_i}^{\sigma_{\Lambda_1}^i}\Big)$ and $y_i=\varphi\Big(R_{T_i,v_i}^{\tau_{\Lambda_2}^i}\Big)$. 
  Since $\pmb \zeta$ satisfies complex contraction, same as we showed in the proof of Lemma \ref{complex-to-zerofree}, we have $R_{T_i,v_i}^{\sigma_{\Lambda_1}^i}, R_{T_i,v_i}^{\tau_{\Lambda_2}^i}\in Q$ and hence $x_i, y_i\in P$.
  Let $S_i=S\cap T_i$.
  Clearly, we have ${\rm dist}_{T_i}(v_i, S_i)\ge{\rm dist}_{T}(v, S)-1=t-1=n$.
 By induction hypothesis, we have $|x_i-y_i|\leq C(1-\eta)^{{\rm dist}_{T_i}(v_i, S_i)-1}\leq C(1-\eta)^{n-1}.$ 
 Then, we have
\begin{align*}
\left|\varphi\big(R_{T,v}^{\sigma_{\Lambda_1}}\big)-\varphi\big(R_{T,v}^{\tau_{\Lambda_2}}\big)\right|&=\left|\varphi\left(F_{\pmb\zeta,\mathbf{s}}\left(R_{T_1,v_1}^{\sigma^1_{\Lambda_1}},\ldots,R_{T_k,v_k}^{\sigma^k_{\Lambda_1}}\right)\right)-\varphi\left(F_{\pmb\zeta,\mathbf{s}}\left(R_{T_1,v_1}^{\tau^1_{\Lambda_2}},\ldots,R_{T_k,v_k}^{\tau^k_{\Lambda_2}}\right)\right)\right|\\
&=\left|F^\varphi_{\pmb\zeta,\mathbf{s}}(x_1,\ldots,x_k)-F^\varphi_{\pmb\zeta,\mathbf{s}}(y_1,\ldots,y_k)\right|\\&\le\sup_{\mathbf{z}\in P^k}{\left\|\nabla F_{\pmb\zeta,\mathbf{s}}^\varphi(\mathbf{z})\right\|_1}\cdot\left\|\mathbf{x}-\mathbf{y}\right\|_\infty\\&\le(1-\eta)\cdot C(1-\eta)^{n-1}\\&=C(1-\eta)^{t-1},
\end{align*}
where the first inequality is due to the fact that $P$ is convex.

We are going to bound $\left|R^{\sigma_{\Lambda_1}}_{T,v}-R^{\tau_{\Lambda_2}}_{T,v}\right|$.
Define functions $G_{\pmb \zeta, \mathbf{s}}(\mathbf{x}):=F_{\pmb \zeta, \mathbf{s}}(\pmb \varphi^{-1}(\mathbf{x})).$
Since $F_{\pmb \zeta, \mathbf{s}}$ is analytic on $Q^k$, we have $G_{\pmb \zeta, \mathbf{s}}$ is well-defined and analytic on $P^k$ for all $\|\mathbf{s}\|_1\le\Delta$.
For every $\mathbf{s}$ with $\|\mathbf{s}\|_1\leq \Delta$, 
let $Q_\mathbf{s}=G_{\pmb \zeta, \mathbf{s}}(P^k)=F_{\pmb \zeta, \mathbf{s}}(Q^k)$ and $M_ \mathbf{s}=\sup_{\mathbf{x}\in P^k}\left\|\nabla G_{\pmb \zeta, \mathbf{s}}(\mathbf{x})\right\|_1.$
Since $P^k$ is compact (closed and bounded), we have $Q_\mathbf{s}$ is compact and $M_\mathbf{s}<+\infty$.
Finally, let $$Q'=\bigcup_{\|\mathbf{s}\|_1\leq \Delta}Q_\mathbf{s}, ~~~ C'=\sup_{z_1, z_2\in Q'}|z_1-z_2|, ~~~ \text{ and } ~~~ M=\max_{\|\mathbf{s}\|_1\leq \Delta}M_\mathbf{s}.$$
Since there is only a finite number of $\mathbf{s}$ such that $\|\mathbf{s}\|_1\leq \Delta$, we have $M<+\infty$ and $Q'$ is compact, and hence $C'<+\infty$. Let $N=\max\{C',CM\}$.
We show that $\left|R^{\sigma_{\Lambda_1}}_{T,v}-R^{\tau_{\Lambda_2}}_{T,v}\right|\leq N(1-\eta)^{t-2}$.

If $t=1$, then there exist $\mathbf{s}_1$ and $\mathbf{s}_2$ where $\|\mathbf{s}_1\|_1, \|\mathbf{s}_2\|_1 \leq \Delta$ such that $R_{T,v}^{\sigma_{\Lambda_1}}\in F_{\pmb \zeta, \mathbf{s}_1}(Q^{k_1})$ and $R_{T,v}^{\tau_{\Lambda_2}}\in F_{\pmb \zeta, \mathbf{s}_2}(Q^{k_2})$.
Then, we have $R_{T,v}^{\sigma_{\Lambda_1}}, R_{T,v}^{\tau_{\Lambda_2}}\in Q'$, and hence 
\begin{align*}
\left|R_{T,v}^{\sigma_{\Lambda_1}}-R_{T,v}^{\tau_{\Lambda_2}}\right|\le C'\le \frac{N}{1-\eta}.
\end{align*}
Otherwise $t>1$. The configurations of all children of $v$ are the same in both $\sigma_{\Lambda_1}$ and $\tau_{\Lambda_2}$. 
Again, let $x_i=\varphi\Big(R_{T_i,v_i}^{\sigma_{\Lambda_1}^i}\Big)$ and $y_i=\varphi\Big(R_{T_i,v_i}^{\tau_{\Lambda_2}^i}\Big)$, where $v_i$, $T_i$, $\sigma_{\Lambda_1}^i$, $\tau_{\Lambda_2}^i$ and $S_i$ $(i\in[k])$ are all defined the same as in the above induction proof. 
Since in the subtree $T_i$, the root $v_i$ has at most $\Delta-1$ many children, 
we have $$|x_i-y_i|\leq C(1-\eta)^{{\rm dist}_{T_i}(v_i, S_i)-1}\leq C(1-\eta)^{t-2}.$$ 
Then, similarly as we did in the above induction proof, we have 
\begin{align*}
\left|R_{T,v}^{\sigma_{\Lambda_1}}-R_{T,v}^{\tau_{\Lambda_2}}\right|&=\left|F_{\pmb\zeta,\mathbf{s}}\left(R_{T_1,v_1}^{\sigma_{\Lambda_1}^1},\ldots,R_{T_k,v_k}^{\sigma_{\Lambda_1}^k}\right)-F_{\pmb\zeta,\mathbf{s}}\left(R_{T_1,v_1}^{\tau_{\Lambda_2}^1},\ldots,R_{T_k,v_k}^{\tau_{\Lambda_2}^k}\right)\right|\\&=\left|G_{\pmb\zeta,\mathbf{s}}(x_1,\ldots,x_k)-G_{\pmb\zeta,\mathbf{s}}(y_1,\ldots,y_k)\right|\\&\le\sup_{\mathbf{x}\in P^k}\left\|\nabla G_{\pmb \zeta, \mathbf{s}}(\mathbf{x})\right\|_1\cdot\|\mathbf{x}-\mathbf{y}\|_\infty\\&\le M\cdot C(1-\eta)^{t-2}\\&\le N(1-\eta)^{t-2}.
\end{align*}

Finally, we bound $\left|p_v^{\sigma_{\Lambda_1}}-p_v^{\tau_{\Lambda_2}}\right|$ from $\left|R_{T,v}^{\sigma_{\Lambda_1}}-R_{T,v}^{\tau_{\Lambda_2}}\right|$. Let $K=\inf_{z\in Q'}|1+z|$. By the complex contraction property,  $-1\notin Q_{\mathbf{s}}$ for every $\|\mathbf{s}\|_1\leq\Delta$ and thus $-1\notin Q'$.
Also, since $Q'$ is compact, we have $K>0$. Hence,
\begin{align*}
\left|p_v^{\sigma_{\Lambda_1}}-p_v^{\tau_{\Lambda_2}}\right|&=\left|\frac{R_{T,v}^{\sigma_{\Lambda_1}}}{1+R_{T,v}^{\sigma_{\Lambda_1}}}-\frac{R_{T,v}^{\tau_{\Lambda_2}}}{1+R_{T,v}^{\tau_{\Lambda_2}}}\right|=\frac{\left|R_{T,v}^{\sigma_{\Lambda_1}}-R_{T,v}^{\tau_{\Lambda_2}}\right|}{\left|1+R_{T,v}^{\sigma_{\Lambda_1}}\right|\cdot\left|1+R_{T,v}^{\tau_{\Lambda_2}}\right|}\\&\le\frac{1}{K^2}\cdot\left|R_{T,v}^{\sigma_{\Lambda_1}}-R_{T,v}^{\tau_{\Lambda_2}}\right|\le\frac{N(1-\eta)^{t-2}}{K^2}\\&\le\exp\left(-\Omega(\mathrm{dist}_T(v,S)\right).
\end{align*}
\end{proof}
\bibliography{main}{}
\bibliographystyle{alpha}
\end{document}